\documentclass{article} 
\usepackage{iclr2027_conference,times}

\usepackage{amsmath,amsfonts,bm}

\def\eqref#1{equation~\ref{#1}}

\def\1{\bm{1}}

\DeclareMathAlphabet{\mathsfit}{\encodingdefault}{\sfdefault}{m}{sl}
\SetMathAlphabet{\mathsfit}{bold}{\encodingdefault}{\sfdefault}{bx}{n}

\usepackage{amsmath,amssymb,amsthm,mathtools}
\usepackage{algorithm}
\usepackage{algpseudocode}
\usepackage{array}
\usepackage{booktabs}
\usepackage{tabularx}
\usepackage{wrapfig}
\usepackage{enumitem}
\usepackage{graphicx}
\usepackage{float}
\usepackage{url}
\usepackage{tikz}
\usetikzlibrary{arrows.meta,positioning,fit,backgrounds}
\usepackage{hyperref}
\definecolor{cpunrow}{RGB}{231,241,248}
\definecolor{cpunfallback}{RGB}{211,229,242}

\newtheorem{theorem}{Theorem}
\newtheorem{lemma}{Lemma}
\newtheorem{proposition}{Proposition}
\newtheorem{corollary}{Corollary}
\newtheorem{definition}{Definition}
\newtheorem{assumption}{Assumption}

\newtheorem{remark}{Remark}

\newcommand{\Cn}{\mathrm{Cn}}
\newcommand{\Dd}{D_{\mathrm{d}}}
\newcommand{\Ddc}{D_{\mathrm{d}}^{\mathrm{cert}}}
\newcommand{\drec}{d_{\mathrm{rec}}}
\newcommand{\fc}{f_{\mathrm{c}}}
\newcommand{\Sem}{\mathrm{sem}}
\newcommand{\Ess}{\mathrm{Ess}}
\newcommand{\Atom}{\mathrm{A}}
\newcommand{\SO}{\mathbb{S}_{O}}
\newcommand{\PS}{\mathsf{PS}}
\newcommand{\TPS}{T_{\mathsf{PS}}}
\newcommand{\Dexp}{D_{\mathrm{exp}}}
\newcommand{\Scert}{S_{\mathrm{cert}}}
\newcommand{\Pcert}{\PS_{\mathrm{cert}}}
\newcommand{\epsfr}{\varepsilon_{fr}}
\newcommand{\bans}{\beta_{\mathrm{ans}}}
\newcommand{\slipc}{\sigma_{c}}
\newcommand{\err}{\mathrm{err}}
\newcommand{\PAu}{P_A^{u}}
\newcommand{\tr}{\mathrm{tr}}

\newcommand{\Core}{\mathcal{C}}
\newcommand{\Bridge}{\mathcal{B}}
\newcommand{\Carr}{\mathcal{N}}

\newcommand{\Query}{\textsc{Query}}
\newcommand{\Admit}{\textsc{Admit}}
\newcommand{\Evict}{\textsc{Evict}}
\newcommand{\Erase}{\textsc{Erase}}
\newcommand{\Tick}{\textsc{Tick}}
\newcommand{\key}{\mathrm{key}}

\title{CPUNeSy: Controlling Model Writes for Reliable Neuro-Symbolic Reasoning}

\author{
\textbf{Zeyan Li\textsuperscript{1} \quad
Siyuan Qiu\textsuperscript{1} \quad
Shuai Zhao\textsuperscript{1} \quad
Jianfeng Xu\textsuperscript{1}
}\\
\textsuperscript{1}Shanghai Jiao Tong University
}

\iclrfinalcopy

\begin{document}
\flushbottom

\maketitle

\begin{abstract}

LLMs excel at recalling statistical patterns but degrade sharply when answers must be derived, especially on multi-hop chains. Delegating derivation to deterministic symbolic executors shifts reliability to whether model-generated premises are source-supported. We introduce CPUNeSy, a serving architecture that controls model writes to symbolic state via a task-defined predicate interface and certificate gate, abstaining when grounding passes disagree. Component analysis isolates deterministic execution, restricted grounding, agreement, and source rechecking. Experiments show deterministic execution drives most accuracy recovery on derivation-heavy tasks; controlled writes mainly improve selective reliability by withholding unsupported or inconsistent answers, at a coverage cost. On multi-hop tests in law and formal math, deterministic execution recovers most of the gap over chain-of-thought and retrieval baselines, with full-pool gains up to 35.0 points. Certification is selective-serving control, not accuracy mechanism: with grounding traces fixed on ContractNLI, source rechecking removes a quarter of DeepSeek's wrong answers surviving two-vote agreement, at measurable coverage cost. When abstention is costly, routing withheld cases to an uncertified same-model fallback raises full-pool accuracy on MedCalc-Bench Verified by 13.9 and 4.9 points for Seed and DeepSeek; these gains are not from the certified channel. On LeanDojo Benchmark 4, kernel-restricted pools match BM25 recall@15 (89.3\%). Gains depend on the grounder's error regime: bias-dominated grounders benefit less, consistent with our voting bound. Certificates guarantee derivational validity relative to admitted premises; semantic faithfulness to natural-language sources remains conditional on the source checker, and prospective validation is future work.
\end{abstract}

\section{Introduction}
\label{sec:intro}
Large language models recall facts remarkably well \citep{kaplan2020scaling,srivastava2022bigbench}, yet on tasks whose correctness depends on \emph{composing} several rules or facts from a fixed source of truth, token generation often imitates a derivation rather than executing one \citep{dziri2024faith,kordjamshidi2026reasoners}. On real mathematical dependency graphs and multi-hop legal reasoning, chain-of-thought and retrieval baselines plateau well below what the underlying formal structure allows, and increasing the reasoning budget or the number of sampled traces yields diminishing returns \citep{weietal2022cot,lewis2020rag,wang2023selfconsistency}. The central difficulty is ensuring that an answer follows from the stated premises: a chain-of-thought trace can appear coherent even when it skips a necessary inference or introduces an unsupported intermediate conclusion. Legal, medical and formal-mathematics deployments make this concrete, since a downstream decision cannot rest on an answer whose intermediate steps are neither auditable nor tied to source.

A natural response is to delegate derivation to a symbolic executor once the model has proposed the relevant facts and rules \citep{logiclm,gao2023pal}. Execution is deterministic and cheap to check, and it turns a multi-hop question into a query against a well-defined logical state. But execution alone is not a correctness certificate: a valid derivation from a premise that misrepresents the source produces a wrong answer just as convincingly as a hallucinated chain (Figure~\ref{fig:overview}). A model can arrive at the same wrong answer either by fabricating an intermediate conclusion in a chain of thought or by writing it into the solver's premise set and letting the executor rediscover it. To rule this out, the system needs to decide which model-proposed premises may enter its state at all, and to keep evidence for every accepted premise so that a served answer can be traced end-to-end.

\begin{figure}[t]
\centering
\includegraphics[width=\columnwidth]{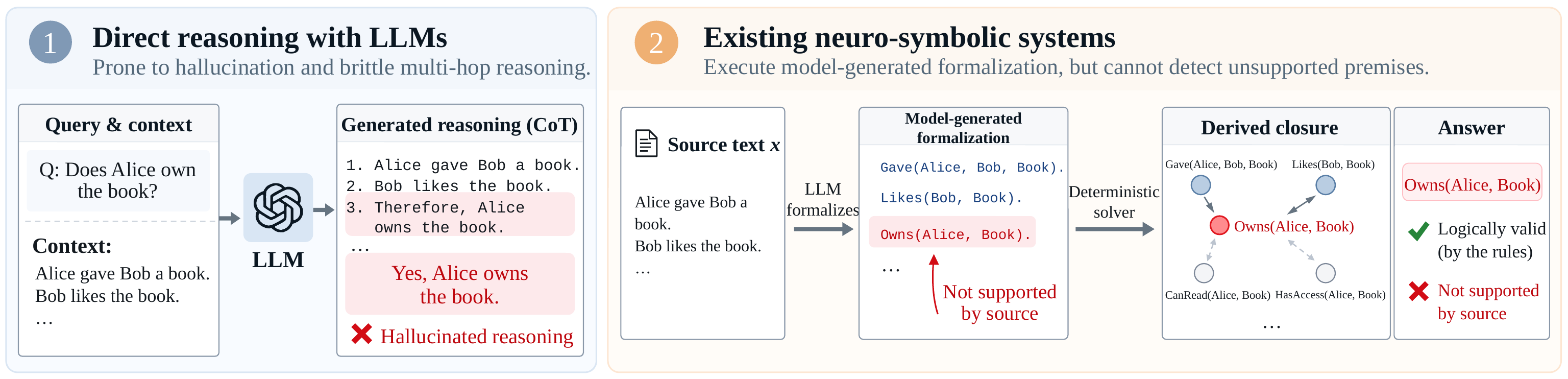}
\caption{Two routes to the same wrong answer: hallucinated reasoning
(\emph{left}) and valid execution of unsupported premises
(\emph{right}). These failure modes motivate controlling what language models may write into symbolic state.}
\label{fig:overview}
\end{figure}

Neuro-symbolic architectures combine learned and symbolic components \citep{garcez2022,nesy2024}, and recent LLM-plus-solver pipelines \citep{olausson2023linc,ye2023satlm,jiang2024leanreasoner,sun2024determlr,xu2024symbcot,li2024lina,li2025hblr,mental} inherit the correctness of their solver but expose it to whatever the parser writes: admission is implicit, and an ill-supported premise is checked only by whether the solver can still run. CPUNeSy (Certified Predicate-Unit Neuro-Symbolic Networks) treats admission as a first-class step (Figure~\ref{fig:method}). Under the architecture's contract, the model proposes predicate units, each a fact or a Horn rule; a certificate gate checks their source support and rejects proposals whose target predicate lies outside the allowed source vocabulary; a deterministic executor derives the answer; and every certified answer carries a bounded-depth derivation replayable against the current state. Restricting the source vocabulary prevents the model from inserting an intermediate conclusion, or the target answer, as a source fact, so the executor is the only route by which a rule-head predicate can enter state. 

Studying such a system requires disentangling several mechanisms that change at once when execution is added on top of a language model: derivation shifts from sampling to solving, the model is queried in a more structured way, and repeated calls are aggregated by voting. Our experiments separate these. We compare a language model alone, the same model paired with a deterministic solver, and the restricted two-vote policy, separating the contribution of execution from the additional serving policy. We then hold both grounding traces fixed on the same benchmark and compare plain two-vote agreement against agreement plus source rechecking, so the difference quantifies what source checking adds over voting itself. Finally, an offline admission replay tests the vocabulary barrier with injected target predicates; the historical restricted runs enforced that vocabulary in the prompt. Five suites span legal reasoning, formal mathematics, and medical calculation; controlled multi-hop tests measure execution under adversarial mid-link-break negatives, public ContractNLI and MedCalc comparisons expose how admission trades errors against abstentions, and a matched LeanDojo comparison tests kernel-based premise pools against BM25 and a learned retriever. We report answered accuracy, coverage, and full-pool accuracy that counts every abstention as an error. Two findings organize the analysis. First, deterministic execution accounts for most of the accuracy recovery on derivation-heavy tasks, so a one-vote solver already closes most of the gap. Second, controlled writes primarily improve selective reliability, withholding unsupported or inconsistent answers at the cost of coverage; their value is the error--abstention tradeoff they impose on top of execution, not a further accuracy gain. The resulting certificates are for derivational validity relative to admitted premises, and semantic faithfulness to natural-language sources remains conditional on the source checker. Our contributions are as follows:
\begin{itemize}[leftmargin=1.2em,itemsep=0pt,topsep=1pt]
  \item A checked interface that separates model proposals from symbolic state updates, with formal guarantees conditional on source-checker and verifier soundness.
  \item A component analysis separating deterministic execution from restricted grounding, agreement, and source rechecking, including a fixed-trace gate ablation on public ContractNLI data.
  \item An empirical characterization of the resulting error--coverage--cost tradeoffs, with diagnostics for when controlled writes and optional kernelization are useful.
\end{itemize}
\section{Related Work}
\label{sec:related}

\textbf{LLM-to-symbolic solver pipelines.}
The natural way to combine a language model with a symbolic executor is to let the model translate and the solver execute. Logic-LM \citep{logiclm}, LINC \citep{olausson2023linc}, and SatLM \citep{ye2023satlm} established this pattern for first-order logic, SAT, and constraint solving, and LeanReasoner \citep{jiang2024leanreasoner} and CLOVER \citep{ryu2025clover} carry it into proof assistants and verification-backed domains. Once translation worked often enough to be useful, the bottleneck moved to its failures. One-shot formalization is brittle, so DetermLR \citep{sun2024determlr}, LINA \citep{li2024lina}, and HBLR \citep{li2025hblr} wrap it in validate-and-repair loops, and SymbCoT \citep{xu2024symbcot}, Aristotle \citep{xu2025aristotle}, and CaRing \citep{yang2025caring} keep the reasoning trajectory itself in symbolic form so that errors can be caught mid-chain. Across these refinements, one thing did not change. Once a formal problem is written down, execution is trusted, and whether the written premises are supported by the source is never asked. \textsc{CPUNeSy} asks exactly that question, checking source support before a proposed predicate may become a premise for execution.

\textbf{Neuro-symbolic engines, provenance, and structured generation.}
A second body of work brings symbolic structure into learning rather than serving. DeepProbLog \citep{deepproblog}, Logic Tensor Networks \citep{ltn}, NeurASP \citep{neurasp}, Scallop \citep{li2023scallop}, LNN \citep{lnn}, and SATNet \citep{satnet} each make a different fragment of logic differentiable or probabilistic, so that gradients can flow through constraints during training. Alongside them, constrained decoding made formal output syntactically reliable, first for SQL with PICARD \citep{scholak2021picard} and then as general infrastructure with XGrammar \citep{dong2024xgrammar}, and Datalog provenance \citep{bourgaux2022provenance} tracks which base facts support each derived fact. 


\textbf{Neural theorem proving.}
Theorem proving followed a similar arc. GPT-f \citep{gptf} showed that a language model can predict useful proof steps, retrieval made premise selection practical at library scale through Magnushammer \citep{magnushammer} and LeanDojo's ReProver \citep{leandojo}, and COPRA \citep{thakur2023copra} turned proving into an agentic search over tactic applications. Recent work also verifies intermediate natural-language claims \citep{liu2025safe}. Our premise pools borrow the premise-retrieval setting but change the object being retrieved, since the pool is built from the source kernel and its dependency closure rather than a raw library. Section~\ref{subsec:rq1} compares them with ReProver and BM25 directly.

\textbf{Selective prediction, verification, and benchmarking.}
Knowing when not to answer has its own history. Selective prediction lets a model abstain below a confidence threshold \citep{geifman2017selective}, and post-hoc verification scores or repairs answers after generation \citep{cobbe2021verifier,liu2025safe}. A growing set of benchmarks, LogicBench \citep{parmar2024logicbench}, ProverGen/ProverQA \citep{qi2025provergen}, RuleArena \citep{zhou2025rulearena}, SLR \citep{helff2026slr}, LogicGraph \citep{wu2026logicgraph}, MetaPCR \citep{galitsky2026metapcr}, and Reasoners \citep{kordjamshidi2026reasoners}, exists because the community kept discovering that fluent systems reason incorrectly. \textsc{CPUNeSy} shares the abstention instinct but moves the check earlier, before execution rather than after generation, and the fixed-trace comparison in \S\ref{subsec:rq2} separates what source checking adds from what voting already provides. Appendix~\ref{app:related} gives the full comparison.
\section{The CPUNeSy Method}
\subsection{Architecture}
\label{sec:arch}
A \emph{predicate unit} is either a ground fact or a Horn rule. CPUNeSy uses these units to represent both source premises and derived conclusions. A model proposes units and inference steps, and a checker decides which proposals may enter the state. This addresses two common failure modes. A model that produces an answer and rationale together may skip a required inference. A model that writes a formal problem for a solver may introduce facts or rules unsupported by the source. Figure~\ref{fig:overview} illustrates the two failure modes, while Figure~\ref{fig:method} shows how CPUNeSy separates model proposals from checked state updates.

A CPUNeSy instance has three components. The language model $\Carr$ maps source text to facts and rules, a step called \emph{grounding}, and verbalizes the results. The checking interface $\Bridge$ extracts and verifies inference steps. The symbolic core stores a \emph{kernel}, a set of source units with no redundant members, and a cache of verified conclusions. We refer to the model and checking interface as the carrier and bridge in the formal definitions. The architecture specifies two phases (Figure~\ref{fig:method}). During \textsc{Deploy}, candidate facts and rules are checked for type and source support. A deletion scan in a fixed order removes each unit that can be derived from the remaining units. During \textsc{Serve}, a query is grounded against that kernel, candidate derivations are checked edge by edge, and only a certified result is verbalized. For example, if the checked source contains $p$ and $p\!\rightarrow\!q$, the system may derive and serve $q$ with both units in its certificate. The source kernel remains $\{p,p\!\rightarrow\!q\}$, from which $q$ can be reconstructed.

\begin{figure}[ht]
\centering
\includegraphics[width=\columnwidth]{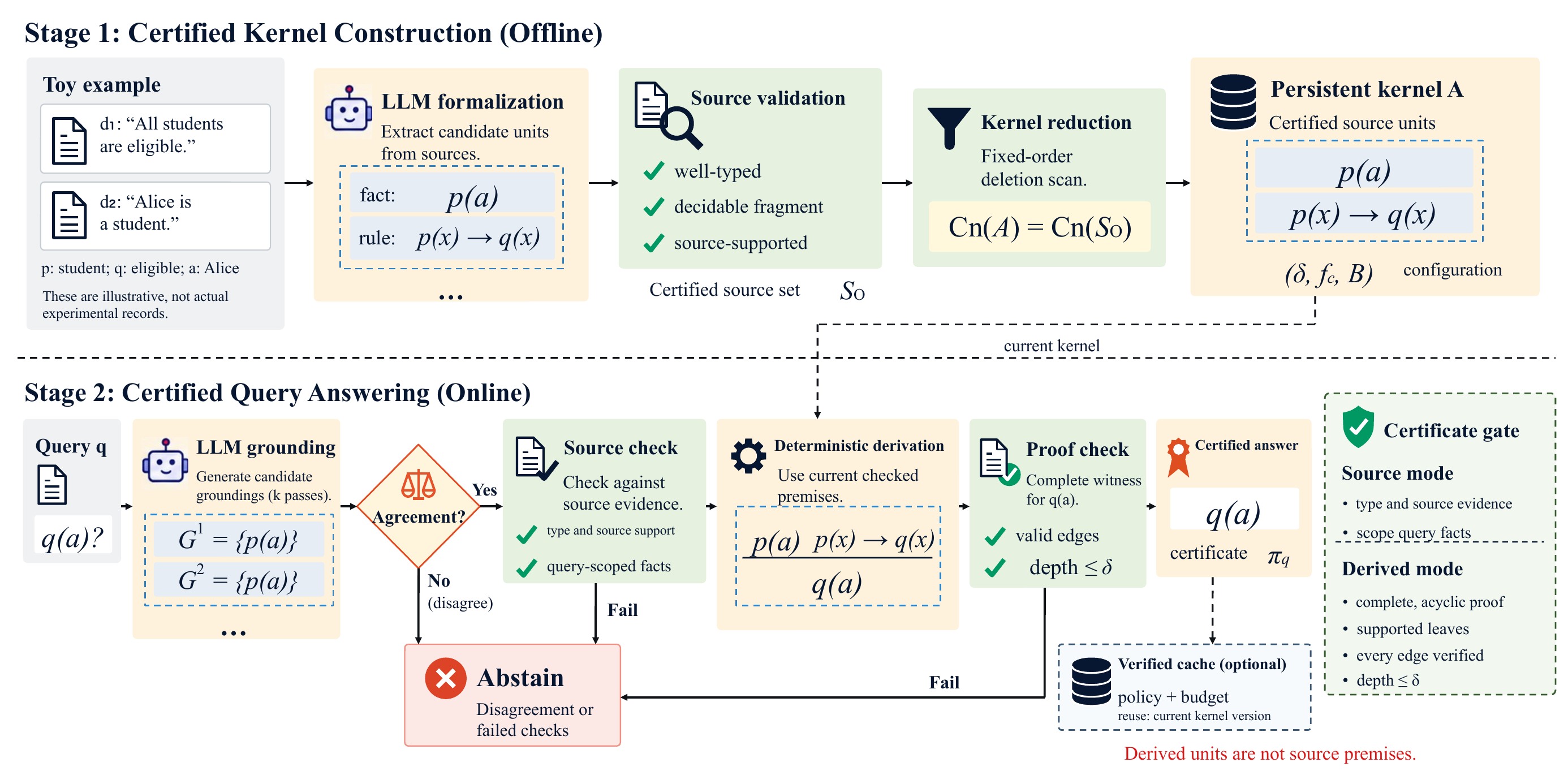}
\caption{The CPUNeSy architecture contract. \emph{Stage 1 (offline):} \textsc{Deploy}
formalizes and validates source units, then reduces the certified source set $S_O$ to an irredundant kernel $A$ with $\Cn(A)=\Cn(S_O)$.
\emph{Stage 2 (online):} \textsc{Serve} checks agreeing query groundings, derives an answer, and verifies its proof before release.}
\label{fig:method}
\end{figure}

The certificate gate uses separate checks for source premises and derived conclusions. In \textsc{Source} mode the gate checks a proposed fact or rule against its source record and adds supported units to a certified source set; a canonical deletion scan then recomputes the kernel and omits units that the remainder entails, which matters because a later source unit can make an earlier one redundant. In \textsc{Derived} mode the gate checks that every edge of a proposed derivation is valid over the current kernel and that its certified depth is within budget; a verified conclusion may then be served or cached while retaining its dependencies on the source premises. Repeated grounding filters proposals on which the model's calls disagree, and Algorithm~\ref{alg:gate} states both modes together. Table~\ref{tab:checker-contract} then specifies the checks used in each domain: the mathematical experiments execute recorded proof-library dependency graphs, while the natural-language tasks rely on schema- and vocabulary-grounded checks whose extraction step is model-mediated. The offline admission replay enforces the leaf-predicate restriction and preserves all recorded answers from that interface (Appendix~\ref{app:gate-audit}).

\begin{table}[t]
\centering\footnotesize
\caption{Source checks used in experimental adapter and extraction errors they can leave unresolved.}
\label{tab:checker-contract}
\begin{tabularx}{\linewidth}{@{}lXX@{}}
\toprule
Adapter & Source checks & Possible extraction errors \\
\midrule
L1-HARD & Rule-derived leaf vocabulary; parsed facts; agreement of solver answers & LLM maps narrative to leaf facts; historical vocabulary restriction was in prompt \\
ContractNLI & Schema-grounded facts; LLM source recheck of contributing facts; two verdicts & Rechecker can share the grounder's semantic bias \\
MedCalc & Calculator schema, types and units; equality of normalized attribute maps; executable calculation & Schema-valid values can still be incorrectly extracted \\
M-HARD / LD-MH & Candidate premise IDs; dependency closure in a frozen proof-library DAG & Mapping to IDs and completeness of the recorded graph \\
\bottomrule
\end{tabularx}
\end{table}
The experiments instantiate different parts of this contract. Historical L1-HARD runs restricted the vocabulary in the prompt; an explicit admission barrier is evaluated separately by offline replay. Complete per-query certificate archival was not instrumented in every adapter. Table~\ref{tab:disposition} and Appendix~\ref{app:gate-audit} distinguish these implementation scopes.

\begin{algorithm}[!t]
\caption{Certificate gate with separate source and derived modes}
\label{alg:gate}
\footnotesize
\begin{algorithmic}[1]
\Require mode $m\in\{\textsc{Source},\textsc{Derived}\}$; proposed unit $u$ with evidence $\xi$; certified source set $S_t$; core $\Core_t=(A_t,\mathcal{R}_t,\mathrm{Cache}_t)$; depth budget $\delta$; source checker $V_{\mathrm{src}}$; edge extractor $E$; fragment verifier $V$
\Ensure \texttt{SOURCE-ADMIT}, \texttt{REDUNDANT}, \texttt{CERTIFY}, or \texttt{REJECT}, with a checkable record
\State $\Gamma_t\gets A_t\cup\mathcal{R}_t$
\If{$m=\textsc{Source}$}
  \If{$u$ is ill-typed \textbf{or} $V_{\mathrm{src}}(u,\xi)=\bot$}
    \State \Return \texttt{REJECT} (invalid or unsupported source unit)
  \EndIf
  \State $\pi_{\mathrm{src}}(u)\gets\langle u,\xi,V_{\mathrm{src}}\text{-log}\rangle$
  \State $S_{t+1}\gets S_t\cup\{u\}$;\quad $(K_{t+1},J_{t+1})\gets\textsc{KernelScan}(S_{t+1})$
  \State partition $K_{t+1}$ into facts $A_{t+1}$ and rules $\mathcal{R}_{t+1}$
  \State log $\pi_{\mathrm{src}}(u)$; recheck or rebuild cache certificates over $K_{t+1}$ within $\delta$; evict failures
  \If{$u\in J_{t+1}$}
    \State \Return \texttt{REDUNDANT} with $\pi_{\mathrm{src}}(u)$ \Comment{certified, omitted from core}
  \EndIf
  \State \Return \texttt{SOURCE-ADMIT} with $\pi_{\mathrm{src}}(u)$
\EndIf
\State $\mathcal{E}(u)\gets E(u,\xi)$ \Comment{$\xi$ is the carrier trace}
\If{$\mathcal{E}(u)$ is not a complete acyclic derivation of $u$ over $\Gamma_t$}
  \State \Return \texttt{REJECT} (unlinked conclusion or unsupported leaf)
\EndIf
\ForAll{$e\in\mathcal{E}(u)$ in topological order}
  \If{$V(e,\Gamma_t)=\bot$}
    \State \Return \texttt{REJECT} (unverifiable derivation edge $e$)
  \EndIf
\EndFor
\State $d\gets\operatorname{depth}(\mathcal{E}(u))$ \Comment{depth of the verified witness}
\If{$d>\delta$}
  \State \Return \texttt{REJECT} (depth budget exceeded)
\EndIf
\State $\kappa\gets\key(u)$;\quad $\pi_{\mathrm{der}}(u)\gets\langle\mathcal{E}(u),d,\kappa,\mathrm{deps}(u),V\text{-log}\rangle$
\State \Return \texttt{CERTIFY} with $\pi_{\mathrm{der}}(u)$ \Comment{serve; cache only by policy}
\end{algorithmic}
\end{algorithm}

At time $t$, the core contains source facts $A_t$, source rules $\mathcal{R}_t$, cached consequences, and a certificate log. It starts empty or from a source-checked kernel with an empty cache. Every addition is checked. Erasure and eviction remove cached results that depend on a removed unit, so the state can both grow and shrink. After a kernel change, each retained cache certificate must be rechecked or rebuilt over the new kernel within depth $\delta$; entries that fail this check are evicted. Certificates record complete derivation DAGs and transitive source dependencies. Cache reuse is allowed only against the current kernel version. The persistent source store holds the kernel. Derived conclusions are recomputed or cached with their proof dependencies. The kernel result in Theorem~\ref{thm:TA} shows when this omission preserves \emph{closure}, the set of conclusions derivable from the source. The store--recompute threshold in Proposition~\ref{prop:scheduling} identifies units eligible for caching; the scheduler then selects a subset that fits the storage budget. Section~\ref{sec:theory} states the interface properties of this design formally (Theorem~\ref{thm:state-soundness}) and the assumptions under which they extend to coverage, storage, and reliability guarantees.
\subsection{Formal Guarantees and Operating Conditions}
\label{sec:theory}
Let $\SO$ denote the certified source set, $A$ its fixed-order irredundant kernel, and $\Cn$ closure in a decidable Horn fragment. The guarantee requires both valid inference and correct source interpretation. Theorem~\ref{thm:state-soundness} states the soundness assumptions on the two checkers. Table~\ref{tab:checker-contract} describes how the experimental source checkers implement these checks.
\begin{theorem}[Certified-state soundness]
\label{thm:state-soundness}
Assume that the source checker is sound for the declared grounding interface and that the fragment verifier is sound. Starting from the initialization in Section~\ref{sec:arch}, after any sequence of its transitions, (i) every persistent premise has a recheckable source-support record; (ii) every consequence $u$ served at time $t$ or cached in that state satisfies $u\in\Cn(A_t\cup\mathcal{R}_t)$ and $\Ddc(u\mid A_t\cup\mathcal{R}_t)\le\delta$; and (iii) erasing a source unit removes every cached consequence whose certificate depends on it.
\end{theorem}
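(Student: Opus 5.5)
The proof is an induction on the length of the transition sequence, carrying a state invariant that bundles all three claims. The invariant $\mathcal{I}_t$ says four things:
\begin{enumerate}[label=(\alph*),leftmargin=1.5em,itemsep=0pt]
\item every unit of the certified source set $S_t$, and hence every unit of the kernel $A_t\cup\mathcal{R}_t\subseteq S_t$, carries a logged record $\pi_{\mathrm{src}}$ that $V_{\mathrm{src}}$ accepted;
\item every cache entry $u$ carries a record $\pi_{\mathrm{der}}(u)$ whose edge DAG was accepted by $V$ over the current $\Gamma_t=A_t\cup\mathcal{R}_t$ and has depth at most $\delta$;
\item $\mathrm{deps}(u)$ is exactly the set of source leaves of that DAG;
\item the cache is keyed to the current kernel version.
\end{enumerate}
The initial state is either empty or a source-checked kernel with an empty cache, so $\mathcal{I}_0$ holds trivially or by the same reasoning as the \textsc{Source} case below.

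For the inductive step I go through the transition types in Algorithm~\ref{alg:gate} and Section~\ref{sec:arch}.
\begin{itemize}[leftmargin=1.5em,itemsep=0pt]
\item \textbf{\textsc{Source} admission.} Line-by-line, a unit enters $S_{t+1}$ only after $V_{\mathrm{src}}(u,\xi)\neq\bot$, and its record is logged, so (a) survives. \textsc{KernelScan} only deletes from $S_{t+1}$, so every kernel unit keeps its record; this yields claim (i). The cache is then rechecked or rebuilt over $K_{t+1}$ within $\delta$, and failures are evicted, which restores (b) and (d) relative to the new $\Gamma_{t+1}$.
\item \textbf{\textsc{Derived} certification.} \texttt{CERTIFY} is returned only when $\mathcal{E}(u)$ is a complete acyclic derivation over $\Gamma_t$, every edge passes $V$, and $d\le\delta$. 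Those are precisely the conditions in (b).
\item \textbf{\textsc{Erase} and \textsc{Evict}.} These only remove cache entries, or remove a source unit and then trigger the same recheck/rebuild as above. So the invariant is preserved.
\end{itemize}

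Claim (ii) then follows from (b) via a semantic lemma. Soundness of $V$ means each accepted edge's conclusion follows in the Horn fragment from its parents. An induction in topological order over the DAG then shows every node lies in $\Cn(\Gamma_t)$, since the leaves are in $\Gamma_t$ by completeness. The witness has depth $d\le\delta$, and $\Ddc(u\mid\Gamma_t)$ is a minimum over certified witnesses, so $\Ddc\le d\le\delta$. Served answers are exactly \texttt{CERTIFY} outputs at time $t$, so they are covered by the same argument.

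Claim (iii) comes from (c) together with the eviction rule. Erasing $s$ removes every cached $u$ with $s\in\mathrm{deps}(u)$. Any survivor of the subsequent recheck has a certificate rebuilt over $\Gamma_{t+1}$, which no longer contains $s$, so that certificate cannot depend on $s$.

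The main obstacle is the kernel change. Deletions made by \textsc{KernelScan}, including deletions of earlier units made redundant by a new one, can invalidate leaves of existing certificates. The proof must therefore lean on the mandated recheck-or-evict step rather than on $\Cn(A)=\Cn(\SO)$ from Theorem~\ref{thm:TA}. Closure preservation alone would give membership in $\Cn(\Gamma_{t+1})$, but it would not preserve the certified depth bound. I would state this dependence explicitly.
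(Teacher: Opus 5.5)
Your proposal is correct and follows the paper's own argument. It inducts over state transitions: source admission is handled because the kernel scan retains only source-checked units and cache entries are mandatorily revalidated within $\delta$; derived certification is handled by topological induction under verifier soundness plus minimality of $\Ddc$; and (iii) is read off from the cascade-eviction rule. Stating this as an explicit invariant, and noting that the revalidation step (not Theorem~\ref{thm:TA}) is what preserves the depth bound, is a useful clarification; you should also list cache-hit serving and the frequency-update transition, both of which your invariants (b) and (d) cover trivially.
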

The induction proof is in Appendix~\ref{app:proof-state}. Each admitted premise retains its evidence and checker version, and each derived unit retains a complete witness over the current kernel. The source-soundness assumption is not established for the experimental natural-language checkers, whose residual risk is measured empirically. The mathematical adapters execute recorded dependency graphs; they do not constitute an evaluation of end-to-end proof search.

\begin{proposition}[No promotion through restricted writes]
\label{prop:nonpromotion}
Fix a rule set $\mathcal R$ and a source vocabulary $\Omega$ disjoint from the predicates in rule heads, with all initial source facts in $\Omega$. Let the admission barrier reject any proposed source fact outside $\Omega$. Then adding an intermediate or target predicate to a proposal cannot admit that predicate as a source premise. Every accepted derivation of such a predicate must contain a rule application. This property holds for arbitrary model proposals and does not assume statistical independence or semantic accuracy of allowed leaf predicates.
\end{proposition}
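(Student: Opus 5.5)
The plan is an invariant argument over the admission transitions, followed by a short structural argument on derivations. Let $H$ be the set of predicates occurring in rule heads of $\mathcal R$; by hypothesis $\Omega\cap H=\emptyset$. An \emph{intermediate or target predicate} is any $p\in H$. The invariant is that at every time $t$, every admitted source fact has its predicate in $\Omega$.

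\emph{Base case.} Initially all source facts are in $\Omega$ by assumption.

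\emph{Inductive step.} Consider a \textsc{Source}-mode call of Algorithm~\ref{alg:gate} on a proposed fact $u$. The barrier rejects $u$ whenever its predicate is outside $\Omega$. So if $S_{t+1}=S_t\cup\{u\}$, the predicate of $u$ lies in $\Omega$. The \textsc{KernelScan} step and cache eviction only remove units or compute consequences; they never add source facts. Erasure only shrinks the set. Hence the invariant is preserved. This holds for every proposal sequence, because the argument uses only the syntactic predicate of $u$. It involves no probabilistic or semantic assumption about the model, which gives the closing clause of Proposition~\ref{prop:nonpromotion}.

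\emph{First claim.} Suppose a proposal contains a fact $q(\bar c)$ with $q\in H$. Since $q\notin\Omega$, the barrier returns \texttt{REJECT}. By the invariant, no admitted source premise ever carries $q$. Adding $q$ to a proposal therefore never yields \texttt{SOURCE-ADMIT} or \texttt{REDUNDANT} for it.

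\emph{Second claim.} Take any accepted derivation of $q(\bar c)$, $q\in H$. By Algorithm~\ref{alg:gate}, this is a complete acyclic derivation over $\Gamma_t=A_t\cup\mathcal R_t$ in which every edge passes the sound verifier $V$. Do case analysis on its root. A leaf of such a DAG is an element of $\Gamma_t$. The root fact is either a leaf or the conclusion of a rule-application edge. If it were a leaf, it would be a source fact in $A_t$ with predicate $q\notin\Omega$, contradicting the invariant. Hence the root edge, and so the derivation, contains a rule application. Under a sound verifier, the head of that rule is $q$, so the rule is genuinely in $\mathcal R$.

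\emph{Main obstacle.} The main point to pin down is the kernel step: confirming that \textsc{KernelScan} and cache rebuilding cannot reintroduce a derived $H$-fact into $A_t$ as a "source fact." I would handle this by noting that $K_{t+1}\subseteq S_{t+1}$, since a deletion scan returns a subset. I would also note that the cache is kept separate from $A_t$, as the partition line of Algorithm~\ref{alg:gate} and Theorem~\ref{thm:state-soundness}(ii) indicate. Cached consequences therefore never count as leaves.

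A secondary subtlety is the assumption that $\mathcal R$ is fixed. If rules could also be admitted, a new rule with head predicate $p\in\Omega$ would not break the argument. A rule fabricating a fact with no body would, however, need to be excluded. I would state that proposed rules are checked against the fixed $\mathcal R$, as in the proposition's hypothesis.
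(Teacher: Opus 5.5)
Your proposal is correct and takes the same route as the paper's proof. Rule-head predicates lie outside $\Omega$ and so fail admission, and rejected proposals contribute no premises, so such a predicate can enter the state only as the consequence of a rule. You make this explicit as an invariant over every transition (the kernel scan returns a subset, erasure only shrinks the set, and cached units are not leaves), which the paper compresses into its closing remark that the argument applies after every query-scoped reset.
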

\emph{Proof.} A rule-head predicate is outside $\Omega$, hence fails admission. It can enter closure only through the consequence of a rule. Rejected proposals contribute no premises. The same argument applies after every query-scoped reset. \hfill$\square$

Restricting source writes blocks direct insertion of the desired conclusion as a premise. Allowed leaf facts still require source checking. Appendix~\ref{app:gate-audit} examines these facts and tests the restriction by adding target conclusions to recorded proposals.
\begin{lemma}[Voting under shared bias and independent slips]
\label{lem:voting-floor}
Let $\bans$ be the query mass of a shared-error regime. Conditional on each query in this regime, suppose traces independently certify the same wrong answer with probability at least $1-\slipc$, where $\slipc<1/2$. On its complement, suppose they independently certify the unique correct answer with probability at least $1/2+\delta_0$, $\delta_0>0$. For $k$ traces under strict-majority serving (otherwise abstaining), the probability $\err(k)$ of serving a wrong answer obeys
\[
\bans\!\left(1-e^{-2k(1/2-\slipc)^2}\right)
\;\le\;\err(k)
\;\le\;\bans+e^{-2k\delta_0^2}.
\]
\end{lemma}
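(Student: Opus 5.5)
The plan is to condition on the regime and use Hoeffding's inequality in each branch. Write $\err(k)=\bans\,\err_{\mathrm{sh}}(k)+(1-\bans)\,\err_{\mathrm{c}}(k)$. Here $\err_{\mathrm{sh}}(k)$ and $\err_{\mathrm{c}}(k)$ are the conditional probabilities of serving a wrong answer given a query in the shared-error regime and in its complement. Each bound then comes from a per-query estimate plus averaging over queries. The per-query estimates are uniform in the query because the hypotheses hold conditional on every query, so the averaging step is immediate.

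\emph{Lower bound.} Fix a query in the shared-error regime, with common wrong answer $w$. Let $X_i\in\{0,1\}$ indicate that trace $i$ certifies $w$. The $X_i$ are independent with $\mathbb{E}X_i\ge 1-\slipc>1/2$. If $\sum_i X_i>k/2$, strict-majority serving outputs $w$, which is wrong. Hence
\[
\err_{\mathrm{sh}}\ge \Pr\Bigl[\textstyle\sum_i X_i>k/2\Bigr]=1-\Pr\Bigl[\textstyle\sum_i X_i\le k/2\Bigr].
\]
The event $\sum_i X_i\le k/2$ requires $\sum_i X_i-\mathbb{E}\sum_i X_i\le -k(1/2-\slipc)$. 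By the one-sided Hoeffding inequality for $[0,1]$ variables (applicable after stochastic domination if the means exceed $1-\slipc$), this has probability at most $e^{-2k(1/2-\slipc)^2}$. Averaging over the regime and dropping the nonnegative complement term gives $\err(k)\ge\bans\bigl(1-e^{-2k(1/2-\slipc)^2}\bigr)$.

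\emph{Upper bound.} On the shared-error regime, bound $\err_{\mathrm{sh}}\le 1$; this contributes at most $\bans$. On the complement, fix a query with unique correct answer $c$, and let $Y_i$ indicate that trace $i$ certifies $c$, with $\mathbb{E}Y_i\ge 1/2+\delta_0$. A wrong answer can be served only if some wrong answer has a strict majority. That majority would leave $c$ with fewer than $k/2$ votes, so it forces $\sum_i Y_i<k/2$. Abstentions are not errors. Hoeffding then gives
\[
\Pr\Bigl[\textstyle\sum_i Y_i<k/2\Bigr]\le e^{-2k\delta_0^2}.
\]
Finally, use $(1-\bans)\,e^{-2k\delta_0^2}\le e^{-2k\delta_0^2}$.

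\emph{Main obstacle.} The only delicate step is justifying Hoeffding when the per-trace success probabilities are only lower-bounded and possibly heterogeneous across traces. I would handle this by coupling each indicator with an independent Bernoulli variable at exactly the bound, so the bound variable is dominated by the true indicator. The majority event is monotone in the indicators, so its probability can only increase under the true indicators. A second point is interpretive: certifying a non-majority or non-certified output must count as abstention, not error. I would state this explicitly from the serving rule.
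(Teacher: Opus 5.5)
Your proof is correct and matches the paper's argument: condition on the regime, then apply Hoeffding twice. For the lower bound, apply it to the indicators of the common wrong answer and scale by $\bans$; for the upper bound, apply it to the indicators of the correct answer and add the worst-case contribution $\bans$ from the biased mass. The coupling you flag as the main obstacle is unnecessary: Hoeffding's inequality already holds for independent $[0,1]$ variables with heterogeneous means, and the lower bounds on the means only shift the deviation threshold, exactly as you computed.
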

Appendix~\ref{app:proof-voting} gives the conditional bound. With two valid binary answers, majority voting with tie abstention is exactly the deployed agreement rule, and the fixed-trace source-recheck ablation in Section~\ref{subsec:rq2} measures what checking adds beyond it. For $k=2$ the bound can be loose; two calls alone neither identify its latent parameters nor establish conditional independence.

The deletion scan preserves $\Cn(A)=\Cn(\SO)$ (Theorem~\ref{thm:TA}), so the recomputed kernel decides every query that the certified source can decide, and any storage savings arise as a byproduct of whatever redundancy the scan finds in a given corpus.
\section{Experiments}
\label{sec:exp}
\begin{table}[!tbp]
\centering
\caption{Full-pool accuracy on public benchmarks (\%; abstentions count as errors).}
\label{tab:rq1-ledger}
\scriptsize
\resizebox{\columnwidth}{!}{%
\begin{tabular}{@{}lcccccc@{}}
\toprule
& \multicolumn{2}{c}{ContractNLI}
& \multicolumn{2}{c}{MedCalc-Bench Verified}
& \multicolumn{2}{c}{LeanDojo LD-MH} \\
\cmidrule(lr){2-3}\cmidrule(lr){4-5}\cmidrule(l){6-7}
Method & Seed & DeepSeek & Seed & DeepSeek & Seed & DeepSeek \\
\midrule
Pure CoT \citep{weietal2022cot} & $79.6$ & $79.9$
         & $66.1$ & $71.5$
         & $50.0$ & $51.5$ \\
RAG / open-book \citep{lewis2020rag} & $73.2$ & $74.7$
         & $76.2$ & $85.5$
         & $51.0$ & $50.0$ \\
\noalign{\color{cpunrow}\hrule height 2.4ex \vskip -2.4ex}
\textbf{CPUNeSy native} & $81.4$ & $70.9$ & $72.5$ & $63.5$
         & $\mathbf{92.5}$ & $79.0$ \\
\quad answered accuracy & $87.6$ & $88.7$ & $92.9$ & $95.2$ & $93.9$ & $88.3$ \\
\quad native coverage & $93.0$ & $79.9$ & $78.0$ & $66.6$ & $98.5$ & $89.5$ \\
\noalign{\color{cpunfallback}\hrule height 2.4ex \vskip -2.4ex}
\textbf{CPUNeSy + fallback} & $\mathbf{84.9}$ & $\mathbf{83.3}$
         & $\mathbf{90.1}$ & $\mathbf{90.4}$
         & $\mathbf{92.5}$ & $\mathbf{83.5}$ \\
\midrule
Logic-LM \citep{logiclm} & $0.0$ & $0.0$
         & $75.5$ & $77.5$
         & $50.5$ & $46.5$ \\
DetermLR \citep{sun2024determlr} & $48.4$ & $70.0$
         & $67.5$ & $61.5$
         & $50.5$ & $49.5$ \\
LINC \citep{olausson2023linc} & $16.0$ & $0.2$
     & $66.5$ & $44.7$
     & $49.5$ & $51.0$ \\
SymbCoT \citep{xu2024symbcot} & $56.8$ & $16.5$
        & $66.5$ & $75.1$
        & $49.5$ & $49.5$ \\
LINA \citep{li2024lina} & $48.5$ & $10.8$
     & $67.3$ & $69.3$
     & $49.5$ & $51.0$ \\
HBLR \citep{li2025hblr} & $54.7$ & $10.4$
     & $66.3$ & $62.8$
     & $49.0$ & $50.5$ \\
MenTaL \citep{mental} & $1.8$ & $0.0$
       & $75.0$ & $76.2$
       & $47.5$ & $50.0$ \\
\bottomrule
\end{tabular}}
\end{table}
\textbf{Setup.} Five suites span three domains. Legal reasoning uses L1-HARD ($240$ two- to four-hop queries with mid-link-break negatives) and ContractNLI; formal mathematics uses M-HARD ($200$ depth-$2$--$6$ AFP-DAG tasks) and LD-MH; medical calculation uses MedCalc-Bench Verified ($1{,}100$ items over $55$ verified calculators). Neural controls use chain-of-thought, complete-chain RAG, and high-reasoning-effort reruns at $2.2$--$3.1\times$ budget. Gradient-trained systems \citep{deepproblog,ltn} require labels absent from this serving-time setting, so the learned-component control is pretrained ReProver on LeanDojo Benchmark~4.
\textbf{Grounding models.} CPUNeSy uses K3 on controlled stress tests and the model named in each column on the public benchmarks.

\subsection{RQ1: Does Symbolic Derivation Recover Accuracy When Answers Must Be Derived?}
\label{subsec:rq1}

The public benchmarks separate two ways of serving (Table~\ref{tab:rq1-ledger}). The native certified channel already leads every baseline on LD-MH and on ContractNLI with Seed, and with the same-model fallback cascade it leads every baseline in every column. The gap between the native and fallback rows is itself the finding, and it is largest where the task asks the model to compose several rules or facts before answering.

On the L1-HARD mid-link-break negatives, chain-of-thought, retrieval, and Logic-LM all stay below eighty percent even when the reasoning budget is tripled. The certified channel serves essentially the entire pool at near-perfect accuracy (McNemar $p{<}10^{-5}$; full paired tests in Table~\ref{tab:paired-discordants}), and the twelve baseline failures we audit consistently omit exactly one intermediate link, which is what a mid-link-break negative is designed to expose.

The pattern repeats on the AFP dependency DAG. Chain-of-thought and retrieval hover around chance because the answer is only well-defined relative to a chain of prior lemmas, while CPUNeSy reconstructs that chain from the stored source and serves the majority of queries at high accuracy.

MedCalc shows a different split that clarifies what execution contributes. The certified channel is highly precise on the questions it answers, but the medical rubric forces frequent abstention when extracted parameters disagree. Routing those abstentions to the same model's open-book answer lifts full-pool accuracy from $72.5\%$ to $90.1\%$ for Seed and from $63.5\%$ to $90.4\%$ for DeepSeek without touching the certified subset, and analogous same-model cascades give a milder but consistent benefit on ContractNLI and LD-MH.

On LeanDojo Benchmark~4, the kernel pool and BM25 both attain seventy-six percent full gold-premise coverage, roughly an order of magnitude above the learned ReProver retriever \citep{leandojo}. The kernel pool therefore functions as a competitive premise source in its own right. Appendix~\ref{app:exp} gives full protocol details, exact counts, and provenance.

\begin{figure}[!tbp]
\centering
\includegraphics[width=0.9\columnwidth]{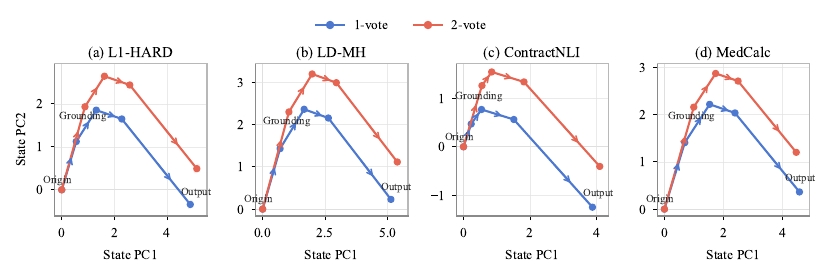}
\caption{Logged process-state trajectories. Paths join origin, grounding, admission, closure, and output; the first two principal components explain most of the variance in the pooled log features, and ellipses are bootstrap $95\%$ centroid regions. Models and runs are listed in Appendix~\ref{app:figure-provenance}.}
\label{fig:reasoning-state-trajectories}
\end{figure}

\subsection{RQ2: Which Component Recovers Accuracy, and What Does Certification Add?}
\label{subsec:rq2}

\begin{wraptable}[9]{r}{0.36\columnwidth}
\vspace{-\baselineskip}
\centering
\caption{L1-HARD ablation (\%).}
\label{tab:rq2-ablation}
\small
\begin{tabular}{@{}lrr@{}}
\toprule
Method & Full & Break \\
\midrule
Pure & $84.6$ & $69.2$ \\
RAG & $87.9$ & $75.8$ \\
Logic-LM & $83.3$ & $68.3$ \\
Solver & $\mathbf{98.8}$ & $\mathbf{99.2}$ \\
\textbf{CPUNeSy} & $97.1$ & $98.3$ \\
\bottomrule
\end{tabular}
\end{wraptable}
RQ2 separates deterministic execution from the additional controls on model writes and serving. Execution accounts for most of the accuracy recovery on the controlled derivation tasks, while the added controls change which answers are served, trading coverage for lower accepted-answer risk. On L1-HARD, the uncertified one-vote solver reaches $98.8\%$ full-pool accuracy and the restricted two-vote policy $97.1\%$ (Table~\ref{tab:rq2-ablation}). Execution alone therefore recovers most of the gain over neural baselines, and the paired difference is not statistically significant ($p=0.34$; the interval does not establish equivalence either, Appendix~\ref{app:gate-audit}). We therefore treat certification as a selective-serving control rather than an accuracy mechanism. Its value is the error--abstention tradeoff it imposes on top of execution and voting. Because this comparison changes the number of grounding calls, the fixed-trace experiment below isolates the additional effect of source rechecking.

We hold both ContractNLI grounding traces fixed and compare plain two-vote agreement with agreement plus source rechecking. For DeepSeek, rechecking withholds $63/252=25\%$ of the wrong answers that survive agreement and $51$ correct answers. For Seed, it withholds $8/128=6.25\%$ of such wrong answers and $14$ correct answers. This isolates an error--abstention tradeoff beyond voting. Separately, the restricted L1-HARD interface has no observed wrong answers among $233$ served queries; its historical runs and the explicit admission replay have different implementation scopes (Table~\ref{tab:disposition}). The logged state differences in Figures~\ref{fig:reasoning-state-trajectories} and~\ref{fig:rq2-mechanism-evidence} are descriptive, and the attribution to rechecking rests on the paired outcomes rather than those projections. Appendix~\ref{app:gate-audit} reports the complete counts.

\begin{center}
\vspace{-4pt}
\includegraphics[width=0.9\columnwidth]{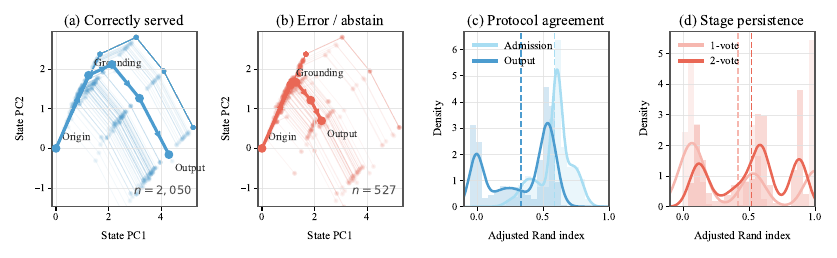}\par
\vspace{-4pt}
\end{center}
\vspace{-6pt}
\noindent{\footnotesize\makeatletter\refstepcounter{figure}\label{fig:rq2-mechanism-evidence}Figure~\thefigure: State-cluster consistency across domains. (a--b) Two-vote trajectories for correctly served versus other queries. (c--d) Adjusted Rand index (ARI), with paired bootstrap intervals, measures agreement between protocol clusters and their persistence across stages.\makeatother\par}
\vspace{-2pt}

\subsection{RQ3: Which Observed Conditions Explain Utility?}
\label{subsec:rq3}

\begin{wraptable}[10]{r}{0.55\columnwidth}
\vspace{-\baselineskip}
\centering
\caption{CPUNeSy by grounding regime (L1-HARD, \%).}
\label{tab:cacc3}
\footnotesize
\setlength{\tabcolsep}{2.4pt}
\begin{tabular}{@{}llrrrr@{}}
\toprule
Grounder & regime & full & cov. & neural & $\Delta_{\rm full}$ \\
\midrule
K3 & slip & $97.1$ & $97.1$ & $87.9$ & $+9.2$ \\
K2.6 & slip & $94.2$ & $94.6$ & $88.3$ & $+5.9$ \\
GLM & slip & $92.9$ & $93.8$ & $92.9$ & $0.0$ \\
\midrule
DeepSeek & bias & $76.7$ & $91.7$ & $84.6$ & $-7.9$ \\
Qwen2.5-72B & bias & $82.1$ & $93.3$ & $90.0$ & $-7.9$ \\
\bottomrule
\end{tabular}
\end{wraptable}
RQ1 and RQ2 motivate three diagnostics for interpreting utility. \emph{Derivation demand} counts required rule applications; \emph{kernel adequacy} asks whether the stored source supports a query; and the \emph{grounding-error pattern} describes errors shared across calls versus disagreements. We analyze these properties retrospectively on the recorded runs. The large gains on L1-HARD, M-HARD, and LD-MH are consistent with the role of derivation and source coverage, while the public-benchmark results show that checks can reduce wrong answers without improving full-pool accuracy.

Grouping the L1-HARD runs by their observed joint-error patterns gives a clean split (Table~\ref{tab:cacc3}). Relative to each model's better neural control, full-pool accuracy increases for K3 and K2.6, is unchanged for GLM, and decreases for DeepSeek and Qwen. This pattern is consistent with the shared-error limitation in Lemma~\ref{lem:voting-floor}; it does not identify the lemma's latent parameters or establish independence from two calls. Agreement can retain shared errors, which is what the fixed-trace source-recheck comparison in RQ2 measures. Table~\ref{tab:crossckpt} gives the full cross-model results. Correct answers, errors, and abstentions also distribute differently over vote disagreement and grounding coverage (Figure~\ref{fig:grounding-geometry}), and the overlap on ContractNLI illustrates why these two coordinates alone do not establish source support. These observations motivate a calibration protocol. 

\begin{center}
\includegraphics[width=0.9\columnwidth]{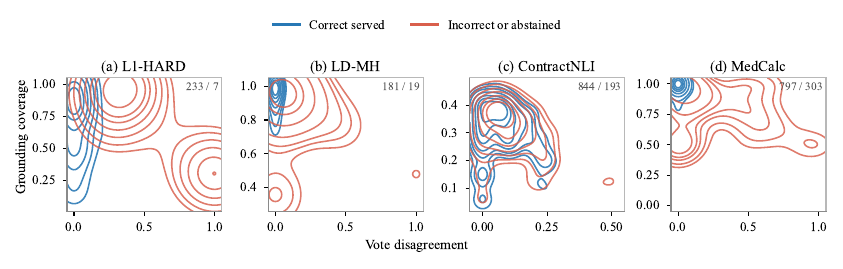}
\end{center}
\noindent{\footnotesize\makeatletter\refstepcounter{figure}\label{fig:grounding-geometry}Figure~\thefigure: Vote-level grounding geometry (K3 for L1-HARD/LD-MH; Seed for ContractNLI/MedCalc). The LD-MH panel uses a different model from the Seed column in Table~\ref{tab:rq1-ledger}. The horizontal axis shows vote disagreement (one minus Jaccard) and the vertical axis grounding coverage. Blue marks correctly answered queries and coral marks errors or abstentions; contours are normalized within each class.\makeatother\par}
\vspace{-2pt}

\subsection{RQ4: What Is the Serving-Time Cost of Controlling Model Writes?}
\label{subsec:rq4}

The resource ledger distinguishes candidate-pool restriction, closure-preserving kernelization, and selective serving. On M-HARD, restricting the candidate vocabulary to leaf-$\Omega$ retains $48.9\%$ of full-pool units at the same observed $97.45\%$ answered accuracy and reduces tokens by $43\%$ (Table~\ref{tab:disposition}). These are candidate-restriction measurements rather than a measured storage-byte saving from the deletion scan. The separate corpus audit finds bounded rederivation redundancy of $37.1\%$ for AFP and $0.60\%$ for law, both below the protocol's $\tau=0.5$ compression-yield target (Table~\ref{tab:kstep}). Physical storage and the scheduler-specific crossover remain unmeasured.

Repeated grounding and rechecking trade wrong answers against abstentions and compute, and their utility depends on all three costs. Under the measured token ledger, restricted two-call serving on L1-HARD is preferred when $2c_W>6c_A+145{,}422c_T$ over the $240$ items, with $c_W$, $c_A$, and $c_T$ pricing a wrong answer, an abstention, and a token (derivation in Appendix~\ref{app:gate-audit}). Counts and answered-risk intervals for each recorded policy, including fallback, appear in Table~\ref{tab:selective-points}.
\section{Conclusion}
\label{sec:conclusion}
When a language model is paired with a symbolic executor, two things change at once, and they are easy to confuse. The model stops deriving by sampling and starts deriving by solving, and its premises become gated writes instead of free text. \textsc{CPUNeSy} separates the two, pairing a certificate gate that controls what the model may write into symbolic state with a deterministic executor that derives the answer. Execution accounts for most of the accuracy recovery on derivation-heavy tasks, lifting full-pool accuracy on L1-HARD from $87.9\%$ for the strongest neural baseline to $98.8\%$ with no certificate at all. The certificate adds something different. With grounding traces held fixed on ContractNLI, source rechecking removes a quarter of DeepSeek's wrong answers that survive two-vote agreement, at the cost of withholding correct answers it cannot support. The practical reading is that a deployment needing accuracy on derivation-heavy tasks gets most of it from the solver alone, while one whose failures must not be silent needs the gate, at roughly twice the tokens. The boundaries are equally specific. \textsc{CPUNeSy}'s certificates cover derivational validity relative to admitted premises, so strengthening the method reduces to strengthening the source checker, and where grounding errors are shared across calls, agreement repeats them and the gate cannot help. The next step is prospective calibration, measuring derivation demand, kernel adequacy, and grounding regime on a disjoint annotated set before freezing the serving policy.
\section*{Ethics Statement}
The legal and medical experiments evaluate benchmark inference and calculation. Clinical or legal use would require further validation of source interpretation, since the model can misread a contract or clinical note even when the subsequent inference passes verification. Applications should expose source evidence, checker identity, and abstention status, and distinguish fallback answers from checked outputs. Decisions affecting patients or legal rights require appropriate expert review.

\section*{AI Use Statement}
Language models served as experimental subjects and grounding components, and assisted coding, writing, visualization, and offline analysis. Human authors are responsible for verifying all records, results, and claims.

\begingroup
\raggedright
\Urlmuskip=0mu plus 2mu\relax
\bibliography{iclr2027_conference}

@article{oit,
  author    = {Qiu, Siyuan and Xu, Jianfeng},
  title     = {Research on a General State Formalization Method from the Perspective of Logic},
  journal   = {Mathematics},
  volume    = {13},
  number    = {20},
  pages     = {3324},
  year      = {2025},
  publisher = {MDPI},
  doi       = {10.3390/math13203324}
}

@article{semrd,
  author  = {Xu, Jianfeng},
  title   = {Rate-Distortion Theory for Deductive Sources under Closure Fidelity},
  journal = {arXiv preprint arXiv:2604.15698},
  year    = {2026},
  doi     = {10.48550/arXiv.2604.15698},
  url     = {https://arxiv.org/abs/2604.15698}
}

@article{derdepth,
  author  = {Xu, Jianfeng},
  title   = {Derivation Depth as an Information Metric: Axioms, Coding Theorems, and Storage--Computation Tradeoffs},
  journal = {arXiv preprint arXiv:2602.19137},
  year    = {2026},
  doi     = {10.48550/arXiv.2602.19137},
  url     = {https://arxiv.org/abs/2602.19137}
}

@article{pvcache,
  author  = {Xu, Jianfeng},
  title   = {Proof-Valid Caching under Premise Erasures: Local Structural Limits and Shared-Workload Gains},
  journal = {arXiv preprint arXiv:2608.11782},
  year    = {2026},
  doi     = {10.48550/arXiv.2608.11782},
  url     = {https://arxiv.org/abs/2608.11782}
}

@article{pvbench,
  author  = {Xu, Jianfeng},
  title   = {Proof-Valid Benchmarking for Approximate {LLM} Caching under Premise Erasures},
  journal = {ChinaXiv preprint ChinaXiv:202608.00158},
  year    = {2026},
  doi     = {10.12074/202608.00158},
  url     = {https://chinaxiv.org/abs/202608.00158}
}

@inproceedings{deepproblog,
  author    = {Manhaeve, Robin and Duman{\v{c}}i{\'c}, Sebastijan and Kimmig, Angelika and Demeester, Thomas and De Raedt, Luc},
  title     = {{DeepProbLog}: Neural Probabilistic Logic Programming},
  booktitle = {Advances in Neural Information Processing Systems},
  volume    = {31},
  year      = {2018},
  publisher = {Curran Associates, Inc.},
  url       = {https://proceedings.neurips.cc/paper_files/paper/2018/hash/dc5d637ed5e62c36ecb73b654b05ba2a-Abstract.html}
}

@inproceedings{neurasp,
  author    = {Yang, Zhun and Ishay, Adam and Lee, Joohyung},
  title     = {{NeurASP}: Embracing Neural Networks into Answer Set Programming},
  booktitle = {Proceedings of the Twenty-Ninth International Joint Conference on Artificial Intelligence (IJCAI)},
  pages     = {1755--1762},
  year      = {2020},
  doi       = {10.24963/ijcai.2020/243}
}

@inproceedings{logiclm,
  author    = {Pan, Liangming and Albalak, Alon and Wang, Xinyi and Wang, William Yang},
  title     = {{Logic-LM}: Empowering Large Language Models with Symbolic Solvers for Faithful Logical Reasoning},
  booktitle = {Findings of the Association for Computational Linguistics: EMNLP 2023},
  pages     = {3806--3824},
  year      = {2023},
  doi       = {10.18653/v1/2023.findings-emnlp.248}
}

@article{lnn,
  author  = {Riegel, Ryan and Gray, Alexander and Luus, Francois and Khan, Naweed and Makondo, Ndivhuwo and Akhalwaya, Ismail Yunus and Qian, Haifeng and Fagin, Ronald and Barahona, Francisco and Sharma, Udit and Ikbal, Shajith and Karanam, Hima and Neelam, Sumit and Likhyani, Ankita and Srivastava, Santosh},
  title   = {Logical Neural Networks},
  journal = {arXiv preprint arXiv:2006.13155},
  year    = {2020},
  doi     = {10.48550/arXiv.2006.13155},
  url     = {https://arxiv.org/abs/2006.13155}
}

@article{ltn,
  author  = {Badreddine, Samy and d'Avila Garcez, Artur and Serafini, Luciano and Spranger, Michael},
  title   = {Logic Tensor Networks},
  journal = {Artificial Intelligence},
  volume  = {303},
  pages   = {103649},
  year    = {2022},
  doi     = {10.1016/j.artint.2021.103649}
}

@inproceedings{satnet,
  author    = {Wang, Po-Wei and Donti, Priya L. and Wilder, Bryan and Kolter, J. Zico},
  title     = {{SATNet}: Bridging Deep Learning and Logical Reasoning Using a Differentiable Satisfiability Solver},
  booktitle = {International Conference on Machine Learning (ICML)},
  pages     = {6545--6554},
  year      = {2019}
}

@article{tensorlog,
  author  = {Cohen, William W. and Yang, Fan and Mazaitis, Kathryn Rivard},
  title   = {{TensorLog}: A Probabilistic Database Implemented Using Deep-Learning Infrastructure},
  journal = {Journal of Artificial Intelligence Research},
  volume  = {67},
  pages   = {285--325},
  year    = {2020},
  doi     = {10.1613/jair.1.11944}
}

@inproceedings{e2eprover,
  author    = {Rockt{\"a}schel, Tim and Riedel, Sebastian},
  title     = {End-to-End Differentiable Proving},
  booktitle = {Advances in Neural Information Processing Systems},
  volume    = {30},
  year      = {2017},
  publisher = {Curran Associates, Inc.},
  url       = {https://proceedings.neurips.cc/paper_files/paper/2017/hash/b2ab001909a8a6f04b51920306046ce5-Abstract.html}
}

@inproceedings{slash,
  author    = {Skryagin, Arseny and Stammer, Wolfgang and Ochs, Daniel and Dhami, Devendra Singh and Kersting, Kristian},
  title     = {Neural-Probabilistic Answer Set Programming},
  booktitle = {Proceedings of the International Conference on Principles of Knowledge Representation and Reasoning (KR)},
  pages     = {463--473},
  year      = {2022},
  doi       = {10.24963/kr.2022/48}
}

@inproceedings{shortcuts,
  author    = {Marconato, Emanuele and Teso, Stefano and Vergari, Antonio and Passerini, Andrea},
  title     = {Not All Neuro-Symbolic Concepts Are Created Equal: Analysis and Mitigation of Reasoning Shortcuts},
  booktitle = {Advances in Neural Information Processing Systems (NeurIPS)},
  pages     = {72507--72539},
  year      = {2023},
  doi       = {10.52202/075280-3170}
}

@inproceedings{magnushammer,
  author    = {Miku{\l}a, Maciej and Tworkowski, Szymon and Antoniak, Szymon and Piotrowski, Bartosz and Jiang, Albert Qiaochu and Zhou, Jin Peng and Szegedy, Christian and Kuci{\'n}ski, {\L}ukasz and Mi{\l}o{\'s}, Piotr and Wu, Yuhuai},
  title     = {{MagnusHammer}: A Transformer-Based Approach to Premise Selection},
  booktitle = {International Conference on Learning Representations (ICLR)},
  year      = {2024}
}

@article{dl4tp,
  author  = {Li, Zhaoyu and Sun, Jialiang and Murphy, Logan and Su, Qidong and Li, Zenan and Zhang, Xian and Yang, Kaiyu and Si, Xujie},
  title   = {A Survey on Deep Learning for Theorem Proving},
  journal = {arXiv preprint arXiv:2404.09939},
  year    = {2024},
  doi     = {10.48550/arXiv.2404.09939},
  url     = {https://arxiv.org/abs/2404.09939}
}

@article{nesy2024,
  author  = {Colelough, Brandon C. and Regli, William},
  title   = {Neuro-Symbolic {AI} in 2024: A Systematic Review},
  journal = {arXiv preprint arXiv:2501.05435},
  year    = {2025}
}

@article{hoeffding,
  author  = {Hoeffding, Wassily},
  title   = {Probability Inequalities for Sums of Bounded Random Variables},
  journal = {Journal of the American Statistical Association},
  volume  = {58},
  number  = {301},
  pages   = {13--30},
  year    = {1963}
}

@incollection{garcez2022,
  author    = {Besold, Tarek R. and d'Avila Garcez, Artur and Bader, Sebastian and Bowman, Howard and Domingos, Pedro and Hitzler, Pascal and others},
  title     = {Neural-Symbolic Learning and Reasoning: A Survey and Interpretation},
  booktitle = {Neuro-Symbolic Artificial Intelligence: The State of the Art},
  publisher = {IOS Press},
  pages     = {1--51},
  year      = {2022}
}

@inproceedings{souffle,
  author    = {Scholz, Bernhard and Jordan, Herbert and Suboti{\'c}, Pavle and Westmann, Till},
  title     = {On Fast Large-Scale Program Analysis in {Datalog}},
  booktitle = {Proceedings of the International Conference on Compiler Construction (CC)},
  pages     = {196--206},
  year      = {2016},
  doi       = {10.1145/2892208.2892226}
}

@inproceedings{dsp,
  author    = {Jiang, Albert Q. and Welleck, Sean and Zhou, Jin Peng and Lacroix, Timoth{\'e}e and Liu, Jiacheng and Li, Wenda and Jamnik, Mateja and Lample, Guillaume and Wu, Yuhuai},
  title     = {Draft, Sketch, and Prove: Guiding Formal Theorem Provers with Informal Proofs},
  booktitle = {International Conference on Learning Representations (ICLR)},
  year      = {2023}
}

@article{kbann,
  author  = {Towell, Geoffrey G. and Shavlik, Jude W.},
  title   = {Knowledge-Based Artificial Neural Networks},
  journal = {Artificial Intelligence},
  volume  = {70},
  number  = {1--2},
  pages   = {119--165},
  year    = {1994}
}

@article{cilp,
  author  = {d'Avila Garcez, Artur S. and Zaverucha, Gerson},
  title   = {The Connectionist Inductive Learning and Logic Programming System},
  journal = {Applied Intelligence},
  volume  = {11},
  number  = {1},
  pages   = {59--77},
  year    = {1999}
}

@article{mln,
  author  = {Richardson, Matthew and Domingos, Pedro},
  title   = {Markov Logic Networks},
  journal = {Machine Learning},
  volume  = {62},
  number  = {1--2},
  pages   = {107--136},
  year    = {2006}
}

@inproceedings{problog,
  author    = {De Raedt, Luc and Kimmig, Angelika and Toivonen, Hannu},
  title     = {{ProbLog}: A Probabilistic {Prolog} and Its Application in Link Discovery},
  booktitle = {Proceedings of the Twentieth International Joint Conference on Artificial Intelligence (IJCAI)},
  pages     = {2462--2467},
  year      = {2007}
}

@inproceedings{semloss,
  author    = {Xu, Jingyi and Zhang, Zilu and Friedman, Tal and Liang, Yitao and Van den Broeck, Guy},
  title     = {A Semantic Loss Function for Deep Learning with Symbolic Knowledge},
  booktitle = {International Conference on Machine Learning (ICML)},
  pages     = {5502--5511},
  year      = {2018}
}

@article{dilp,
  author  = {Evans, Richard and Grefenstette, Edward},
  title   = {Learning Explanatory Rules from Noisy Data},
  journal = {Journal of Artificial Intelligence Research (JAIR)},
  volume  = {61},
  pages   = {1--64},
  year    = {2018}
}

@article{gptf,
  author  = {Polu, Stanislas and Sutskever, Ilya},
  title   = {Generative Language Modeling for Automated Theorem Proving},
  journal = {arXiv preprint arXiv:2009.03393},
  year    = {2020},
  doi     = {10.48550/arXiv.2009.03393},
  url     = {https://arxiv.org/abs/2009.03393}
}

@inproceedings{htps,
  author    = {Lample, Guillaume and Lacroix, Timoth{\'e}e and Lachaux, Marie-Anne and Rodriguez, Aur{\'e}lien and Hayat, Amaury and Lavril, Thibaut and Ebner, Gabriel and Martinet, Xavier},
  title     = {{HyperTree} Proof Search for Neural Theorem Proving},
  booktitle = {Advances in Neural Information Processing Systems (NeurIPS)},
  pages     = {26337--26349},
  year      = {2022}
}

@inproceedings{leandojo,
  author    = {Yang, Kaiyu and Swope, Aidan and Gu, Alex and Chalamala, Rahul and Song, Peiyang and Yu, Shixing and Godil, Saad and Prenger, Ryan and Anandkumar, Anima},
  title     = {{LeanDojo}: Theorem Proving with Retrieval-Augmented Language Models},
  booktitle = {Advances in Neural Information Processing Systems (NeurIPS), Datasets and Benchmarks Track},
  year      = {2023}
}

@inproceedings{contractnli,
  author    = {Koreeda, Yuta and Manning, Christopher},
  title     = {{ContractNLI}: A Dataset for Document-level Natural Language Inference for Contracts},
  booktitle = {Findings of the Association for Computational Linguistics: EMNLP 2021},
  pages     = {1907--1919},
  year      = {2021}
}

@article{alphageometry,
  author  = {Trinh, Trieu H. and Wu, Yuhuai and Le, Quoc V. and He, He and Luong, Thang},
  title   = {Solving Olympiad Geometry without Human Demonstrations},
  journal = {Nature},
  volume  = {625},
  number  = {7995},
  pages   = {476--482},
  year    = {2024}
}

@inproceedings{lewis2020rag,
  author    = {Lewis, Patrick and Perez, Ethan and Piktus, Aleksandra and Petroni, Fabio and Karpukhin, Vladimir and Goyal, Naman and K{\"u}ttler, Heinrich and Lewis, Mike and Yih, Wen-tau and Rockt{\"a}schel, Tim and Riedel, Sebastian and Kiela, Douwe},
  title     = {Retrieval-Augmented Generation for Knowledge-Intensive {NLP} Tasks},
  booktitle = {Advances in Neural Information Processing Systems (NeurIPS)},
  pages     = {9459--9474},
  year      = {2020}
}

@inproceedings{wang2023selfconsistency,
  author    = {Wang, Xuezhi and Wei, Jason and Schuurmans, Dale and Le, Quoc V. and Chi, Ed H. and Narang, Sharan and Chowdhery, Aakanksha and Zhou, Denny},
  title     = {Self-Consistency Improves Chain of Thought Reasoning in Language Models},
  booktitle = {International Conference on Learning Representations (ICLR)},
  year      = {2023}
}

@inproceedings{gao2023pal,
  author    = {Gao, Luyu and Madaan, Aman and Zhou, Shuyan and Alon, Uri and Liu, Pengfei and Yang, Yiming and Callan, Jamie and Neubig, Graham},
  title     = {{PAL}: Program-aided Language Models},
  booktitle = {International Conference on Machine Learning (ICML)},
  pages     = {10764--10799},
  year      = {2023}
}

@inproceedings{geifman2017selective,
  author    = {Geifman, Yonatan and El-Yaniv, Ran},
  title     = {Selective Classification for Deep Neural Networks},
  booktitle = {Advances in Neural Information Processing Systems (NeurIPS)},
  pages     = {4878--4887},
  year      = {2017}
}

@article{cobbe2021verifier,
  author  = {Cobbe, Karl and Kosaraju, Vineet and Bavarian, Mohammad and Chen, Mark and Jun, Heewoo and Kaiser, Lukasz and Plappert, Matthias and Tworek, Jerry and Hilton, Jacob and Nakano, Reiichiro and Hesse, Christopher and Schulman, John},
  title   = {Training Verifiers to Solve Math Word Problems},
  journal = {arXiv preprint arXiv:2110.14168},
  year    = {2021},
  doi     = {10.48550/arXiv.2110.14168},
  url     = {https://arxiv.org/abs/2110.14168}
}

@inproceedings{helff2026slr,
  author    = {Helff, Lukas and Omar, Ahmad and Friedrich, Felix and W{\"u}st, Antonia and Shindo, Hikaru and Mitchell, Rupert and Woydt, Tim and Schramowski, Patrick and Stammer, Wolfgang and Kersting, Kristian},
  title     = {{SLR}: Automated Synthesis for Scalable Logical Reasoning},
  booktitle = {Proceedings of the 64th Annual Meeting of the Association for Computational Linguistics (Volume 1: Long Papers)},
  pages     = {402--426},
  year      = {2026},
  publisher = {Association for Computational Linguistics},
  doi       = {10.18653/v1/2026.acl-long.16},
  url       = {https://aclanthology.org/2026.acl-long.16/}
}

@inproceedings{liu2025safe,
  author    = {Liu, Chengwu and Yuan, Ye and Yin, Yichun and Xu, Yan and Xu, Xin and Chen, Zaoyu and Wang, Yasheng and Shang, Lifeng and Liu, Qun and Zhang, Ming},
  title     = {{Safe}: Enhancing Mathematical Reasoning in Large Language Models via Retrospective Step-aware Formal Verification},
  booktitle = {Proceedings of the 63rd Annual Meeting of the Association for Computational Linguistics (ACL)},
  pages     = {12171--12186},
  year      = {2025},
  doi       = {10.18653/v1/2025.acl-long.594},
  url       = {https://aclanthology.org/2025.acl-long.594/}
}

@article{galitsky2026metapcr,
  author  = {Galitsky, Boris and Solodkin, Vladimir and Beznosikov, Aleksandr},
  title   = {{MetaPCR-LLM}: Universal Proof-Carrying Reasoning Across Heterogeneous Logics for Large Language Model Validation},
  journal = {Preprints.org preprint 202609.0849},
  year    = {2026},
  doi     = {10.20944/preprints202609.0849.v1},
  url     = {https://www.preprints.org/manuscript/202609.0849}
}

@inproceedings{kordjamshidi2026reasoners,
  author    = {Kordjamshidi, Parisa and Aslan, Samer and Seshadri, Madhavan and Barrett, Leslie and Santus, Enrico},
  title     = {Reasoners or Translators? Contamination-aware Evaluation and Neuro-Symbolic Robustness on Tax Law},
  booktitle = {Proceedings of the First Workshop on Structured Understanding, Retrieval, and Generation in the {LLM} Era ({SURG}e{LLM} 2026)},
  pages     = {344--360},
  year      = {2026},
  publisher = {Association for Computational Linguistics},
  doi       = {10.18653/v1/2026.surgellm-1.23},
  url       = {https://aclanthology.org/2026.surgellm-1.23/}
}

@article{wu2026logicgraph,
  author  = {Wu, Yanrui and Zhang, Lingling and Zhang, Xinyu and Chang, Jiayu and Li, Pengyu and Jiang, Xu and Hu, Jingtao and Liu, Jun},
  title   = {{LogicGraph}: Benchmarking Multi-Path Logical Reasoning via Neuro-Symbolic Generation and Verification},
  journal = {arXiv preprint arXiv:2602.21044},
  year    = {2026},
  doi     = {10.48550/arXiv.2602.21044},
  url     = {https://arxiv.org/abs/2602.21044}
}

@inproceedings{weietal2022cot,
  author    = {Wei, Jason and Wang, Xuezhi and Schuurmans, Dale and Bosma, Maarten and Ichter, Brian and Xia, Fei and Chi, Ed H. and Le, Quoc V. and Zhou, Denny},
  title     = {Chain-of-Thought Prompting Elicits Reasoning in Large Language Models},
  booktitle = {Advances in Neural Information Processing Systems (NeurIPS)},
  pages     = {24824--24837},
  year      = {2022}
}

@article{kaplan2020scaling,
  author  = {Kaplan, Jared and McCandlish, Sam and Henighan, Tom and Brown, Tom B. and Chess, Benjamin and Child, Rewon and Gray, Scott and Radford, Alec and Wu, Jeffrey and Amodei, Dario},
  title   = {Scaling Laws for Neural Language Models},
  journal = {arXiv preprint arXiv:2001.08361},
  year    = {2020},
  doi     = {10.48550/arXiv.2001.08361},
  url     = {https://arxiv.org/abs/2001.08361}
}

@article{srivastava2022bigbench,
  author  = {Srivastava, Aarohi and Rastogi, Abhinav and Rao, Abhishek and Shoeb, Abu Awal Md and Abid, Abubakar and Fisch, Adam and Brown, Adam R. and Santoro, Adam and Gupta, Aditya and Garriga-Alonso, Adri{\`a} and others},
  title   = {Beyond the Imitation Game: Quantifying and Extrapolating the Capabilities of Language Models},
  journal = {Transactions on Machine Learning Research (TMLR)},
  year    = {2023}
}

@inproceedings{dziri2024faith,
  author    = {Dziri, Nouha and Lu, Ximing and Sclar, Melanie and Li, Xiang Lorraine and Jiang, Liwei and Lin, Bill Yuchen and Welleck, Sean and West, Peter and Bhagavatula, Chandra and Le Bras, Ronan and Hwang, Jena D. and Sanyal, Soumya and Ren, Xiang and Ettinger, Allyson and Harchaoui, Zaid and Choi, Yejin},
  title     = {Faith and Fate: Limits of Transformers on Compositionality},
  booktitle = {Advances in Neural Information Processing Systems (NeurIPS)},
  pages     = {70293--70332},
  year      = {2023}
}

@inproceedings{thakur2023copra,
  author    = {Thakur, Amitayush and Tsoukalas, George and Wen, Yeming and Xin, Jimmy and Chaudhuri, Swarat},
  title     = {An In-Context Learning Agent for Formal Theorem-Proving},
  booktitle = {First Conference on Language Modeling (COLM)},
  year      = {2024},
  note      = {arXiv:2310.04353}
}

@inproceedings{olausson2023linc,
  author    = {Olausson, Theo and Gu, Alex and Lipkin, Ben and Zhang, Cedegao and Solar-Lezama, Armando and Tenenbaum, Joshua and Levy, Roger},
  title     = {{LINC}: A Neurosymbolic Approach for Logical Reasoning by Combining Language Models with First-Order Logic Provers},
  booktitle = {Proceedings of the 2023 Conference on Empirical Methods in Natural Language Processing},
  pages     = {5153--5176},
  year      = {2023},
  publisher = {Association for Computational Linguistics},
  doi       = {10.18653/v1/2023.emnlp-main.313},
  url       = {https://aclanthology.org/2023.emnlp-main.313/}
}

@inproceedings{ye2023satlm,
  author    = {Ye, Xi and Chen, Qiaochu and Dillig, Isil and Durrett, Greg},
  title     = {{SatLM}: Satisfiability-Aided Language Models Using Declarative Prompting},
  booktitle = {Advances in Neural Information Processing Systems},
  volume    = {36},
  pages     = {45548--45580},
  year      = {2023},
  publisher = {Curran Associates, Inc.},
  doi       = {10.52202/075280-1974},
  url       = {https://proceedings.neurips.cc/paper_files/paper/2023/file/8e9c7d4a48bdac81a58f983a64aaf42b-Paper-Conference.pdf}
}

@inproceedings{xu2024symbcot,
  author    = {Xu, Jundong and Fei, Hao and Pan, Liangming and Liu, Qian and Lee, Mong-Li and Hsu, Wynne},
  title     = {Faithful Logical Reasoning via Symbolic Chain-of-Thought},
  booktitle = {Proceedings of the 62nd Annual Meeting of the Association for Computational Linguistics (Volume 1: Long Papers)},
  pages     = {13326--13365},
  year      = {2024},
  publisher = {Association for Computational Linguistics},
  doi       = {10.18653/v1/2024.acl-long.720},
  url       = {https://aclanthology.org/2024.acl-long.720/}
}

@inproceedings{jiang2024leanreasoner,
  author    = {Jiang, Dongwei and Fonseca, Marcio and Cohen, Shay B.},
  title     = {{LeanReasoner}: Boosting Complex Logical Reasoning with Lean},
  booktitle = {Proceedings of the 2024 Conference of the North American Chapter of the Association for Computational Linguistics: Human Language Technologies (Volume 1: Long Papers)},
  pages     = {7497--7510},
  year      = {2024},
  publisher = {Association for Computational Linguistics},
  doi       = {10.18653/v1/2024.naacl-long.416},
  url       = {https://aclanthology.org/2024.naacl-long.416/}
}

@inproceedings{ryu2025clover,
  author    = {Ryu, Hyun and Kim, Gyeongman and Lee, Hyemin S. and Yang, Eunho},
  title     = {Divide and Translate: Compositional First-Order Logic Translation and Verification for Complex Logical Reasoning},
  booktitle = {International Conference on Learning Representations},
  year      = {2025},
  url       = {https://openreview.net/forum?id=09FiNmvNMw}
}

@inproceedings{xu2025aristotle,
  author    = {Xu, Jundong and Fei, Hao and Luo, Meng and Liu, Qian and Pan, Liangming and Wang, William Yang and Nakov, Preslav and Lee, Mong-Li and Hsu, Wynne},
  title     = {Aristotle: Mastering Logical Reasoning with A Logic-Complete Decompose-Search-Resolve Framework},
  booktitle = {Proceedings of the 63rd Annual Meeting of the Association for Computational Linguistics (Volume 1: Long Papers)},
  pages     = {3052--3075},
  year      = {2025},
  publisher = {Association for Computational Linguistics},
  doi       = {10.18653/v1/2025.acl-long.153},
  url       = {https://aclanthology.org/2025.acl-long.153/}
}

@inproceedings{yang2025caring,
  author    = {Yang, Sen and Li, Xin and Cui, Leyang and Bing, Lidong and Lam, Wai},
  title     = {Neuro-Symbolic Integration Brings Causal and Reliable Reasoning Proofs},
  booktitle = {Findings of the Association for Computational Linguistics: NAACL 2025},
  pages     = {5732--5744},
  year      = {2025},
  publisher = {Association for Computational Linguistics},
  doi       = {10.18653/v1/2025.findings-naacl.317},
  url       = {https://aclanthology.org/2025.findings-naacl.317/}
}

@inproceedings{qi2025provergen,
  author    = {Qi, Chengwen and Ma, Ren and Li, Bowen and Du, He and Hui, Binyuan and Wu, Jinwang and Laili, Yuanjun and He, Conghui},
  title     = {Large Language Models Meet Symbolic Provers for Logical Reasoning Evaluation},
  booktitle = {International Conference on Learning Representations},
  year      = {2025},
  url       = {https://openreview.net/forum?id=C25SgeXWjE}
}

@inproceedings{parmar2024logicbench,
  author    = {Parmar, Mihir and Patel, Nisarg and Varshney, Neeraj and Nakamura, Mutsumi and Luo, Man and Mashetty, Santosh and Mitra, Arindam and Baral, Chitta},
  title     = {{LogicBench}: Towards Systematic Evaluation of Logical Reasoning Ability of Large Language Models},
  booktitle = {Proceedings of the 62nd Annual Meeting of the Association for Computational Linguistics (Volume 1: Long Papers)},
  pages     = {13679--13707},
  year      = {2024},
  publisher = {Association for Computational Linguistics},
  doi       = {10.18653/v1/2024.acl-long.739},
  url       = {https://aclanthology.org/2024.acl-long.739/}
}

@inproceedings{zhou2025rulearena,
  author    = {Zhou, Ruiwen and Hua, Wenyue and Pan, Liangming and Cheng, Sitao and Wu, Xiaobao and Yu, En and Wang, William Yang},
  title     = {{RuleArena}: A Benchmark for Rule-Guided Reasoning with {LLM}s in Real-World Scenarios},
  booktitle = {Proceedings of the 63rd Annual Meeting of the Association for Computational Linguistics (Volume 1: Long Papers)},
  pages     = {550--572},
  year      = {2025},
  publisher = {Association for Computational Linguistics},
  doi       = {10.18653/v1/2025.acl-long.27},
  url       = {https://aclanthology.org/2025.acl-long.27/}
}

@inproceedings{scholak2021picard,
  author    = {Scholak, Torsten and Schucher, Nathan and Bahdanau, Dzmitry},
  title     = {{PICARD}: Parsing Incrementally for Constrained Auto-Regressive Decoding from Language Models},
  booktitle = {Proceedings of the 2021 Conference on Empirical Methods in Natural Language Processing},
  pages     = {9895--9901},
  year      = {2021},
  publisher = {Association for Computational Linguistics},
  doi       = {10.18653/v1/2021.emnlp-main.779},
  url       = {https://aclanthology.org/2021.emnlp-main.779/}
}

@article{dong2024xgrammar,
  author  = {Dong, Yixin and Ruan, Charlie F. and Cai, Yaxing and Lai, Ruihang and Xu, Ziyi and Zhao, Yilong and Chen, Tianqi},
  title   = {{XGrammar}: Flexible and Efficient Structured Generation Engine for Large Language Models},
  journal = {Proceedings of Machine Learning and Systems},
  volume  = {7},
  year    = {2025},
  url     = {https://proceedings.mlsys.org/paper_files/paper/2025/hash/5c20ca4b0b20b0bd2f1d839dc605e70f-Abstract-Conference.html}
}

@article{li2023scallop,
  author  = {Li, Ziyang and Huang, Jiani and Naik, Mayur},
  title   = {Scallop: A Language for Neurosymbolic Programming},
  journal = {Proceedings of the ACM on Programming Languages},
  volume  = {7},
  number  = {PLDI},
  pages   = {1463--1487},
  year    = {2023},
  doi     = {10.1145/3591280},
  url     = {https://doi.org/10.1145/3591280}
}

@inproceedings{bourgaux2022provenance,
  author    = {Bourgaux, Camille and Bourhis, Pierre and Peterfreund, Liat and Thomazo, Micha{\"e}l},
  title     = {Revisiting Semiring Provenance for Datalog},
  booktitle = {Proceedings of the 19th International Conference on Principles of Knowledge Representation and Reasoning},
  pages     = {91--101},
  year      = {2022},
  doi       = {10.24963/kr.2022/10},
  url       = {https://doi.org/10.24963/kr.2022/10}
}

@inproceedings{sun2024determlr,
  title={Determ{LR}: Augmenting {LLM}-Based Logical Reasoning from Indeterminacy to Determinacy},
  author={Sun, Hongda and Xu, Weikai and Liu, Wei and Luan, Jian and Wang, Bin and Shang, Shuo and Wen, Ji-Rong and Yan, Rui},
  booktitle={Proceedings of the 62nd Annual Meeting of the Association for Computational Linguistics (Volume 1: Long Papers)},
  pages={9828--9862},
  year={2024},
  doi={10.18653/v1/2024.acl-long.531},
  url={https://aclanthology.org/2024.acl-long.531/}
}

@article{li2024lina,
  title={Leveraging {LLM}s for Hypothetical Deduction in Logical Inference: A Neuro-Symbolic Approach},
  author={Li, Qingchuan and Li, Jiatong and Liu, Tongxuan and Zeng, Yuting and Cheng, Mingyue and Huang, Weizhe and Liu, Qi},
  journal={arXiv preprint arXiv:2410.21779},
  year={2024},
  url={https://arxiv.org/abs/2410.21779}
}

@article{li2025hblr,
  title={From Hypothesis to Premises: {LLM}-Based Backward Logical Reasoning with Selective Symbolic Translation},
  author={Li, Qingchuan and Cheng, Mingyue and Liu, Zirui and Wang, Daoyu and Zeng, Yuting and Liu, Tongxuan},
  journal={arXiv preprint arXiv:2512.03360},
  year={2025},
  url={https://arxiv.org/abs/2512.03360}
}

@article{sviridenko2004,
  author = {Sviridenko, Maxim},
  title = {A Note on Maximizing a Submodular Set Function Subject to a Knapsack Constraint},
  journal = {Operations Research Letters},
  volume = {32},
  number = {1},
  pages = {41--43},
  year = {2004},
  doi = {10.1016/S0167-6377(03)00062-2}
}

@misc{mental,
  title={Are LLMs Stable Formal Logic Translators in Logical Reasoning Across Linguistically Diversified Texts?},
  author={Li, Qingchuan and Li, Jiatong and Liu, Zirui and Cheng, Mingyue and Zeng, Yuting and Liu, Qi and Liu, Tongxuan},
  year={2026},
  eprint={2506.04575},
  archivePrefix={arXiv},
  note={Version 3; accepted by The Web Conference 2026},
  url={https://arxiv.org/abs/2506.04575}
}
\endgroup
\bibliographystyle{iclr2027_conference}

\appendix
\section{Full Related Work}
\label{app:related}
This appendix expands the comparison in Section~\ref{sec:related}, covering neural representations of logic, symbolic execution, premise selection, and source checking.

\subsection{First wave: connectionist--symbolic integration (1990s)}
KBANN \cite{kbann} compiled a propositional Horn rule set into the
topology of a feed-forward network and then refined the rules by
backpropagation. It supported both rule insertion and extraction from a neural model.
CILP \cite{cilp} showed that recurrent networks with semi-linear
neurons approximate the fixed-point operator of propositional logic
programs, closing a learn--reason--extract cycle with sound extraction
\cite{garcez2022}. These systems established ways to encode symbolic rules in neural networks. CPUNeSy builds on the separation between rules and their neural representation, with additional checks on source premises and their storage.

\subsection{Statistical relational learning and probabilistic logics
(2000s)}
Markov logic networks \cite{mln} softened first-order clauses with
weights; ProbLog \cite{problog} gave logic programs a distribution
semantics. These methods represent uncertainty over logical statements. Their use of probabilistic semantics addresses a different aspect of inference from the source-admission checks studied here.

\subsection{Differentiable deductive engines (2016--2022)}
TensorLog \cite{tensorlog} compiled Datalog-style inference into
differentiable matrix algebra; end-to-end differentiable proving
\cite{e2eprover} replaced unification by vector similarity; $\partial$ILP
\cite{dilp} learned rules by gradient descent; DeepProbLog
\cite{deepproblog} introduced neural predicates over probabilistic
facts; NeurASP \cite{neurasp} embedded networks into answer-set
programs; SLASH \cite{slash} combined neural predicates with
probabilistic circuits. These works made predicate-level processing
\emph{trainable}. The learned predicates connect neural predictions to logical inference. Reasoning shortcuts can arise when those predictions fit the task labels while misrepresenting the intended concepts (Appendix~\ref{subsec:shortcuts}).

\subsection{Logic as layers and constraints}
Logical Neural Networks \cite{lnn} realize logical gates as neurons
with provable bound propagation; Logic Tensor Networks \cite{ltn}
ground fuzzy first-order formulas in tensor computations; SATNet
\cite{satnet} made MaxSAT a differentiable layer; the semantic loss
\cite{semloss} moves the constraint into the training objective. These methods incorporate logic into network computation or training objectives. CPUNeSy uses a separate symbolic state whose updates are checked before execution.

\subsection{LLM-era hybrids}
The closest systems translate model output into a symbolic problem. Logic-LM \cite{logiclm}, LINC \cite{olausson2023linc},
SatLM \cite{ye2023satlm}, LeanReasoner \cite{jiang2024leanreasoner},
and CLOVER \cite{ryu2025clover} all make the model translate natural
language into a formal object and then rely on a symbolic executor or
verifier. Their target formalisms include logic programming, FOL, SAT/SMT, and Lean. They also differ in the repair and decomposition performed before execution. SymbCoT \cite{xu2024symbcot} and Aristotle
\cite{xu2025aristotle} keep more of the symbolic reasoning trajectory
inside LLM prompting, while CaRing \cite{yang2025caring} extracts
human-readable proofs from Prolog search logs. These systems motivate the distinction between validating an inference and checking its premises. CPUNeSy admits source units through a certificate gate and records the dependencies of conclusions produced by the symbolic core.

Structured generation addresses errors in output syntax.
PICARD \cite{scholak2021picard} and XGrammar
\cite{dong2024xgrammar} constrain decoding so generated strings remain
syntactically valid. This is useful for reducing parser failures, but a
well-formed formula may still assert an unsupported fact. Datalog
engines such as Souffl\'e \cite{souffle} and provenance semantics for
Datalog \cite{bourgaux2022provenance} provide execution and dependency tracking. CPUNeSy combines these operations with source admission and removal of redundant stored units. Proposition~\ref{prop:scheduling} specifies when to cache a derived result under the stated cost model, and Section~\ref{subsec:rq4} reports the measured costs.

\subsection{Neural theorem proving and premise selection}
GPT-f \cite{gptf} introduced transformer step-proving on Metamath;
HTPS \cite{htps} added search-guided expert iteration; DSP \cite{dsp}
pipelines informal drafts to formal sketches; LeanDojo/ReProver
\cite{leandojo} and MagnusHammer \cite{magnushammer} made learned
premise selection the standard interface to large libraries;
AlphaGeometry \cite{alphageometry} demonstrated a neuro-symbolic loop
at olympiad level; \cite{dl4tp} surveys the area. Premise-selection indexes and our kernel store both supply candidates for proof construction. We compare candidate coverage and selection accuracy against ReProver in Appendix~\ref{subsec:leandojo}.

\subsection{Failure modes: reasoning shortcuts}
\label{subsec:shortcuts}
Marconato et al.\ \cite{shortcuts} showed that NeSy predictors can
reach high label accuracy while violating the intended semantics;
mitigation requires extra supervision or architectural constraints.
These findings motivate checking which premises support a prediction. Our experiments measure the effects of restricted writes, agreement, and source rechecking (Section~\ref{subsec:rq2}).

\subsection{Logical-reasoning benchmarks and pressure tests}
Recent benchmarks sharpen the empirical boundary for LLM reasoning.
LogicBench \cite{parmar2024logicbench} isolates individual inference
rules across propositional, first-order, and non-monotonic patterns;
ProverGen/ProverQA \cite{qi2025provergen} combines LLM generation with
symbolic provers to create FOL reasoning problems with verified
intermediate chains; RuleArena \cite{zhou2025rulearena} moves rule
following into realistic airline, sports, and tax-policy settings.
These benchmarks test whether models follow contextual rules through multiple inference steps. Our L1-HARD, M-HARD, ContractNLI, MedCalc, and LeanDojo evaluations examine how deterministic execution and checks on model-proposed premises affect errors, abstentions, and cost.

\subsection{Positioning}
Table~\ref{tab:positioning} identifies the role of related techniques in
the present design and pairs each contribution with the evaluation question studied here.

\begin{table}[!htbp]
\centering\footnotesize
\caption{Relationship to established components and evaluation questions.}
\label{tab:positioning}
\begin{tabularx}{\linewidth}{@{}lXX@{}}
\toprule
Family & Established contribution & Question tested here\\
\midrule
Parse-and-solve & Execute model-generated formalizations & Which agreed outputs does source rechecking withhold?\\
Self-consistency & Aggregate repeated predictions & What changes when the two traces and votes are held fixed?\\
Structured decoding & Enforce syntactic output contracts & Can an allowed semantic assertion still be wrong?\\
Proof/provenance systems & Check inference and trace dependencies & Which premises may enter the checked state?\\
Selective prediction & Trade coverage against answered risk & When do error, refusal, and call costs favor checking?\\
Kernel and cache methods & Reduce stored or recomputed state & Does this corpus offer sufficient redundancy to justify compression?\\
\bottomrule
\end{tabularx}
\end{table}
CPUNeSy uses deterministic Horn execution and proof/provenance ideas
\citep{souffle,gptf,dsp,leandojo}, and retains the stated cost assumptions
of the depth--information model \citep{derdepth}. Its primary empirical
addition is the explicit decomposition of restricted writes, agreement,
and source rechecking with observable error and refusal counts.
\section{Background, Definitions, and Supporting Results}
\label{app:background}

This appendix fixes the notation and states the background results
invoked by CPUNeSy. Section~\ref{sec:theory} states the main guarantees, and Appendix~\ref{app:proofs} gives their proofs.

\subsection{States, Formulas, and Deductive Closure}
Following the Objective Information Theory (OIT) foundation \cite{oit},
the \emph{state} of an object or system at a time point (or interval) is
the semantic valuation of a set of well-formed formulas over a given
domain and interpretation. The framework relates a \emph{semantic domain} $S_O$ of statements to a \emph{carrier domain} $S_C$ that represents them \cite{derdepth}. In CPUNeSy, the language model is the carrier and the symbolic core operates on the statements it proposes.

\begin{definition}[Statement universe and proof system \cite{semrd}]
\label{def:universe}
Fix a finite ambient universe $\SO$ of candidate statements and an
effective proof system $\PS$ such that for every finite
$\Gamma\subseteq\SO$ and every $s\in\SO$, the judgment
$\Gamma\vdash s$ is decidable. The induced \emph{deductive closure} is
\[
\Cn(\Gamma)\;:=\;\{\,s\in\SO : \Gamma\vdash s\,\},
\]
which is reflexive ($\Gamma\subseteq\Cn(\Gamma)$), monotone, and
idempotent ($\Cn(\Cn(\Gamma))=\Cn(\Gamma)$), with $\Cn(\Gamma)$ finite
for finite $\Gamma$.
\end{definition}

\begin{assumption}[Finite-step closure dynamics \cite{semrd}]
\label{ass:finite-step}
There is a computable monotone operator $\TPS$ on finite subsets of
$\SO$ with $\TPS(B)\supseteq B$ and
$\Cn(B)=\bigcup_{n\ge 0}\TPS^{n}(B)$, stabilizing in finitely many
iterations. Function-free Horn / Datalog fragments satisfy
Assumption~\ref{ass:finite-step}; this is the decidable fragment in which
CPUNeSy cores operate (criterion C1 below).
\end{assumption}

\subsection{Deductive Sources and the Irredundant Kernel}
\begin{definition}[Deductive source \cite{semrd}]
\label{def:dedsource}
A \emph{deductive source} is a pair $(S_O,P_O)$ where
$S_O\subseteq\SO$ is a finite, effectively listable knowledge base under
a fixed canonical order, and $P_O$ is a distribution over $S_O$. Fidelity is defined by preservation of the statements that can be derived.
\end{definition}

\begin{definition}[Kernel decomposition \cite{semrd}]
\label{def:kernel}

Fixing $\PS$ and a canonical scan order, a deterministic deletion scan
decomposes
\[
S_O \;=\; A \uplus J,
\]
where $A=\Atom(S_O)$ is the \emph{irredundant kernel} and $J$ the
\emph{redundant part}: every $j\in J$ satisfies
$j\in\Cn(S_O\setminus\{j\})$. Under the order-robustness condition
$A=\Ess(S_O):=\{s\in S_O: s\notin\Cn(S_O\setminus\{s\})\}$, the kernel is
order-invariant. We write $P_A:=P_O(A)$ for the \emph{kernel mass} and
$\pi_A$ for the source conditioned on $A$; the \emph{redundancy
fraction} is $1-P_A$. We use \emph{weighted} kernel mass $P_A=P_O(A)$ for the service distribution and \emph{uniform} kernel mass $\PAu:=|A|/|S_O|$ for the unit count.
\end{definition}

\begin{theorem}[Kernel factorization and zero-distortion rate \cite{semrd}]
\label{thm:bg-semrd}
Let $\hat S_O\subseteq\Cn(S_O)\cap\SO$ (reconstructions stay inside the
closure). Then for every $D\ge 0$,
\[
R_{\Sem}(D)\;=\;P_A\cdot R^{(A)}\!\Big(\tfrac{D}{P_A}\Big),
\]
i.e.\ the redundant part is invisible to both rate and distortion. In the
nonconfusable (disjoint-kernel) regime,
$R_{\Sem}(0)=P_A\,H(\pi_A)$; in the general confusable case,
$R_{\Sem}(0)=P_A\,H_{\Gamma_0}(\pi_A)$ where $H_{\Gamma_0}$ is the
entropy of the kernel confusability hypergraph. The corresponding
nonconfusable per-message specialization used by CPUNeSy is stated
in Theorem~\ref{thm:TA}.
\end{theorem}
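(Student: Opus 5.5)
The plan is to prove the factorization by a matching achievability/converse pair. Both directions rest on one structural fact from Definition~\ref{def:kernel}: on the redundant part $J$, closure fidelity assigns zero distortion to every reconstruction allowed by the hypothesis $\hat S_O\subseteq\Cn(S_O)\cap\SO$. The first step is to make this precise. I will read the semantic distortion as measuring failure to preserve derivable content relative to the deductive source $(S_O,P_O)$ of Definition~\ref{def:dedsource}. Every $j\in J$ satisfies $j\in\Cn(S_O\setminus\{j\})$, and $\Cn$ is reflexive and idempotent (Definition~\ref{def:universe}). So the closure of the kernel-side reconstruction already contains $j$, and the per-letter distortion $d(j,\hat s)$ vanishes for every admissible $\hat s$. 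This lemma is where I expect the main obstacle, because it depends on the exact per-letter form of the closure-fidelity distortion in \cite{semrd}. I would first check that $d(s,\hat s)$ depends on $s$ only through whether $s$ is recoverable from $\Cn(\hat S_O)$, which makes the $J$-branch vacuous. Granting this, I write $R_{\Sem}(D)=\min\{I(S;\hat S):\mathbb{E}\,d(S,\hat S)\le D\}$ and split $P_O=P_A\pi_A+(1-P_A)\pi_J$.

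For achievability, I take any test channel $q_A(\hat s\mid s)$ for $\pi_A$ with $\mathbb{E}_{\pi_A}d\le D/P_A$. I extend it to $J$ by mapping every $j$ to a fixed reconstruction that does not depend on $j$. The total distortion is then $P_A\cdot D/P_A+0=D$. The mutual information is at most $P_A\,I_{\pi_A}(S;\hat S)$ plus the information carried by the indicator $\mathbf 1\{S\in A\}$. To avoid paying $h(P_A)$ for that indicator, I let the $J$-branch output a reconstruction drawn from the marginal of $\hat S$ under $q_A$. This makes $\hat S$ independent of the branch given that marginal, so the $J$-branch contributes no information. Using the convexity of $I$ in the channel, this yields $I\le P_A\,R^{(A)}(D/P_A)$.

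For the converse, I take any feasible channel and use $I(S;\hat S)\ge I(S;\hat S\mid \mathbf 1_A)-H(\mathbf 1_A\mid\hat S)+\dots$. The cleaner route is the standard chain-rule decomposition $I(S;\hat S)\ge P_A\,I(S;\hat S\mid S\in A)$, which holds after conditioning on the indicator: the term for $J$ is nonnegative, and the indicator can only add information. Since $d\ge 0$ on $A$, the distortion budget forces $\mathbb{E}[d\mid S\in A]\le D/P_A$, and therefore $I(S;\hat S\mid S\in A)\ge R^{(A)}(D/P_A)$. If the unconditional inequality turns out not to hold in this form, I would fall back to proving the statement in the operational $n$-letter sense, where the indicator sequence is asymptotically costless for the kernel term. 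Combining the two directions gives $R_{\Sem}(D)=P_A\,R^{(A)}(D/P_A)$.

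Finally, I specialize to $D=0$. When the kernel statements are pairwise nonconfusable, zero distortion on $A$ forces exact recovery, so $R^{(A)}(0)=H(\pi_A)$ by the lossless source coding theorem. In the confusable case, zero distortion only requires mapping to a reconstruction acceptable for $s$. The zero-distortion rate is then the hypergraph entropy $H_{\Gamma_0}(\pi_A)$ of the confusability hypergraph, by the standard characterization of zero-error rate-distortion via hypergraph entropy.
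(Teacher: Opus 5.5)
The paper does not prove this statement. It imports it from \citet{semrd} (Theorem~5.1, with the nonconfusable rate from its Theorem~3.1) and uses it only in part (iii) of Theorem~\ref{thm:TA}, so your argument has to stand on its own. Its information-theoretic core does. Assume $d\ge 0$ and $d(j,\hat s)=0$ for all $j\in J$ and all admissible $\hat s$; then both directions are exact single-letter facts. For achievability, if the $J$-branch outputs the marginal $\hat\pi$ of $q_A$, then $\hat\pi$ is the overall output law, and $I(S;\hat S)=\sum_s P_O(s)\,D(p(\cdot\mid s)\,\|\,\hat\pi)=P_A\,I_{\pi_A}(q_A)$ holds exactly, with no appeal to convexity. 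For the converse, $Z=\mathbf 1\{S\in A\}$ is a function of $S$, so $I(S;\hat S)=I(Z;\hat S)+I(S;\hat S\mid Z)\ge P_A\,I(S;\hat S\mid Z=1)$ always holds. The $n$-letter fallback is therefore unnecessary, and this direction uses only $d\ge 0$. The $D=0$ cases then go through as you describe. The confusable case holds essentially by definition once $\Gamma_0$ is the family of acceptance sets $\{a\in A: d(a,\hat s)=0\}$ and reconstructions with equal acceptance sets are merged.

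The gap is that premise. It carries the whole deductive content of the factorization, and you assert it rather than derive it. The facts you cite ($j\in\Cn(S_O\setminus\{j\})$, reflexivity, idempotence) give $j\in\Cn(A)$, which is a property of the kernel as a set. But $d(j,\hat s)$ pairs the emitted message $j$ with the single symbol $\hat s$ the decoder outputs for it, and in the per-letter model no ``kernel-side reconstruction'' accompanies that symbol. Your proposed check has the mirror-image problem. Take $\hat S_O$ to be the fixed reconstruction alphabet, and suppose $d(s,\hat s)$ depended on $s$ only through recoverability from $\Cn(\hat S_O)$. Then $d$ would not depend on $\hat s$ at all, and $R_{\Sem}$ could only take the values $0$ and $\infty$.

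The closure facts alone cannot yield the lemma. Consider a proof system with the Horn inference $p,r\vdash q$, $S_O=\{p,r,q\}$ with uniform $P_O$, and the per-letter entailment distortion $d(s,\hat s)=\mathbf 1\{s\notin\Cn(\{\hat s\})\}$. Then $J=\{q\}$ and every statement accepts only itself. Hence $R_{\Sem}(0)=\log 3\neq\tfrac23\log 2=P_AH(\pi_A)$.

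What you need is the closure-fidelity distortion as defined in \citet{semrd}, and a proof from that definition that it scores a message against the retained source while still depending on $\hat s$ for kernel units. For instance, a form such as $\mathbf 1\{s\notin\Cn((A\setminus\{s\})\cup\{\hat s\})\}$ gives the lemma in one line from $J\subseteq\Cn(A)$. Note also that your justification never uses $\hat S_O\subseteq\Cn(S_O)\cap\SO$. If the distortion also charges for content a reconstruction adds beyond the source closure, that hypothesis is what zeroes the charge on $J$, so it belongs inside this lemma.
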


Theorem~\ref{thm:bg-semrd} relates the storage requirement to the kernel mass under closure fidelity. 


\subsection{Derivation Depth as an Information Metric}
\begin{definition}[Derivation depth \cite{derdepth}]
\label{def:depth}

For a finite, effectively decidable premise base $B\subseteq\SO$ with
$m:=|B|$, let $P_O(s)$ denote the (finite, computable, well-founded)
immediate-predecessor set of $s$. The \emph{derivation depth}
$\Dd(s\mid B)$ is the length of a shortest derivation of $s$ from $B$
along $P_O$; it is a unique, finite, computable non-negative integer.
\end{definition}

\begin{theorem}[Depth as information \cite{derdepth}]
\label{thm:bg-depthinfo}
Under a richness condition and generic incompressibility, for
information-rich queries with $d:=\Dd(q\mid B)\ge 1$,
\[
\Dd(q\mid B)\;=\;\widetilde{\Theta}\!\left(
\frac{K\!\big(q\,\big|\,\langle B\rangle\big)}
{\log\big(m+\Dd(q\mid B)\big)}\right),
\]
where $K(\cdot\mid\langle B\rangle)$ is conditional algorithmic
description length and $\widetilde{\Theta}$ hides an unavoidable
logarithmic addressing overhead. The characterization is tight, and
derivation depth coincides with conditional Bennett logical depth under
efficient proof simulation.
\end{theorem}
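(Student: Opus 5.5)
The plan is to prove the two directions of the $\widetilde{\Theta}$ separately. The upper bound on $K$ comes from encoding a shortest derivation. The matching lower bound comes from counting derivable statements of bounded depth, together with the richness and generic-incompressibility hypotheses. Throughout, write $d=\Dd(q\mid B)$ and $m=|B|$, and work relative to the fixed effective proof system $\PS$ of Definition~\ref{def:universe}, so that derivations can be checked and replayed by a fixed program.

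\textbf{Upper bound on description length: $K(q\mid\langle B\rangle)\le O(d\log(m+d))$.} Fix a shortest derivation of $q$ from $B$ along the predecessor relation $P_O$ of Definition~\ref{def:depth}. I would encode it step by step.
\begin{itemize}
\item Each of the $d$ inference steps is specified by naming its rule and the immediate predecessors it uses.
\item Every predecessor is either a premise in $B$ or an earlier line of the derivation, so it is an index into a list of at most $m+d$ items.
\item Each index therefore costs $O(\log(m+d))$ bits. The rule identifier and the bounded arity of Horn rules contribute only a constant factor.
\end{itemize}
Given $\langle B\rangle$, a fixed decoder replays the steps using the decidability of $\Gamma\vdash s$ and outputs the last line. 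Adding self-delimiting overhead for $d$ gives
\[
K(q\mid\langle B\rangle)\le c\,d\log(m+d)+O(\log d),
\]
which rearranges to $d\ge\widetilde{\Omega}\big(K(q\mid\langle B\rangle)/\log(m+d)\big)$.

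\textbf{Lower bound: $K(q\mid\langle B\rangle)\ge\widetilde{\Omega}(d\log(m+d))$.} This direction is the main obstacle, because it is exactly where the hypotheses are used.
\begin{itemize}
\item I would take the richness condition to assert that the number of statements of depth exactly $d$ is at least $(m+d)^{\Omega(d)}$. Informally, derivation trees of depth $d$ branch over the available premises without collapsing onto a small set of conclusions.
\item The trivial counting bound from the encoding above matches this: there are at most $(m+d)^{O(d)}$ statements of depth at most $d$.
\item Generic incompressibility says that $q$ is a typical element of its depth class. Its conditional description length is then within $O(\log)$ of $\log$ of the class size, the standard fact that all but a vanishing fraction of elements of a computably enumerable set are incompressible.
\end{itemize}
Combining these yields $K(q\mid\langle B\rangle)\ge\Omega(d\log(m+d))-O(\log(m+d))$, hence $d\le\widetilde{O}\big(K/\log(m+d)\big)$. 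The delicate point is making the depth class enumerable given $\langle B\rangle$ and $d$, so that the incompressibility argument applies. Finiteness and decidability of $\Cn$ (Assumption~\ref{ass:finite-step}) supply this, since iterating $\TPS$ layer by layer lists the depth-$d$ statements.

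\textbf{Tightness and the Bennett-depth identification.} For tightness, the $\log(m+d)$ addressing overhead is unavoidable: a family in which each step genuinely selects among $m+d$ predecessors attains the counting bound, so no $o(\log(m+d))$ per-step code exists. For the identification with conditional Bennett logical depth, I would argue both inclusions.
\begin{itemize}
\item A near-shortest program for $q$ can be converted, by efficient proof simulation, into a derivation whose length is polynomially related to the program's running time.
\item Conversely, the replay decoder above runs in time governed by $d$.
\end{itemize}
Matching the significance parameter to the $O(\log)$ slack then gives the claimed coincidence up to the hidden logarithmic factors.
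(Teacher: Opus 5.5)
The paper gives no proof of this statement. It is imported as background from \cite{derdepth}, and its hypotheses (richness, generic incompressibility, efficient proof simulation) are never defined here, so your proposal has to stand on its own. Your encode-versus-count template is the natural one. The counting direction is fine for the hypotheses as you read them, though there it essentially restates them: the c.e.-set fact covers only most elements of the depth class, and applying it to the given $q$ is exactly your incompressibility assumption. Two steps fail as written.

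First, you use two different notions of depth. The bound $K(q\mid\langle B\rangle)\le c\,d\log(m+d)+O(\log d)$ treats $\Dd$ as the number of lines of a sequential derivation, each line pointing into a list of at most $m+d$ items. Your enumerability step instead lists the depth-$d$ class by iterating $\TPS$. For the usual immediate-consequence operator, its layers stratify statements by rounds of parallel inference, i.e.\ by height $1+\max_p\Dd(p)$; this is also the convention the paper uses when it computes depths in Appendix~\ref{subsec:kstep}. Under the height reading your encoding bound is false. With binary conjunction introduction, a conjunction of a generic choice of $k$ premises out of $m$ has height about $\log_2 k$ but conditional description length about $k\log(m/k)$. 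Under the length reading, the $\TPS$-layers are not the depth classes. A rule whose body collects $k$ facts, each obtained in one step, puts its head in layer $2$, although it needs $k+1$ lines. Commit to the length reading and enumerate the class directly by listing all derivations of at most $d$ lines; this is computable because $P_O$ is computable and $\Cn(B)$ is finite. The $O(\log(m+d))$ per-step cost also needs bounded fan-in and a conclusion fixed by the rule and its predecessors. You borrow these from Horn clauses, but Definition~\ref{def:depth} only provides finite predecessor sets, so they must be added as hypotheses.

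Second, and more seriously, the identification with Bennett depth is not established. To bound conditional Bennett depth at significance $O(\log(m+d))$ by the replay time, the replay program must be within $O(\log(m+d))$ bits of $K(q\mid\langle B\rangle)$. Your two directions give only $c'\,d\log(m+d)-O(\log(m+d))\le K(q\mid\langle B\rangle)\le c\,d\log(m+d)+O(\log d)$. Here the richness exponent $c'$ is unrelated to the encoding constant $c$, so the replay program may exceed $K$ by $\Theta(d\log(m+d))$ bits and need not be admissible. ``Matching the significance parameter to the $O(\log)$ slack'' presupposes exactly this. You need incompressibility in the sharper form that $K(q\mid\langle B\rangle)$ equals the shortest-derivation code length up to $O(\log(m+d))$. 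Switching to significance $\Theta(d\log(m+d))$ makes the replay program admissible, but it only bounds a quantity that is at most Bennett depth at logarithmic significance, so it does not give the upper bound you need. Moreover, the two directions of your argument give at best a polynomial relationship, not coincidence up to the logarithmic factors hidden in $\widetilde{\Theta}$. One direction is a polynomial simulation. The other is a replay decoder whose running time depends on $m$ and on the cost of each inference check, not on $d$ alone. The stronger conclusion requires a quasi-linear simulation and polylogarithmic per-step cost. That is what ``efficient proof simulation'' would have to assert, and it has to be stated as a hypothesis, not derived.
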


\paragraph{From depth to caching.}
The store--compute model of \citet{derdepth} supplies the cost scaling
used in Proposition~\ref{prop:scheduling}. The proof of its caching threshold and allocation guarantee appears in
Appendix~\ref{app:proof-scheduling}. The reconstruction-depth bound
used after pruning is stated in Lemma~\ref{lem:cert-depth}.

Theorem~\ref{thm:bg-depthinfo} provides the background
depth--information characterization; Proposition~\ref{prop:scheduling}
uses the stationary cost model to decide when recomputation gives way
to caching.

\subsection{Proof-Valid Reliability Under Premise Erasure}
\begin{definition}[Premise erasures and transparent caches \cite{pvcache}]
\label{def:erasure}
Base leaves are independently unavailable with rate $\varepsilon$ due to storage loss. Their source support remains valid. A fixed
cache $X\subseteq\Cn(B)$ remains available during the erasure trial.
Recovery is evaluated on a fixed hereditary witness DAG, with each inference requiring all its designated parents. Only this witness is used for recovery. A leaf is residually
exposed if some path from it to $q$ avoids $X$. Write $\Dexp(q;X)$
for these leaves and $\Dexp(q):=\Dexp(q;\varnothing)$.
\end{definition}

The residual-leaf law is stated in
Corollary~\ref{cor:erasure}, and proved in
Appendix~\ref{app:proof-erasure}. The following background result
records the additional module and complexity claims.

\begin{theorem}[Module gains and cache-selection complexity \cite{pvcache,pvbench}]
\label{thm:bg-modules}
For shared workloads, \emph{semantic modules} yield exact reliability laws
under joint and maximal-error criteria; leaf-only transparent storage
incurs a first-order overhead factor $1/\varepsilon$, reduced in the
saturated shared regime to $\rho/(s\varepsilon)$, where $s$ premises are
protected by one module and $\rho$ is the module-to-leaf cost ratio.
Optimal selection in general derivation DAGs is NP-complete already at
depth two; MDS parity caching is optimal up to one packet against the
coded benchmark.
\end{theorem}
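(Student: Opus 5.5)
Since Theorem~\ref{thm:bg-modules} is imported from \citet{pvcache,pvbench}, I would reconstruct its proof in the erasure model of Definition~\ref{def:erasure} rather than re-derive the full module theory. The first step is the single-query reliability law. Because recovery uses one fixed hereditary witness DAG in which every inference needs all of its designated parents, $q$ is recovered iff no residually exposed leaf is erased. Independence of the leaf erasures then gives
\[
\Pr[\text{recover } q \mid X]=(1-\varepsilon)^{|\Dexp(q;X)|},
\]
so the failure probability is $\varepsilon|\Dexp(q;X)|+O(\varepsilon^2)$.

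For shared workloads I would define a semantic module as a cached derived unit that cuts every path from its $s$ supporting leaves to each query using it. The joint success probability is then the product over the union of residually exposed leaves. The maximal-error criterion is the maximum over queries of $1-(1-\varepsilon)^{|\Dexp(q_i;X)|}$. Both are exact closed forms because exposure is a deterministic function of $X$, and this yields the ``exact laws'' claim.

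For the cost comparison I would linearize in $\varepsilon$. Storing one leaf transparently removes first-order failure mass $\varepsilon$ at unit cost, so the cost per unit of reliability is $1/\varepsilon$. This is the stated first-order overhead relative to a normalization in which protection is paid per unit of removed failure probability. A module of cost $\rho$ removes mass $s\varepsilon$, which gives cost per unit $\rho/(s\varepsilon)$. In the saturated shared regime every protected leaf lies under some module, so the leaf-only overhead is replaced by $\rho/(s\varepsilon)$. I would check that the second-order terms cancel in the ratio only to leading order; this is why the statement is first-order.

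The complexity part splits into membership and hardness. Membership in NP is immediate, since given $X$, the exposure sets are computable by path search and the objective can be checked in polynomial time. For hardness at depth two I would reduce from Set Cover or Maximum Coverage, where the construction is the main obstacle. Place the query at the root, put one intermediate node per set, and make the leaves the ground elements, with each intermediate's parents being its set's elements. Caching an intermediate then protects exactly its element leaves, and a budget-$k$ cache leaves no exposed leaf iff a $k$-set cover exists. The delicate point is the exact depth-two shape. The hereditary witness must route each leaf through some intermediate, so that a leaf is exposed iff it has an uncached path. I would first try giving each leaf a single witness path per covering set and requiring all of them to be cut, which models ``every path cut'' as a cover condition.

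For the MDS claim I would use a counting (cut-set) argument. Against the coded benchmark, tolerating any $e$ erased packets requires at least $e$ redundant packets. An $[n+e,n]$ MDS code attains this, and the ``one packet'' slack comes from integer rounding of the target erasure count derived from $\varepsilon$.
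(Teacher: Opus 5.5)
The paper does not prove this statement. It is recorded as a background result, and Remark~\ref{rem:external-deps} leaves its module and complexity parts dependent on the companion works, so your reconstruction has to stand on its own. The exact-law part does: it is Corollary~\ref{cor:erasure} applied to each $\Dexp(q_i;X)$ and to their union. The overhead part is an interpretation chosen to reproduce $1/\varepsilon$ and $\rho/(s\varepsilon)$. Note that ``shared'' and ``saturated'' play no role in your computation, which would apply verbatim to a single query. Attributing the one-packet slack to rounding is a guess rather than an argument, since you never pin down the coded benchmark.

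The genuine gap is the hardness reduction. In your depth-two DAG, a leaf $x$ has one path $x\to I_j\to q$ for each set $S_j\ni x$. By the exposure definition, $x$ is unexposed iff every such path meets $X$, i.e.\ (with $x,q\notin X$) iff \emph{every} $I_j$ with $x\in S_j$ is cached. Thus $\Dexp(q;X)=\bigcup_{j:\,I_j\notin X}S_j$, and caching $I_j$ does not protect ``exactly its element leaves''. A budget-$k$ cache with $k<m$ generally leaves exposed leaves whether or not a $k$-cover exists. Requiring all of a leaf's paths to be cut is a conjunction over the sets containing it, while set cover needs a disjunction, so your proposed fix reproduces the same failure.

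You need a source problem with conjunctive structure, e.g.\ Clique. Take a graph $G$ without isolated vertices. Create one intermediate $I_v$ per vertex and one leaf per edge $\{u,v\}$ that is a parent of both $I_u$ and $I_v$, and let $q$ require all intermediates. Excluding $q$ itself as a cache candidate, caching the intermediates of a vertex set $S$ protects exactly the edges induced by $S$. A cache of $j$ intermediates and $k-j$ leaves protects at most $\binom{j}{2}+k-j$ leaves, which for $k\ge 4$ is strictly below $\binom{k}{2}$ unless $j=k$. Hence some unit-cost cache of size at most $k$ protects $\binom{k}{2}$ leaves iff $G$ has a $k$-clique. Together with your NP-membership argument, this gives NP-completeness at depth two.
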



\subsection{Verification of Model-Proposed Derivations}
\label{subsec:bridge}

The preceding background concerns deductive sources. The definitions
below specify how a carrier trace is mapped to a certified source.
Voting is stated in Section~\ref{sec:theory}; coverage, certified depth,
and scheduling appear in Appendix~\ref{app:resource-guarantees}.
This subsection supplies their common setup.

\begin{definition}[Bridge channel model]
\label{def:bridge-channel}
The carrier--core interface is modeled by three error channels:
(i) \emph{proposal unreliability} $\mu$: each step of the intended
derivation is corrupted or missing in the carrier trace with
probability at most $\mu$;
(ii) \emph{verification miss rate} $\eta$: a valid step fails
verification with probability at most $\eta$;
(iii) \emph{formalization-failure rate} $\epsfr$: each exposed premise
fails formalization with probability at most $\epsfr$. Channel (iii) consists of an interface-dependent bias component and an i.i.d.\ slip component
(Definition~\ref{def:bias-slip}), and the independence of channel
events (Assumption~\ref{ass:standing}, S3) applies to the slip
component only.
\end{definition}

\begin{definition}[Grounding interface]
\label{def:grounding-interface}
A \emph{grounding interface specification} is a triple
$\mathcal{I}=(\Sigma,\mathcal{L},\Omega)$, where $\Sigma$ is a scoped
predicate vocabulary with explicit signatures (arity and argument
roles), $\mathcal{L}$ is a role legend, i.e.\ a partial injective map
from constants to case/entity roles, and $\Omega\subseteq\Sigma$ is a
declared source-predicate whitelist whose ground instances the
carrier may propose directly. The whitelist determines which predicates the model may propose. Their instances still require source checking. A \emph{grounding operator}
$\tau_{\mathcal{I}}$, implemented by the carrier, maps each query $x$
to a set $\hat G(x)$ of ground atoms over $\Sigma$; $\tau_{\mathcal{I}}$
implements the vocabulary and role constraints specified by $\mathcal{I}$.
\end{definition}

\begin{definition}[Bias--slip decomposition]
\label{def:bias-slip}

Fix an interface $\mathcal{I}$, a query distribution, and a gold
grounding $G(x)$ per query. The $i$-th carrier call produces
$\hat G_i(x)=G(x)\oplus B_{\mathcal{I}}(x)\oplus S_i(x)$ (symmetric
difference of ground-atom sets), where the \emph{bias} component
$B_{\mathcal{I}}(x)$ is determined by $\mathcal{I}$ and hence identical
across calls, and the \emph{slip} components $S_i(x)$ are i.i.d.\
across calls. A component is \emph{answer-critical} for $x$ if it
changes the answer derived from the grounded premises. Write
$\bans(\mathcal{I}):=\Pr_x[B_{\mathcal{I}}(x)\text{ is answer-critical}]$
and $\slipc:=\Pr_x[S_i(x)\text{ is answer-critical}]$. The
formalization-failure rate of Definition~\ref{def:bridge-channel}(iii)
satisfies $\epsfr(\mathcal{I})\le\bans(\mathcal{I})+\slipc$, with
equality on disjoint error events; the bias component is perfectly
correlated across traces, the slip component independent (S3).
\end{definition}

\begin{definition}[Certified deductive rounding]
\label{def:rounding}

Given premise base $B$, a trace $\tr$, an edge extractor $E$, and a
verifier $V$ for a decidable fragment, the verification procedure, called \emph{certified rounding} and denoted $\mathcal{R}_{E,V}(B,\tr)$, deletes every unverifiable edge, discards conclusions without a complete
verified derivation from $B$, and returns the certified source $(\Scert,\Pcert)$. A \emph{certificate gate} admits
into the core only units whose derivation edges all verify.
\end{definition}

\begin{definition}[Certified depth; recovery depth]
\label{def:cert-depth}
$\Ddc(q\mid B)$ is the minimum derivation depth of $q$ over derivations
whose every edge verifies under $V$. Letting $\pi$ be a minimum-depth
derivation of $q$, the \emph{recovery depth} $\drec(q)$ is the maximum,
over edges of $\pi$ pruned by rounding, of the certified depth needed to
re-derive the pruned subgoal from $B$.
\end{definition}

\begin{definition}[Canonical cache key]
\label{def:ckey}
Under the canonical order of Definition~\ref{def:kernel}, every
predicate unit $u$ has a unique normal form; $\key(u)$ is its
serialization. Units colliding under $\key$ are logically equivalent,
so the key is a sound cache index for certified content.
\end{definition}

\begin{lemma}[Certified rounding yields a deductive source]
\label{lem:rounding-source}
For a finite carrier trace and a sound, decidable fragment verifier,
whenever the rounding operator of Definition~\ref{def:rounding}
retains a nonempty set of units, its output
$(\Scert,\Pcert)$ is a deductive source in the sense of
Definition~\ref{def:dedsource}.
\end{lemma}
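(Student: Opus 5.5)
To prove Lemma~\ref{lem:rounding-source}, I would check the clauses of Definition~\ref{def:dedsource} one at a time. The output $(\Scert,\Pcert)$ must satisfy three conditions: (a) $\Scert\subseteq\SO$ is finite; (b) $\Scert$ is effectively listable under a fixed canonical order; (c) $\Pcert$ is a probability distribution on $\Scert$. The fidelity clause needs nothing further, because it is a convention on how closures are compared.

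\emph{Finiteness and membership.} The trace $\tr$ is finite, so the edge extractor $E$ returns finitely many candidate edges and finitely many conclusions. Certified rounding only deletes edges and discards conclusions (Definition~\ref{def:rounding}). Hence $\Scert$ is a subset of the finite set of units mentioned in $\tr$ or in $B$, and is finite. Every unit kept has a complete verified derivation from $B$. By soundness of $V$, each such unit lies in $\Cn(B)\subseteq\SO$.

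\emph{Effective listability.} Because $V$ decides the fragment and there are finitely many edges, I would compute $\Scert$ as follows. Run one verifier call per edge, then compute the units reachable from $B$ along verified edges. This is a finite fixed-point computation of the kind in Assumption~\ref{ass:finite-step}, and it halts. I would order the result by the canonical order induced by $\key$ (Definition~\ref{def:ckey}), giving the fixed listing the definition requires. Two points need care. First, the procedure must be deterministic, which holds because verification is a decision procedure. Second, units with logically equivalent keys should either be merged or given a tie-breaking rule.

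\emph{Distribution.} This is the main obstacle, because the rounding definition does not specify how $\Pcert$ arises. I would define it as the restriction and renormalisation of the source mass (or of the trace's empirical mass) to $\Scert$, setting $\Pcert(s)=P(s)/P(\Scert)$. This needs $P(\Scert)>0$, which is where the nonemptiness hypothesis is used. If the ambient mass could vanish on every retained unit, I would fall back to the uniform distribution on $\Scert$, which is well defined exactly when $\Scert\neq\varnothing$. With (a)--(c) established, $(\Scert,\Pcert)$ is a deductive source.
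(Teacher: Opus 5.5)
Your proposal is correct and follows the paper's proof. The paper gets finiteness from the finite trace, effective listability from decidability of edge verification, $\Scert\subseteq\Cn(B)$ from soundness of $V$, and normalizability of the empirical distribution $\Pcert$ from nonemptiness; your extra care about the listing order (via $\key$, with tie-breaking) and your uniform fallback for $\Pcert$ just make explicit what the paper leaves implicit.
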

The proof is given in Appendix~\ref{app:proof-rounding}.
If no unit survives verification, the gate rejects the proposal.

\paragraph{Connection to the main guarantees.}
Proposition~\ref{prop:coverage} gives the coverage guarantee,
Lemma~\ref{lem:cert-depth} bounds certification overhead,
Lemma~\ref{lem:voting-floor} separates shared bias from independent
slips, and Proposition~\ref{prop:scheduling} governs caching.
Their proofs appear in
Appendices~\ref{app:proof-coverage}, \ref{app:proof-depth},
\ref{app:proof-voting}, and~\ref{app:proof-scheduling}, respectively.

\begin{remark}[Observed joint-error diagnostic]
\label{rem:voting-floor-measurement}
The joint outcomes describe how errors co-occur across calls, the behavior considered in Lemma~\ref{lem:voting-floor}. Broad-interface L1-HARD has $9$ both-wrong pairs; the restricted interface has none and four correct/wrong pairs. Two calls cannot identify $\bans$ and $\slipc$ separately or establish independence. The restricted vocabulary blocks direct conclusion proposals, while false allowed leaf assertions remain possible (Appendix~\ref{app:gate-audit}).
\end{remark}

The interface and rounding operator determine what can enter the
certified representation. Section~\ref{sec:theory} states the guarantees
for source admission and certified derivation, together with the voting bound.
Appendix~\ref{app:resource-guarantees} gives the coverage, kernel-storage,
caching, and recovery guarantees.

\subsection{Applicability Criteria}

\begin{definition}[Four-criterion compression applicability]
\label{def:four-criteria}
A knowledge base $\mathcal{K}$ is \emph{eligible for the proposed compression-and-caching regime} iff:
(C1) $\mathcal{K}$ is formalizable in a decidable fragment (e.g.\
function-free Horn/Datalog);
(C2) task fidelity is defined by the consequences entailed by the stored statements;
(C3) the measured redundancy fraction satisfies $1{-}P_A\ge\tau$ for a
protocol-fixed threshold $\tau$;
(C4) the derivation-depth distribution admits a budget compatible with
finite-step closure, with query frequencies spanning the $\fc$ threshold so
the admission law is operative.
The supporting results are Lemma~\ref{lem:rounding-source} for the
certified source, Theorem~\ref{thm:TA} for kernel storage, and
Proposition~\ref{prop:scheduling} for scheduling. The criteria are evaluated by the protocols in Appendix~\ref{app:protocols}. C3 specifies the target compression yield. Checked execution and closure preservation also apply below this threshold.
\end{definition}

\section{Coverage, Kernel, and Resource Guarantees}
\label{app:resource-guarantees}
The following results describe query coverage, kernel storage, caching, and recovery after storage loss. Each result states its assumptions. Appendix~\ref{app:companion-proofs} identifies the rate--distortion and scheduling models used in the proofs.
\begin{proposition}[Certified coverage]
\label{prop:coverage}
Condition on the absence of interface-wide grounding bias. Suppose proposal loss, rejection of a valid edge, and premise-formalization failure occur independently in the slip channel with probabilities at most $\mu$, $\eta$, and $\epsfr$. If a query requires $n$ derivation edges and $e$ exposed source premises, then
\[
\Pr[\text{$q$ has a complete certified derivation}]
\;\ge\;(1-\mu)^n(1-\eta)^n(1-\epsfr)^e .
\]
\end{proposition}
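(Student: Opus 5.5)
We prove the bound by exhibiting a sufficient event for certification and bounding its probability from below by independence. Fix a query $q$ and its intended derivation $\pi$, with $n$ edges $e_1,\dots,e_n$ and $e$ exposed source premises $p_1,\dots,p_e$ (the leaves of $\pi$ that must be formalized and admitted). Since we condition on the absence of interface-wide bias (Definition~\ref{def:bias-slip}), only the slip channel can cause failures. For each edge $e_i$ define the good events $P_i$ (the step appears uncorrupted in the carrier trace) and $V_i$ (the verifier accepts the step). For each premise $p_j$ define $F_j$ (the premise is correctly formalized and passes source admission). The target is
\[
G\;:=\;\bigcap_{i=1}^{n}(P_i\cap V_i)\;\cap\;\bigcap_{j=1}^{e}F_j ,
\]
and the claim is that $G$ implies the existence of a complete certified derivation of $q$.

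\textbf{Sufficiency.} On $G$, every exposed premise is present in the certified source. By Lemma~\ref{lem:rounding-source}, rounding then yields a deductive source containing these premises. Every edge of $\pi$ is present in the trace and verifies under $V$, so certified rounding (Definition~\ref{def:rounding}) deletes no edge of $\pi$. Hence $\pi$ survives as a complete, acyclic, fully verified derivation of $q$ from the certified base. This is exactly the acceptance condition of the \textsc{Derived} branch of Algorithm~\ref{alg:gate}, apart from the depth budget. I will treat the depth budget as non-binding here, or read ``complete certified derivation'' as a statement about edges only. If the budget must be included, we additionally need $\operatorname{depth}(\pi)\le\delta$; this is deterministic given $\pi$, so it does not affect the probability bound.

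\textbf{Probability bound.} By hypothesis the events are independent in the slip channel (S3 of Assumption~\ref{ass:standing}), with $\Pr[P_i^c]\le\mu$, $\Pr[V_i^c]\le\eta$ and $\Pr[F_j^c]\le\epsfr$. Independence lets the probability factor:
\[
\Pr[G]=\prod_i\Pr[P_i]\Pr[V_i]\prod_j\Pr[F_j]\ge(1-\mu)^n(1-\eta)^n(1-\epsfr)^e .
\]
Monotonicity then gives the stated bound. The bound holds even though $q$ may have other certified derivations, since those can only increase the probability.

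\textbf{Main obstacle.} The delicate step is making the independence hypothesis precise. Verification of $e_i$ is only meaningful when the step was proposed, so $V_i$ should be read as conditional on $P_i$. I would handle this with the chain rule, writing $\Pr[P_i\cap V_i]=\Pr[P_i]\Pr[V_i\mid P_i]\ge(1-\mu)(1-\eta)$, where the miss rate $\eta$ of Definition~\ref{def:bridge-channel} is understood per valid proposed step. A second subtlety is premise sharing: if exposed premises are counted per distinct premise, the count $e$ already avoids double counting, and shared premises only strengthen the bound.
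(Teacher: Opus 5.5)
Your proposal is correct and takes essentially the same approach as the paper. Both exhibit the event that all $n$ intended edges are proposed and pass verification and all $e$ exposed premises are formalized, observe that on this event rounding retains a complete certified derivation, and lower-bound its probability by the product via slip-channel independence. Your extra care is consistent with the paper: you read $\eta$ through the chain rule as a miss rate on proposed valid steps, and you treat the depth budget as a separate condition, which the paper also handles separately through Lemma~\ref{lem:cert-depth}.
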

The proof appears in Appendix~\ref{app:proof-coverage}. This bound formalizes the derivation-demand condition: local failures compound with path length, while correlated grounding bias is handled separately below.

\begin{lemma}[Certified-depth overhead]
\label{lem:cert-depth}
For any query admitting a certified reconstruction,
\[
\Dd(q\mid B)\le\Ddc(q\mid B)\le\Dd(q\mid B)+\drec(q),
\]
where $\drec(q)$ is the largest additional depth needed to reconstruct a sub-derivation removed by certification.
\end{lemma}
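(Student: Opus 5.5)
Both inequalities in Lemma~\ref{lem:cert-depth} will be proved by comparing the two minima in Definitions~\ref{def:depth} and~\ref{def:cert-depth}. The lower bound is immediate. $\Ddc(q\mid B)$ minimizes derivation length over the subclass of derivations of $q$ from $B$ whose edges all verify under $V$, while $\Dd(q\mid B)$ minimizes over all derivations. The hypothesis that $q$ admits a certified reconstruction makes that subclass nonempty. Hence $\Ddc$ is a finite minimum over a subset of the set defining $\Dd$, and $\Dd(q\mid B)\le\Ddc(q\mid B)$.

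For the upper bound I will construct an explicit certified derivation of depth at most $\Dd(q\mid B)+\drec(q)$. Fix a minimum-depth derivation $\pi$ of $q$, viewed as a DAG with leaves in $B$ and depth $d:=\Dd(q\mid B)$. Let $P$ be the set of edges of $\pi$ pruned by rounding, that is, those with $V(e,B)=\bot$. The construction then proceeds as follows.
\begin{itemize}
\item Keep every edge of $\pi$ outside $P$ unchanged.
\item For each $e\in P$ with conclusion (subgoal) $s_e$, splice in a certified sub-derivation $\sigma_e$ of $s_e$ from $B$ with depth $\Ddc(s_e\mid B)\le\drec(q)$.
\end{itemize}
Such a $\sigma_e$ exists because $q$ admits a certified reconstruction and $\drec(q)$ is defined as a maximum of these certified depths, so each term is finite.

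The resulting object $\pi'$ consists only of verified edges and has leaves in $B$. It is therefore a certified derivation of $q$, and $\Ddc(q\mid B)$ is at most its depth.

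The main step is bounding $\mathrm{depth}(\pi')$. The key observation is that each $\sigma_e$ starts from $B$, so it replaces the entire sub-DAG below $s_e$ rather than stacking on top of it. Consider any root-to-leaf path in $\pi'$. It follows unchanged $\pi$-edges from $q$ down to some spliced node $s_e$, contributing at most $d$ edges, and then descends inside $\sigma_e$ to $B$, contributing at most $\drec(q)$ edges. It never re-enters $\pi$ below $s_e$, because $\sigma_e$ bottoms out in $B$. The path length is therefore at most $d+\drec(q)$.

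Paths that pass through no spliced node stay entirely in $\pi$ and have length at most $d$. One subtlety remains: a path could traverse two pruned edges, one nested under the other. I will handle this by choosing the splice points to be maximal, meaning pruned edges closest to $q$, and discarding everything beneath them. With that choice, nested pruned edges are never reached, and the bound above holds.
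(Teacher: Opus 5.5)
Your proposal is correct and follows the same route as the paper's proof. The lower bound holds because certified derivations form a subclass of all derivations, and the upper bound comes from splicing a shortest certified reconstruction of each pruned subgoal into a minimum-depth derivation $\pi$. Your explicit path-length accounting, in which splices replace rather than stack on the sub-derivation below the first pruned edge, fills in what the paper compresses into ``under the recovery-depth accounting.'' Two small repairs are needed. First, the existence of each splice follows from $\drec(q)<\infty$ (the bound is vacuous otherwise), not from $q$ being certifiable. Second, in a DAG take the splice points to be the pruned edges reachable from $q$ through unpruned edges, rather than literally discarding every node beneath a maximal pruned edge, since shared nodes may still be needed elsewhere.
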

The proof appears in Appendix~\ref{app:proof-depth}.
The bound gives the additional derivation depth required after verification.

\begin{theorem}[Kernel sufficiency under closure fidelity]
\label{thm:TA}
Assume a decidable closure operator and a fixed canonical deletion order. Then (i) $\Cn(A)=\Cn(\SO)$; (ii) every closure-type query answerable from $\SO$ is answerable from $A$, with certified depth increasing by at most the re-derivation depth of the removed units; and (iii) storing any $J'\subseteq J$ does not improve the achievable rate--distortion function under closure fidelity. Hence the persistent source store needs $|A|=\PAu|\SO|$ units, a fraction $\PAu$ of the full source. For nonempty $A$, uniform messages supported on $A$ in the nonconfusable regime have an asymptotic zero-distortion rate of $\log|A|$ nats per message; uniform symbolwise-exact messages over $\SO$ require $\log|\SO|$. The two rates apply to the respective message distributions just specified.
\end{theorem}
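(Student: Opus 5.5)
The plan treats the four claims separately, starting from the defining property of the deletion scan in Definition~\ref{def:kernel}. The scan processes $\SO$ in the canonical order and deletes $s$ exactly when $s\in\Cn(\text{current set}\setminus\{s\})$. I will prove by induction over the scan that the current set $C_i$ always satisfies $\Cn(C_i)=\Cn(\SO)$. At a deletion step, $C_{i+1}=C_i\setminus\{s\}$ and $s\in\Cn(C_{i+1})$. Then $C_i\subseteq\Cn(C_{i+1})$, and monotonicity plus idempotence of $\Cn$ (Definition~\ref{def:universe}) give $\Cn(C_i)\subseteq\Cn(C_{i+1})$. The reverse inclusion is monotonicity. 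Since the final set is $A$, claim (i) follows. Decidability of $\Gamma\vdash s$ and finiteness of $\SO$ make each scan step effective. This is the only place the fixed order matters: it makes $A$, and hence $J$, well defined.

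For (ii), a closure-type query $q$ answerable from $\SO$ means $q\in\Cn(\SO)$, or its negative decision. By (i), membership in $\Cn(A)$ coincides, so the decision is unchanged. For the depth claim I take a minimum certified derivation $\pi$ of $q$ from $\SO$. Every leaf $j\in J$ in $\pi$ is replaced by a certified derivation of $j$ from $A$, which exists by (i) together with an inductive argument down the deletion order. The added height is at most the re-derivation depth of the removed units. I will phrase this through Lemma~\ref{lem:cert-depth}, treating each pruned leaf as a subgoal whose recovery depth is bounded by $\drec$.

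For (iii), I invoke Theorem~\ref{thm:bg-semrd}. Under closure fidelity, reconstructions are only judged by their closure. Since $\Cn(A\cup J')=\Cn(A)$ for any $J'\subseteq J$ (by monotonicity and (i)), adding $J'$ neither changes the fidelity of any reconstruction nor enlarges the set of achievable distortion pairs. The factorization $R_{\Sem}(D)=P_A R^{(A)}(D/P_A)$ depends on $J$ only through the mass it removes, so the rate is unchanged. The storage count $|A|=\PAu|\SO|$ is just the definition of $\PAu$.

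For the rate statements, the plan uses the nonconfusable specialization $R_{\Sem}(0)=P_A H(\pi_A)$. For messages uniform on $A$, take $P_A=1$ and $\pi_A$ uniform, which gives $\log|A|$. For symbolwise-exact uniform messages over $\SO$, the kernel trick is unavailable because each symbol must be reproduced, so the standard lossless source coding rate $H(\mathrm{Unif}(\SO))=\log|\SO|$ applies. I expect the main obstacle to be (iii) and the rate part. They rest on the background theorem's hypotheses, namely reconstructions inside $\Cn(\SO)$ and nonconfusability, and I would state these explicitly as the conditions under which the rates are claimed rather than re-derive the rate-distortion theory.
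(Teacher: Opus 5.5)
Your proposal is correct and follows the paper's proof essentially step for step. It uses the same monotonicity-plus-idempotence invariant for the deletion scan in (i), leaf replacement with depth accounted as in Lemma~\ref{lem:cert-depth} for (ii), the factorization of Theorem~\ref{thm:bg-semrd} for (iii), and the $P_A=1$, uniform-$\pi_A$ specialization set against $\log|\SO|$ for the two rates. Your added observation that $\Cn(A\cup J')=\Cn(A)$ makes explicit a justification for (iii) that the paper leaves implicit.
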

Each deletion removes only a unit derivable from the remainder,
which explains why kernelization preserves closure. The proof
appears in Appendix~\ref{app:proof-kernel}. The guarantee applies
only to queries covered by the certified source. The fraction
$1-\PAu$ measures how much symbolic mass can be removed before
serving, which the corpus audit reports as criterion C3.

\begin{proposition}[Store--recompute threshold]
\label{prop:scheduling}
Under the stationary depth--information cost model of \citet{derdepth}, with store-to-recompute rate ratio $\rho$ and effective kernel size $m_{\mathrm{eff}}$, the break-even frequency for caching a certified unit satisfies
\[
\fc(u)=\widetilde{\Theta}\!\left(\rho\log\!\left(m_{\mathrm{eff}}+\Ddc(u)\right)\right).
\]
Under a storage budget and a normalized monotone depth-reduction objective with diminishing returns, partial-enumeration greedy attains a $1-1/e$ approximation \citep{sviridenko2004}.
\end{proposition}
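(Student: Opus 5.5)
The proof has two parts: a break-even calculation for the threshold, then an appeal to submodular maximization under a knapsack constraint for the allocation guarantee.

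\textbf{Cost model for one unit.} Work inside the stationary store--compute model of \citet{derdepth}. A certified unit $u$ queried at frequency $f$ has two options.
\begin{itemize}[leftmargin=1.2em,itemsep=0pt,topsep=1pt]
  \item Recomputing on each access costs time proportional to its certified depth, $f\cdot c_{\mathrm{rec}}\,\Ddc(u)$.
  \item Caching it costs storage proportional to its description length, $c_{\mathrm{store}}\,K(u\mid\langle A\rangle)$, per unit time.
\end{itemize}
Here $\rho=c_{\mathrm{store}}/c_{\mathrm{rec}}$. Setting the two costs equal gives
\[
\fc(u)=\rho\,\frac{K(u\mid\langle A\rangle)}{\Ddc(u)}.
\]

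\textbf{Substituting depth for description length.} Invoke Theorem~\ref{thm:bg-depthinfo} with premise base $B=A$ and $m=m_{\mathrm{eff}}$. It gives $K(u\mid\langle A\rangle)=\widetilde{\Theta}\bigl(\Dd(u\mid A)\log(m_{\mathrm{eff}}+\Dd(u\mid A))\bigr)$.

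Next, replace $\Dd$ by $\Ddc$. Lemma~\ref{lem:cert-depth} gives $\Dd\le\Ddc\le\Dd+\drec$. We assume $\drec(u)=O(\Dd(u))$; otherwise the statement must be read with $\Ddc$ as the effective depth. Under that assumption the two depths agree up to constants, and the logarithms agree up to an additive constant.

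The depth factors in $K/\Ddc$ then cancel, leaving
\[
\fc(u)=\widetilde{\Theta}\bigl(\rho\log(m_{\mathrm{eff}}+\Ddc(u))\bigr).
\]
We also need the unit to be information-rich ($d\ge1$) so that Theorem~\ref{thm:bg-depthinfo} applies. Units with $d=0$ are kernel members and are never cached.

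\textbf{Allocation guarantee.} Let $g(X)$ be the expected depth reduction from caching a set $X$. The statement assumes $g$ is normalized, monotone and submodular, and the storage budget is a knapsack constraint with item weights equal to stored sizes. Sviridenko's partial-enumeration greedy \citep{sviridenko2004} then attains $(1-1/e)\,\mathrm{OPT}$: enumerate all seed sets of size at most three, then greedily add items by marginal gain per unit weight. This step is a direct citation; the only thing to check is that the objective is defined over certified units with fixed weights, which holds under stationarity.

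\textbf{Main obstacle.} The hard part is the $\widetilde{\Theta}$ bookkeeping when passing from $\Dd$ to $\Ddc$. Theorem~\ref{thm:bg-depthinfo} is stated for $\Dd$ under richness and incompressibility conditions, and certification can inflate depth by $\drec$. I would first try to absorb $\drec$ into the hidden polylogarithmic factors. If that fails, I would state the threshold in terms of $\Ddc$ directly, treating the certified fragment as the proof system to which Theorem~\ref{thm:bg-depthinfo} is applied.
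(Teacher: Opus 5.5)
Your allocation half is the paper's argument. The paper also writes the objective as $F(X)=\sum_q\nu(q)[n_q(\varnothing)-n_q(X)]$, notes $F(\varnothing)=0$, and gets monotonicity because caching cannot increase required depth. Submodularity comes from the assumed per-query diminishing returns, preserved under positive combinations, and it cites \citet{sviridenko2004} for additive storage costs.

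For the threshold, the routes differ. You re-derive the break-even from an explicit cost model (storage $\propto K(u\mid\langle A\rangle)$, recomputation $\propto$ depth) plus Theorem~\ref{thm:bg-depthinfo}. This shows the mechanism: the depth factors cancel and only the $\log(m_{\mathrm{eff}}+d)$ addressing cost remains.

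The paper does not re-derive anything. It invokes \citet[Theorem~4.1]{derdepth} directly, with baseline the current certified kernel, $m=m_{\mathrm{eff}}$, and operational depth $d=\Ddc(u)$. It imports that theorem's information-rich, generic and depth-range hypotheses for the certified setting; note these are more than $d\ge1$. It also folds certification write costs into $\alpha$, which your cost model omits. That is your fallback, and you should lead with it, because your primary conversion step does not prove the statement as given.

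Once $K$ is tied to $\Dd$ through Theorem~\ref{thm:bg-depthinfo} and recomputation to $\Ddc$, the ratio $\Dd/\Ddc$ enters $\fc=\rho K/\Ddc$ multiplicatively. So you need $\drec(u)=O(\Dd(u))$, a hypothesis the proposition does not contain. This ratio also cannot be absorbed into the $\widetilde{\Theta}$. Under the paper's bounded-rounding regime $\drec=O(m_{\mathrm{eff}}+\Dd)$, a unit with $\Dd=1$ and $\drec\approx m_{\mathrm{eff}}$ can have $\Dd/\Ddc\approx 1/m_{\mathrm{eff}}$, a polynomial factor. That regime only keeps $\log(m_{\mathrm{eff}}+\Ddc)$ within an additive constant of $\log(m_{\mathrm{eff}}+\Dd)$. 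The paper uses it only to control reconstruction overhead when applying the result, not to bridge the two depths inside the threshold. Instantiating the depth--information model on the certified fragment from the start avoids the conversion entirely.
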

The proof appears in Appendix~\ref{app:proof-scheduling}. The threshold selects cache candidates, and the allocation rule selects a subset within the storage budget. The logarithmic scale is inherited from the cited depth--information model, while the allocation guarantee additionally requires the stated submodularity condition.
Kernelization removes units that do not change closure; caching stores selected consequences that do change runtime cost, deep or frequently used ones first.

\begin{corollary}[Reliability under premise erasure]
\label{cor:erasure}
Under the fixed-cache availability model of Definition~\ref{def:erasure}, independent premise erasure at rate $0<\varepsilon<1$, and a canonical-witness derivation DAG, a query with $\ell$ residually exposed leaves remains recoverable with probability $(1-\varepsilon)^\ell$. Caching a valid chain segment improves this probability exactly when it removes at least one leaf from the residual exposed set.
\end{corollary}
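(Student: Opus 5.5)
The plan is to reduce the statement to the witness-DAG structure of Definition~\ref{def:erasure} and then count independent events. Fix the query $q$, the canonical witness DAG $W$, and the cache $X\subseteq\Cn(B)$. By assumption, $X$ stays available throughout the erasure trial. Recovery uses only $W$, and each inference in $W$ needs all of its designated parents.

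The first step is a combinatorial characterization. I will show that $q$ is recoverable if and only if every leaf in $\Dexp(q;X)$ survives. Call a node of $W$ \emph{available} if it is a surviving leaf, a member of $X$, or an inference node all of whose parents are available. This is well defined by induction along a topological order.

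\begin{itemize}
\item For sufficiency, suppose every residually exposed leaf survives. I will prove by induction on the topological order that every node $v$ lying on some path to $q$ that avoids $X$ strictly before reaching $v$ is available.
\item If $v$ is in $X$, it is available by assumption.
\item If $v$ is a leaf, it is residually exposed, hence it survives.
\item Otherwise, each parent of $v$ either lies in $X$ or again lies on such a path, and the induction hypothesis applies. Taking $v=q$ gives recoverability.
\item For necessity, take an exposed leaf $x$ that is erased, and a path $x=v_0\to\dots\to v_k=q$ that avoids $X$. Induct along the path: each $v_i$ is unavailable, because it is neither in $X$ nor a surviving leaf, and it has the unavailable designated parent $v_{i-1}$. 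Since only $W$ may be used, $q$ is unrecoverable.
\end{itemize}

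The second step is probabilistic. Leaves are erased independently at rate $\varepsilon$, and the event ``all $\ell$ leaves of $\Dexp(q;X)$ survive'' depends only on those leaves. Its probability is therefore $(1-\varepsilon)^\ell$. Leaves outside the exposed set are irrelevant by the characterization above.

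The third step handles the caching claim. Adding a valid chain segment $Y\subseteq\Cn(B)$ gives the cache $X\cup Y$. Any path avoiding $X\cup Y$ also avoids $X$, so $\Dexp(q;X\cup Y)\subseteq\Dexp(q;X)$ and $\ell'\le\ell$. Because $0<\varepsilon<1$, the map $\ell\mapsto(1-\varepsilon)^\ell$ is strictly decreasing. The probability therefore strictly improves exactly when $\ell'<\ell$, that is, exactly when at least one leaf is removed from the exposed set.

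The main obstacle is the necessity direction of the first step. It relies on the restriction to the fixed witness $W$: alternative derivations in $\Cn(B)$ could otherwise bypass an erased leaf. I will state this restriction explicitly and invoke Definition~\ref{def:erasure} at that point.
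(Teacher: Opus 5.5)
Your proposal is correct and takes essentially the same route as the paper. Both arguments characterize recovery as the event that every leaf of $\Dexp(q;X)$ survives, using the cache-free failure-path argument; independence then gives $(1-\varepsilon)^\ell$, and the caching claim follows from $\Dexp(q;X\cup Y)\subseteq\Dexp(q;X)$ together with strict monotonicity of $\ell\mapsto(1-\varepsilon)^\ell$. The one difference is that you prove the characterization directly, with a forward topological induction for sufficiency, whereas the paper traces a failed uncached node backwards and imports the failure-path lemma from the companion caching paper.
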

The proof appears in Appendix~\ref{app:proof-erasure}.

The three accuracy conditions concern derivation demand (Proposition~\ref{prop:coverage}), certified-source coverage (Theorem~\ref{thm:TA}), and slip- versus bias-dominated grounding (Lemma~\ref{lem:voting-floor}). Separately, the corpus audit uses criteria C1--C4. On the $14{,}414$-unit AFP corpus and $503$-unit regulatory corpus, C1, C2, and C4 pass. For C3, bounded $k$-step rederivation saturates at $37.1\%$ and $0.60\%$ for $k\le4$, below $\tau=0.5$ (Table~\ref{tab:kstep}). These measurements favor retaining a larger source store. In the regulatory corpus, redundancy is concentrated in the rule layer, and the composition law matches the observed corpus-wide value within half a point.


\section{Proofs and Auxiliary Results}
\label{app:proofs}

This appendix proves the results in Section~\ref{sec:theory} and Appendix~\ref{app:resource-guarantees}. Auxiliary results are stated before their proofs. Appendix~\ref{app:companion-proofs} lists the background results used in these arguments.

\subsection{Standing Assumptions}

The following conditions collect the notation used by the auxiliary
arguments. Each result specifies which conditions it requires.

\begin{assumption}
\label{ass:standing}

\begin{enumerate}[label=S\arabic*.]
  \item \textbf{Decidable fragment with a sound and complete verifier.}
        The core's fragment satisfies Assumption~\ref{ass:finite-step},
        and $V(e,\Gamma)=\top$ iff $e$ is a valid inference step over
        $\Gamma$ within the fragment.
  \item \textbf{Fixed canonical order.} The canonical scan order of
        Definition~\ref{def:kernel} is public and fixed; kernel
        extraction is deterministic.
  \item \textbf{Channel independence.} The bridge channel errors
        (proposal failure rate $\mu$, verification miss rate $\eta$,
        formalization failure rate $\epsfr$) are independent across
        derivation steps and exposed premises. For the
        formalization channel this applies to the slip component only;
        the interface-determined bias component is perfectly correlated
        across traces and is modeled separately
        (Definition~\ref{def:bias-slip}, Lemma~\ref{lem:voting-floor}).
  \item \textbf{Stationary amortized cost model.} Stationary request
        stream; storage rate $\alpha$; recomputation rate $\beta$;
        $\rho=\alpha/\beta$; storage cost $\sigma$ normalized to one
        step-equivalent per unit.
  \item \textbf{Canonical-witness erasure regime.} Designated-parent map,
        hereditary derivation DAG (Definition~\ref{def:erasure});
        premise erasures i.i.d.\ with rate $\varepsilon$.
\end{enumerate}
\end{assumption}

\subsection{Certified-State Soundness}
\label{app:proof-state}

\begin{proof}[Proof of Theorem~\ref{thm:state-soundness}]
Write $\Gamma_t=A_t\cup\mathcal R_t$. We induct on completed state
transitions; served consequences are evaluated against the state at
their serving time. Initialization supplies source-support records for
every kernel unit and an empty cache, so all three claims hold.

\emph{Source admission.} Rejected proposals change no certified state.
An accepted unit is type-checked and has a logged source-support
record. The deletion scan retains a subset of these source-checked
units, hence each new kernel premise still has its record. A newly supported premise may add information beyond the previous kernel. Kernel changes trigger cache
revalidation. Each retained entry has a complete verified witness
over $\Gamma_{t+1}$ of depth at most $\delta$; all other entries are
evicted. Soundness of the fragment verifier proves its conclusion is
in $\Cn(\Gamma_{t+1})$, and the minimum certified depth is no greater
than the depth of this witness. Thus (i) and (ii) are preserved.

\emph{Derived admission and queries.} The gate requires a complete
acyclic witness rooted at the proposed conclusion, with source leaves
in $\Gamma_t$. Induction over its topologically ordered, verified
inference steps gives $u\in\Cn(\Gamma_t)$. Its checked witness depth
$d\le\delta$ implies $\Ddc(u\mid\Gamma_t)\le d\le\delta$.
Serving or caching this unit changes no source premise. A cache hit
reuses a certificate validated for the current kernel version, so the
same argument applies. A rejected or uncertified fallback answer is
outside the certified serving channel.

\emph{Erasure and eviction.} Erasing a source identifier removes it
from the active certified source set and cascade-evicts every entry
whose transitive certificate dependencies include that identifier.
This proves (iii), including dependencies through cached intermediate
results. Recomputing the kernel and revalidating any surviving cache
entries preserves (i) and (ii) by the source-admission argument.
Cache eviction removes an entry and any entries depending on it;
source eviction follows the erasure rule. Neither operation admits
unsupported content. Frequency updates change no certified content.
These cases exhaust the transitions and complete the induction.
\end{proof}

\paragraph{Applying the coverage and voting results.}
For a query whose reconstruction fits the serving budget,
Proposition~\ref{prop:coverage} and Lemma~\ref{lem:cert-depth}
supply the coverage and depth guarantees. Voting is governed
separately by Lemma~\ref{lem:voting-floor}. 

\subsection{Certified Rounding as a Deductive Source}
\label{app:proof-rounding}

\begin{proof}[Proof of Lemma~\ref{lem:rounding-source}]
$\Scert$ collects the conclusions of extracted derivation DAGs all
of whose edges verify. It is finite, since the carrier trace is
finite, and effectively listable, since edge verification is
decidable. Its nonemptiness permits the empirical distribution
$\Pcert$ to be normalized. Soundness of $V$ gives
$\Scert\subseteq\Cn(B)$. Thus the finite, effectively listable
source with its empirical distribution satisfies
Definition~\ref{def:dedsource}.
\end{proof}

\subsection{Certified Coverage}
\label{app:proof-coverage}

\begin{proof}[Proof of Proposition~\ref{prop:coverage}]
Work on the bias-free query mass specified in the proposition.
For the intended derivation to survive rounding, it suffices that
each of its $n$ steps be proposed, each of those steps pass
verification, and each of its $e$ exposed premises be formalized.
The respective probabilities are at least $(1-\mu)^n$,
$(1-\eta)^n$, and $(1-\epsfr)^e$. Independence in the slip channel
makes the probability of their intersection at least their product.
On this event, rounding retains a complete certified derivation,
which proves the claimed lower bound.
\end{proof}

\subsection{Certified-Depth Overhead}
\label{app:proof-depth}

\begin{proof}[Proof of Lemma~\ref{lem:cert-depth}]
Certified derivations form a subset of all derivations, giving
the lower bound. For the upper bound, let $\pi$ attain
$\Dd(q\mid B)$. Restore each subgoal removed by rounding using
its shortest certified reconstruction from $B$. Under the
recovery-depth accounting of Definition~\ref{def:cert-depth},
this adds at most $\drec(q)$ to the derivation depth. The resulting
derivation is certified, giving the upper bound.
\end{proof}

\subsection{Voting Under Shared Bias and Independent Slips}
\label{app:proof-voting}

\begin{proof}[Proof of Lemma~\ref{lem:voting-floor}]
Condition on a query in the shared-error regime, whose total mass is $\bans$. By the lemma's explicit assumption, indicators that a trace certifies the common wrong answer are conditionally independent with means at least $1-\slipc>1/2$.
Hoeffding's inequality \cite{hoeffding} bounds the probability of
failing to obtain a strict majority of that answer by
$e^{-2k(1/2-\slipc)^2}$. Multiplication by $\bans$ gives the lower
bound in the lemma.

For a query outside that regime, let $X_i$ indicate that trace $i$
certifies the unique correct answer. By the lemma's assumptions,
the $X_i$ are independent with means at least $1/2+\delta_0$.
Failure to obtain a strict correct majority is contained in
$\{\sum_i X_i\le k/2\}$, whose probability is at most
$e^{-2k\delta_0^2}$ by Hoeffding's inequality. Adding the
worst-case contribution from the biased query mass gives the
stated upper bound after averaging over queries. A missing strict majority yields abstention, including ties. The lower concentration bound requires the stated conditional independence within the shared-error regime.
\end{proof}

\subsection{Kernel Sufficiency and Storage}
\label{app:proof-kernel}

\begin{proof}[Proof of Theorem~\ref{thm:TA}]
\emph{Part (i): closure preservation.}
The deletion scan processes $S_O$ in the canonical order and removes $j$
only when $j\in\Cn(\Gamma\setminus\{j\})$ for the current set $\Gamma$.
We claim $\Cn(\Gamma)$ is invariant under each removal. Inclusion
$\Cn(\Gamma\setminus\{j\})\subseteq\Cn(\Gamma)$ is monotonicity.
Conversely, $j\in\Cn(\Gamma\setminus\{j\})$ gives
$\Gamma\subseteq\Cn(\Gamma\setminus\{j\})$, whence
$\Cn(\Gamma)\subseteq\Cn(\Cn(\Gamma\setminus\{j\}))
=\Cn(\Gamma\setminus\{j\})$ by monotonicity and idempotence of $\Cn$
(Definition~\ref{def:universe}). Induction over the scan terminates at
$A$, so $\Cn(A)=\Cn(S_O)$.

\emph{Part (ii): query preservation.}
Closure equality gives the same closure-type answers from $A$
and $S_O$. A derivation using a removed unit can be reconstructed
by replacing that unit with its derivation from the retained
kernel. The depth increase is accounted for by the re-derivation
depth of the removed units involved, as in
Lemma~\ref{lem:cert-depth}.

\emph{Part (iii): rate--distortion invariance.}
The closure-fidelity factorization in
Theorem~\ref{thm:bg-semrd}, imported from \citet[Theorem~5.1]{semrd}, makes the redundant part
invisible to rate and distortion. Adding $J'\subseteq J$ to the
stored kernel therefore leaves the achievable rate--distortion
function unchanged.

\emph{Storage count and the per-message specialization.}
The unit-count identity follows from
$\PAu=|A|/|S_O|$ in Definition~\ref{def:kernel}. It concerns the
physical number of persistent units. For the final specialization, condition messages on $A$ and take
the uniform law there. Then $P_A=1$ and $H(\pi_A)=\log|A|$;
Theorem~\ref{thm:bg-semrd} gives the asserted asymptotic rate.
Uniform symbolwise-exact coding on $\SO$ has entropy $\log|\SO|$.
This calculation compares different source laws. For the original
source law, the nonconfusable rate remains $P_AH(\pi_A)$, as in \citet[Theorem~3.1]{semrd}.
\end{proof}

\begin{remark}[Scope of the kernel guarantee]
The unit-count consequence of Theorem~\ref{thm:TA} and its final
per-message claim concern different resources. The former counts
persistent units, while the latter uses a uniform distribution on the nonconfusable core alphabet.
\end{remark}

\begin{corollary}[Redundancy in mixed corpora]
\label{cor:composition}
Let a mixed corpus decompose as $U=R\uplus F$, where $R$ is a rule
layer (safe Horn clauses with non-empty bodies) and $F$ an instance
layer of ground units each carrying at least one constant that occurs
in no other unit of $U$ (case-individual constants), with derivability
that of the certified fragment ($\theta$-subsumption between rule
units; forward chaining on ground units). Then every $f\in F$ is
essential, the redundancy status of every rule unit is decided entirely
within $R$, and the uniform-reading redundancy fraction obeys the
composition law
\[
1-\PAu(U)=\frac{|R|}{|U|}\,\bigl(1-\PAu(R)\bigr).
\]
\end{corollary}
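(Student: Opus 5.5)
We have to prove Corollary~\ref{cor:composition} in three parts: every instance unit $f\in F$ is essential, the redundancy status of a rule unit depends only on $R$, and the composition law follows by counting. Throughout, ``redundant'' means $u\in\Cn(U\setminus\{u\})$, computed in the certified fragment: $\theta$-subsumption between rule units and forward chaining on ground units. We will also assume the order-robustness condition of Definition~\ref{def:kernel}, so that the kernel is $A=\Ess(U)$ and $1-\PAu(U)=|U\setminus\Ess(U)|/|U|$. This assumption should be stated explicitly, or justified as follows: in this fragment the essential set is closed under the scan, so the scan outcome coincides with $\Ess$.

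\emph{Step 1: instance units are essential.} Fix $f\in F$, and let $c$ be a constant of $f$ that occurs in no other unit of $U$. Suppose $f\in\Cn(U\setminus\{f\})$. Then $f$ is either a ground unit in $U\setminus\{f\}$, which is impossible since $c$ occurs nowhere else, or it is the head of a forward-chaining step.

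For the second case we use safety. Every head variable of a safe rule occurs in the body, so the constants in a derived atom come either from the rule text or from the substitution. Substitution constants are drawn from previously derived ground atoms. By induction on forward-chaining stages, every ground atom derivable from $U\setminus\{f\}$ contains only constants that occur in $U\setminus\{f\}$. Since $c$ does not occur there, $f$ is underivable, and so $f$ is essential.

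\emph{Step 2: rule redundancy is decided within $R$.} Since derivability between rule units is $\theta$-subsumption, a rule $r$ is redundant in $U$ iff some other rule unit $\theta$-subsumes it. Ground units cannot subsume a rule with a nonempty body under this reading. The main obstacle is exactly this point: justifying that ground facts cannot contribute to deriving a rule unit. The first approach I would try is to take this as part of the fragment's stated derivability relation, as the corollary's hypothesis indicates. If a semantic entailment reading were intended instead, I would note that a fact could make a rule trivially entailed only through its head or body atoms. Those atoms contain variables, not case-individual constants, so such entailment can fail when ground instances are absent; I would therefore keep the syntactic reading. Under it, $r\in\Cn(U\setminus\{r\})$ iff $r\in\Cn(R\setminus\{r\})$, and hence $\Ess(U)\cap R=\Ess(R)$.

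\emph{Step 3: counting.} From Steps~1--2, $U\setminus\Ess(U)=R\setminus\Ess(R)$. Dividing by $|U|$ gives
\[
1-\PAu(U)=\frac{|R\setminus\Ess(R)|}{|U|}=\frac{|R|}{|U|}\cdot\frac{|R\setminus\Ess(R)|}{|R|}=\frac{|R|}{|U|}\bigl(1-\PAu(R)\bigr).
\]
This holds for nonempty $R$; if $R$ is empty, both sides are $0$.
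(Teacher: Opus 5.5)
Your argument is essentially the paper's: the case-unique constant together with safety and induction on forward-chaining stages makes every $f\in F$ essential, pairwise $\theta$-subsumption between rule units confines rule redundancy to $R$, and the law follows by counting. You are right that this counts $\Ess$ rather than the scan kernel $A$ that defines $\PAu$; the paper's proof makes the same silent identification. However, your fallback justification that the scan outcome coincides with $\Ess$ is false: two mutually $\theta$-subsuming rules, such as $\alpha$-variants, are both non-essential while the scan keeps one. The order-robustness assumption is also unnecessary. By monotonicity of $\Cn$ no fact is ever deleted during the scan, and each rule's deletion test depends only on the surviving rules, so the scan of $U$ restricted to $R$ is exactly the scan of $R$ under the induced order, which gives $|A(U)|=|A(R)|+|F|$ directly.
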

\begin{proof}[Proof of Corollary~\ref{cor:composition}]
$\theta$-subsumption compares clauses pairwise, so no ground unit can
declare a rule unit redundant; rule redundancy is therefore computed
within $R$ alone. A ground unit $f$ is derivable only via a rule
instance whose body units are all present; by safety, some body unit of
any such instance contains $f$'s head constants, hence contains the
case-unique constant. Only $f$ mentions that constant. Since every rule has a non-empty body, induction on derivation length shows that no ground unit mentioning it is derivable without $f$. Hence
$f\in\Ess(U)$, $|\Ess(U)|=|\Ess(R)|+|F|$, and the displayed law follows
by counting.
\end{proof}
\begin{remark}[Applicability of the case-uniqueness premise]
The case-individual-constant premise holds for judgment-style corpora,
where each instance introduces fresh party constants. On multi-case
collections with shared parties or corporations it must be
re-established per constant class (e.g., by namespacing case-local
identifiers); otherwise ground facts about a shared entity can become
derivable and leave the essential core.
\end{remark}

\begin{remark}[Order sensitivity of the kernel]
\label{rem:order-sensitivity}
Theorem~\ref{thm:TA} uses a fixed canonical deletion order;
its closure-preservation argument does not require order robustness.
Equality with the order-robust essential set is a separate property
of Definition~\ref{def:kernel}. Appendix~\ref{app:protocols}
reports a gap of $0$ units on the regulatory corpus and
$|A\triangle\Ess(S_O)|=2{,}473$ ($17\%$) on the Isabelle corpus.
The theorem's final per-message claim retains its explicit
nonconfusability qualification.
\end{remark}

\begin{remark}[C3 applies to the rule layer]
\label{rem:c3-rule-layer}
Corollary~\ref{cor:composition} implies that applying C3 to raw mixed
corpora scales rule redundancy by $|R|/|U|$. Under the case-uniqueness assumption, the instance facts are essential. The rule-layer C3 measurement therefore applies to the units extracted during \textnormal{\textsc{Deploy}}
(Algorithm~\ref{alg:deploy}, stages D2--D3). The mixed-corpus audit of
Appendix~\ref{subsec:legal} confirms the law. The measured
overall redundancy $2.2\%$ matches the composition prediction
(rule-layer redundancy $13.8\%$ $\times$ rule-layer share $14.7\%$
$\approx 2.0\%$) within half a percentage point, and the rule-layer
figure is corroborated by a certified re-measurement ($12.87\%$
$[9.1,16.6]$, gated formalization plus $\theta$-subsumption matching).
\end{remark}

\subsection{Store--Recompute Threshold and Allocation}
\label{app:proof-scheduling}

\begin{proof}[Proof of Proposition~\ref{prop:scheduling}]
Use \citet[Theorem~4.1]{derdepth} with baseline equal to the
current certified kernel, $m=m_{\mathrm{eff}}$, and operational depth
$d=\Ddc(u)$. The certified cost model uses that theorem's information-rich, generic regime and depth-range assumptions in addition to stationarity. Comparing amortized
storage cost with recomputation gives the stated threshold, with
certification write costs included in the storage rate $\alpha$.

For allocation, write the expected depth reduction as
$F(X)=\sum_q\nu(q)[n_q(\varnothing)-n_q(X)]$.
Then $F(\varnothing)=0$, and adding cached units cannot increase
the required depth. The assumed diminishing-returns property
makes each summand submodular; positive linear combinations
preserve submodularity. Applying the cited partial-enumeration
greedy guarantee to the resulting budgeted objective gives the
claimed approximation ratio for additive storage costs \cite{sviridenko2004}.
\end{proof}

\paragraph{Application to the persistent core.}
Use the current base size for $m_{\mathrm{eff}}$ and the certified
depth of the candidate unit. In the bounded-rounding regime
$\drec(u)=O(m_{\mathrm{eff}}+\Dd(u\mid B))$,
Lemma~\ref{lem:cert-depth} controls reconstruction overhead.
Proposition~\ref{prop:scheduling} then supplies the candidate threshold and the budgeted allocation rule.

\subsection{Reliability Under Premise Erasure}
\label{app:proof-erasure}

\begin{proof}[Proof of Corollary~\ref{cor:erasure}]
Fix the cache before erasure. In the witness DAG, recovery fails
iff a missing leaf has a cache-free path to the root: tracing a failed
uncached node backwards finds a failed parent and eventually such a
leaf; conversely, a missing leaf propagates failure along any uncached
path. This is the local failure-path argument of
\citet[Lemma~3.1 and Theorem~3.1]{pvcache}. Thus recovery is precisely
the event that every leaf in the fixed set $\Dexp(q;X)$ survives.
Independence gives $(1-\varepsilon)^{|\Dexp(q;X)|}$. Adding a cache
segment cannot create an exposed leaf. Since $0<1-\varepsilon<1$,
the probability strictly increases iff that set shrinks.
\end{proof}

\paragraph{Scope of the cache model.}
Corollary~\ref{cor:erasure} models storage loss while the cache survives and source support remains valid. Withdrawing source support invalidates dependent certificates under Theorem~\ref{thm:state-soundness}, so recovery in that setting requires a separate model. Theorem~\ref{thm:bg-modules} states the shared-module and complexity results.

\subsection{Dependencies on Background Results}
\label{app:companion-proofs}

\begin{remark}[Imported ingredients]
\label{rem:external-deps}
The deletion-scan argument is given in
Appendix~\ref{app:proof-kernel}. The rate--distortion and
nonconfusable per-message ingredients retain their dependence
on \citet{semrd}; the scheduling scale retains its dependence
on \citet{derdepth}; and the residual-leaf characterization and
module results retain their dependence on \citet{pvcache,pvbench}.
The concentration step uses \citet{hoeffding}, and allocation
uses the knapsack-constrained submodular result \cite{sviridenko2004}.
\end{remark}

\section{Measurement Protocols and Domain Instances}
\label{app:protocols}

\textbf{The core as a state machine.}
\label{def:core-sm}
The active certified source set $S_t$ is indexed by source records;
the runtime state contains its kernel facts $A_t$, rules $\mathcal R_t$,
certified cache, certificate log, and frequency estimates. $\Admit(u)$
uses the source mode of Algorithm~\ref{alg:gate} and recomputes the
kernel. $\Query(q)$ checks the derived-mode gate or reuses a certificate
validated for the current kernel version. $\Erase(\ell)$ removes the
source record and all cache entries with that transitive dependency;
the kernel is recomputed and surviving cache certificates revalidated
within $\delta$. $\Evict(u)$ removes a cache entry and its dependents
(source eviction follows $\Erase$); $\Tick$ updates frequencies only.
These transitions satisfy Theorem~\ref{thm:state-soundness}, proved in
Appendix~\ref{app:proof-state}. The insertion guard additionally enforces
$\sigma(\mathrm{Cache}_t)\le\mathcal B$; eviction only decreases this
storage. The state can both gain and lose units.

\subsection{A Served Query, End to End}
\label{subsec:boxed}
Algorithms~\ref{alg:deploy}--\ref{alg:serve}
describe the pipeline. The example below follows one L1-HARD query (family \texttt{prelim495x}; story, grounding, and
verdict taken from the stored logs). Both grounding votes return the
same fact set. The three-rule witness below is reconstructed from those
facts and the fixed rules; each step can be checked against the rule set.

\begin{figure}[!htbp]
\centering
\fbox{\begin{minipage}{0.97\columnwidth}
\footnotesize
\textbf{Query} (story, abridged): Chen and Zhou sign a subscription
form (\emph{ren'gou shu}) agreeing to conclude a formal service
contract within thirty days; at the deadline Chen refuses without
cause. \emph{May Zhou claim damages for breach of the preliminary
contract?}

\smallskip
\textbf{Grounding} (bridge, two separately sampled votes; identical
both times):
\begin{itemize}[leftmargin=1.1em,itemsep=0pt,topsep=1pt]
\item[] \texttt{subscription\_order\_or\_booking\_form(x1)}
\item[] \texttt{agrees\_future\_conclusion(x1)}
\item[] \texttt{determinable\_parties\_and\_subject(x1)}
\item[] \texttt{refuses\_to\_conclude\_main\_contract(p1,x1)}
\end{itemize}

\smallskip
\textbf{Gate}: vote $1 ={}$ vote $2$ permits source and derivation
checking; the supported facts and verified steps below permit
certification. Disagreement produces an explicit abstention.

\smallskip
\textbf{Reconstructed certificate} (each step checkable against the
fixed rule set):
\begin{enumerate}[leftmargin=1.4em,itemsep=0pt,topsep=1pt]
\item \texttt{is\_preliminary\_contract(x1)} $\leftarrow$ facts
      1--3 \hfill [Judicial Interp., art.~6]
\item \texttt{fails\_conclusion\_duty(p1,x1)} $\leftarrow$ step 1 +
      fact 4 \hfill [Judicial Interp., art.~7]
\item \texttt{may\_claim\_preliminary\_breach\_liability(p1,x1)}
      $\leftarrow$ steps 1 + 2 \hfill [Civil Code, art.~495]
\end{enumerate}

\smallskip
\textbf{Served answer}: yes (\emph{gold: yes}); derivation depth $3$
$\le$ budget. This record contains no cache event.
\textbf{Verbalization} (carrier, non-critical): ``Yes. The
subscription form constitutes a preliminary contract under Civil Code
art.~495, and Chen's unjustified refusal breaches it.''
\end{minipage}}
\caption{One L1-HARD query with a witness reconstructed from recorded
groundings and fixed rules. Identifiers are shortened. The witness
was reconstructed offline from the facts and outputs stored by the task adapter.}
\label{fig:boxed}
\end{figure}

\begin{algorithm}[!htbp]
\caption{\textsc{Deploy}: theory-grounded instance calibration}
\label{alg:deploy}

\begin{algorithmic}[1]
\Require formalized corpus $S_O$ with rule set $\mathcal R$; checker
         $V$; canonical order $\prec$; query log or surrogate $P_O$;
         hardware cost ratio $\rho$
\Ensure CPUNeSy instance $(A,\delta,\fc,\mathcal B)$; eligibility
        verdict against $\tau$
\State \textbf{D1 Fragment Validation}: re-check every unit and rule
       with $V$; estimate $\hat\epsfr$
\State \textbf{D2/D3 Kernel Extraction}: deletion scan over $\prec$
       with the certified matcher (Alg.~\ref{alg:kernel}, lines 2--3)
\State \textbf{D2/D3} (cont.): order-sensitivity report
       $|A\triangle\Ess(S_O)|$
\State \textbf{D4 Criterion Measurement}: estimate $\widehat{1-P_A}$
       (uniform and weighted, with CIs); verdict vs.\ $\tau$
\State \textbf{D5--D7 Instance Configuration}: $\delta\gets$ 95th
       $P_O$-percentile of $\widehat F_{\Dd}$
\State \textbf{D5--D7} (cont.):
       $\fc(u)\gets\rho\log\big(m_{\mathrm{eff}}+\Ddc(u\mid B)\big)$
       for all $u$
\State \textbf{D5--D7} (cont.): $\mathcal B$ sized around $|A|$; check
       the diminishing-returns condition on the dependency DAG
\State \Return instance configuration and criterion verdicts C1--C4
\end{algorithmic}
\end{algorithm}

\begin{algorithm}[!htbp]
\caption{\textsc{Serve}: online certified query serving}
\label{alg:serve}

\begin{algorithmic}[1]
\Require running core $(A_t,\mathcal R_t,\mathrm{Cache}_t)$; query $q$;
         $k$ carrier traces; depth budget $\delta$
\Ensure certified answer, or explicit abstention with any fallback flagged uncertified
\State \textbf{S1 Grounding Bridge}: ground the traces via interface
       $\mathcal I$; decline if the required agreement fails
\State \textbf{S2 Source Check}: check agreed query facts against their
       source evidence using \textsc{Source} mode of Algorithm~\ref{alg:gate};
       scope these facts to the query
\State \textbf{S3 Derivation Check}: use \textsc{Derived} mode to check
       a complete witness for $q$ over the current checked premises
\If{the gate returns \texttt{CERTIFY}}
  \State \textbf{S4}: serve the certified answer
  \State \textbf{S5 Schedule}: consider its witness for caching when
       $f(q)>\fc(q)$
  \State \textbf{S6 Allocate}: select among eligible witnesses under
       $\sigma(X)\le\mathcal B$ by partial-enumeration greedy
\Else
  \State decline explicitly; label any optional fallback uncertified
\EndIf
\State \textbf{S7 Premise Amendment}: on source withdrawal or expiry of
       query-scoped facts, cascade-evict dependents, recompute the kernel,
       and revalidate surviving cache certificates
\end{algorithmic}
\end{algorithm}

\subsection{Measurement Protocol}
\label{subsec:pipeline}
Both domains use the batch kernel-extraction procedure in Algorithm~\ref{alg:kernel}.

\begin{algorithm}[!htbp]
\caption{Batch kernel extraction and criterion measurement}
\label{alg:kernel}
\begin{algorithmic}[1]
\Require formalized corpus $S_O$ (units + rule set $\mathcal R$);
         canonical order $\prec$; checker $V$; query log or surrogate
         frequency model $P_O$
\Ensure kernel $A$, estimates $\widehat{1-P_A}$, depth distribution
        $\widehat F_{\Dd}$, criterion verdicts C1--C4
\State \textbf{Validate fragment} (C1): re-check every corpus edge with
       $V$; report the fraction of units that fail verification
       (formalization-failure estimate $\hat\epsfr$)
\State \textbf{Deletion scan}: process $S_O$ in order $\prec$; mark $s$
       redundant iff $s\in\Cn(\Gamma\setminus\{s\})$ for the surviving
       set $\Gamma$ (Definition~\ref{def:kernel})
\State $A\gets$ surviving set; $J\gets S_O\setminus A$
\State \textbf{Order-robustness check}: compute
       $\Ess(S_O)=\{s: s\notin\Cn(S_O\setminus\{s\})\}$; report the
       symmetric difference $|A\mathbin{\triangle}\Ess(S_O)|$
\State \textbf{Redundancy estimate} (C3): $\widehat{1-P_A}$ under both
       readings (uniform, $1-|A|/|S_O|$, and frequency-weighted,
       $1-P_O(A)$), with Wilson 95\% CIs over a bootstrap of corpus
       slices
\State \textbf{Depth distribution} (C4): compute $\Dd(s\mid A)$ for all
       $s\in\Cn(A)$ within a cap $\delta_{\max}$ via the predecessor
       operator (Definition~\ref{def:depth}); report quantiles and the
       budget $\delta$ covering 95\% of $P_O$-mass
\State \textbf{Frequency span}: verify that the query-frequency
       distribution straddles $\fc$ over the observed depth range
\State \Return $(A, \widehat{1-P_A}, \widehat F_{\Dd}, \delta)$ and the
       four verdicts
\end{algorithmic}
\end{algorithm}

\subsection{Domain Instance 1: Formal Mathematics}
\label{subsec:math}
\textbf{Corpus.} A fixed snapshot of the Isabelle/HOL standard library
plus selected Archive of Formal Proofs (AFP) entries (alternative:
Lean4 mathlib). Statements (theorems, definitions) become predicate
units; proof-term dependencies become rule edges: lemma $\varphi$ proved
from lemmas $\psi_1,\dots,\psi_k$ by a named inference rule yields the
Horn clause $\psi_1\wedge\cdots\wedge\psi_k\to\varphi$. The fragment is
function-free Horn over a finite statement universe: C1 holds by
construction. The source libraries were checked by their proof kernels. Our executor performs closure queries on their recorded dependency DAGs.

\textbf{C2 (closure fidelity).} Premise selection, theorem retrieval, and automated proof search depend on entailment. Equivalent formulations of a lemma are interchangeable for this criterion. C2 holds.

\textbf{C3 (redundancy).} Proof libraries contain substantial derivable
material (corollaries, specialized instances, alias lemmas). The scan of
Algorithm~\ref{alg:kernel} is run per AFP session with the library order
as $\prec$; $P_O$ is taken (a) uniform and (b) proportional to
downstream dependency counts (how often a lemma is used as a premise),
the latter approximating the service distribution. The protocol hypothesized $1-P_A\ge 0.6$ under both readings. A first kernel-verified
run on a 14-entry subset finds $1-\PAu\approx20\%$ (uniform) and
$46.7\%$ with a $\tau$-crossing interval (weighted). See
Table~\ref{tab:kstep}.

\textbf{C4 (depth and frequency).} Derivation depth is the shortest
premise-chain length in the dependency DAG; the frequency surrogate is
the dependency count. AFP-scale graphs have shallow depth
(order $10$--$10^2$) and heavy-tailed usage, so a budget $\delta$ at the
95th percentile is affordable and query frequencies span several orders
of magnitude around $\fc$. The kernel-level run supports C4, with shallow depth (quantiles $0/2/3/4$, max $10$) and heavy-tailed
usage (frequency quantiles $1/4/82$) on real proof-term dependencies.


\subsection{Domain Instance 2: Regulatory / Legal Compliance Knowledge Base}
\label{subsec:legal}
The regulatory corpus merges the Chinese Contract Book with the full
SPC Interpretation on the General Provisions of the Contract Part (Fa
Shi [2023] No.~13). It contains $503$ Horn units over $200$ articles, with no unsafe rules and $110$ source predicates requiring interpretation of the text. C1/C2 hold. Repeated provisions account for $3$ certified redundancy edges (interpretation arts.~6/8 vs.\ statute
art.~495; art.~53 vs.\ art.~565), giving $1-\PAu=0.60\%$ (Wilson
$[0.20,1.74]\%$), weighted $1.24\%$, schema-alias channel $0.60\%$
(combined $1.19\%$). The merged dependency DAG has $140$ internal edges. Depth quantiles remain $0/1/1/1$ (max $2$), so the additional edges increase branching while chain length stays short. \textbf{C3 fails}
($0.60\%\ll\tau$), and bounded multi-step derivation finds zero additional redundancy at $k\le4$ (Appendix~\ref{subsec:kstep}). \textbf{Mixed-corpus audit (protocol-specified
P-M1--P-M3).} Adding $200$ civil judgments ($15{,}859$ sentence-level
units) shows text-level deduplication removes $6.8\%$ while logical kernel extraction removes a further $2.2$ points, $91.7\%$ of it
instance/weakening redundancy invisible to text and concentrated on the
rule layer ($13.1$--$15.4\%$). The measured reduction falls below P-M2's predicted $\ge10$ points corpus-wide. The composition law (Corollary~\ref{cor:composition}) accounts for this result. Case-specific facts make up $77\%$ of the corpus and are essential under the stated representation assumptions. Corpus-wide savings therefore depend on both rule-layer redundancy and the share of rules in the corpus. We report C3 for the rule layer as well as the full corpus.

\subsection{Setting the Threshold \texorpdfstring{$\tau$}{tau}}
\label{subsec:tau}
The internal protocol sets the threshold using the storage ratio in Theorem~\ref{thm:TA}. By the unit-count consequence of Theorem~\ref{thm:TA}, the unit-count saving of the kernel
store is $|S_O|/|A|=1/\PAu$ (uniform reading of the kernel mass). CPUNeSy eligibility
should require the saving to materially exceed the fixed cost of running
the deductive engine and the bridge (the certification surcharge priced
into $\alpha$, Proposition~\ref{prop:scheduling}). The internal protocol specifies
\[
\tau \;=\; 0.5,
\qquad\text{i.e.\ require}\qquad
|S_O|/|A|\;\ge\;2,
\]
so that an eligible core at least halves the unit count; domains in the
band $1-P_A\in[\tau/2,\tau)$ are declared \emph{hybrid-eligible} and
served with the graceful-degradation protocol
(the main-text hybrid band). A sensitivity analysis over
$\tau\in\{0.3,0.5,0.7\}$ is reported with the results.

\subsection{Threats to Validity}
\label{subsec:threats}
\begin{itemize}
  \item \emph{Order sensitivity.} The scan kernel depends on $\prec$
        unless $A=\Ess(S_O)$; Algorithm~\ref{alg:kernel} reports
        $|A\mathbin{\triangle}\Ess(S_O)|$ so that C3 can be interpreted together with its sensitivity to the deletion order.
  \item \emph{Frequency surrogates / formalization bias.} Both
        uniform and weighted readings are reported; LLM-assisted
        formalization duplicates are removed by the gate's redundancy
        branch itself, and pre-gate statistics are reported.
  \item \emph{Fragment coverage.} Excluded open-textured provisions
        are counted; sub-floor coverage is reported as failing C1.
\end{itemize}

\subsection{Bounded-Step Rederivation and Parent-Designation Sensitivity}
\label{subsec:kstep}
We test two sources of sensitivity in the measurements above. Multi-step derivation may reveal redundancy missed by a single-step check. The designated-parent convention in S5 may also change the depth statistics used to set $\delta$.

\paragraph{Bounded $k$-step C3, $k\in\{1,2,3,4\}$.}
On the legal corpus the check is exact: for each rule unit $r$,
instantiate its body with fresh Skolem constants as temporary facts and
run ${\le}k$ semi-naive forward-chaining rounds over the remaining
units inside the same corpus-order deletion scan as Algorithm~\ref{alg:kernel}. A rule $r$ is $k$-redundant iff its head instance is rederived. If the number of derived facts exceeds the computation cap, the rule is retained, giving a conservative redundancy estimate. On the AFP corpus the certified match
edges (ALIAS ${=}\,\alpha$-equivalence; INST ${=}$ one-step
instantiation) compose transitively, so a $k$-hop path to a surviving
unit is a legitimate $k$-step derivation witness, and the scan deletes
a unit iff such a witness exists within $k$ hops.
Table~\ref{tab:kstep} reports both domains.
\begin{table}[!htbp]
\centering
\caption{Bounded $k$-step rederivation C3 (uniform $1-\PAu$, Wilson
$95\%$). $k{=}1$ reproduces the single-step measurements. Redundancy saturates at $k{=}2$ in both domains and remains below $\tau=0.5$ through $k{=}4$.}
\label{tab:kstep}
\footnotesize
\begin{tabular}{@{}lcccc@{}}
\toprule
Domain ($n$) & $k{=}1$ & $k{=}2$ & $k{=}3$ & $k{=}4$ \\
\midrule
AFP subset ($14{,}414$) & $19.97$ $[19.33,20.63]$ &
$37.13$ $[36.35,37.92]$ & $37.13$ & $37.13$ \\
Legal merged ($503$) & $0.60$ $[0.20,1.74]$ & $0.60$ & $0.60$ & $0.60$ \\
\bottomrule
\end{tabular}
\end{table}
The legal corpus shows zero additional redundancy through $k{=}4$, consistent with its short chains (maximum depth $2$). AFP redundancy nearly doubles at $k{=}2$ because the scan can compose matching edges, for example to remove an instance of an alias. The saturated rate remains $12.9$ points below $\tau$. These results establish the C3 outcome for the tested depth limits.

\paragraph{Designated-parent sensitivity (S5).}
The depth assignment of Appendix~\ref{subsec:pipeline} charges
$\Dd(i)=1+\max_p\Dd(p)$ over premise depths; S5's accounting follows one designated premise edge. To test whether the convention
matters, we re-drew the designated parent uniformly at random on every
multi-premise unit ($18/503=3.6\%$ legal; $2{,}497/14{,}414=17.3\%$
AFP) and recomputed depths over $2{,}000$ resamples. The $95$th uniform
percentile of depth is invariant: constant $1$ (legal) and constant $2$
(AFP), standard deviation $0.0$ in both domains; the max-parent rule is
the conservative upper envelope (AFP $3$ vs.\ $2$). The depth
statistics behind $\delta$ and behind C4's long-tail finding are
therefore insensitive to S5's parent designation.

\section{Full Experimental Details}
\label{app:exp}
The experiments measure corpus redundancy, derivation depth, grounding errors, and task performance against neural and hybrid baselines. Table~\ref{tab:disposition} summarizes the results for each hypothesis in the internal protocol and identifies measurements still needed.

\begin{table}[!htbp]
\centering
\caption{Outcomes for the internally recorded hypotheses. The protocol has no independently verified preregistration timestamp. Outstanding measurements are listed with the results.}
\label{tab:disposition}
\scriptsize
\begin{tabularx}{\columnwidth}{@{}l>{\raggedright\arraybackslash\hsize=0.85\hsize}X>{\raggedright\arraybackslash\hsize=1.15\hsize}X@{}}
\toprule
 & Hypothesis & Observed outcome \\
\midrule
H1 & storage $\le\PAu\times$ fact-inflated at equal accuracy &
Observed candidate-pool ratio: leaf-$\Omega$ uses $48.9\%$ of full-pool units at equal
$97.45\%$ served accuracy; physical storage not measured \\
H2 & $\fc$-scheduled serves cheaper than unrestricted certified at
equal served accuracy & $-43\%$ tokens under candidate restriction; scheduler-specific crossover remains unmeasured \\
H3 & certified coverage $\ge$ L-B lower bound & Not empirically instantiated: logged outcome rates differ from the step-level channel rates; independence remains unverified \\
H4 & erasure curves match $(1-\varepsilon)^{\ell}$ within Wilson CIs &
Supported by Monte Carlo simulation \\
H5 & zero reasoning shortcuts by construction & Structural admission replay rejects all 120 injected target writes; semantic leaf errors remain possible \\
H6 & degradation rate $\le$ C1 floor; no silent uncertified answers &
Explicit refusal/status recorded; ContractNLI neutral and fallback outputs are uncertified \\
H7 & kernel store smaller than retrieval index by $1/\PAu$ &
Corpus redundancy measured; matched index-byte comparison deferred \\
H8 & substrate parity at lower memory on M1--M2 & Equal observed served accuracy with a smaller candidate pool; kernel-store byte accounting remains separate \\
H9 & in-fragment accuracy $\ge$ pure/RAG baselines with zero silent
answers (L1) & Supported
(Table~\ref{tab:rq2-ablation}) \\
H10 & every certified answer carries a derivation $\le\delta+\drec$ &
Formal bounded-witness contract; complete per-query certificate archival was not instrumented in every adapter \\
P-M1 & statutory layer stays at $0.60\%$ in mixed corpus &
\textbf{held} \\
P-M2 & $\ge10$-point logical margin corpus-wide & \textbf{falsified as
stated} ($2.2$ points); explained by the composition law
(Corollary~\ref{cor:composition}), which predicts the measured value
within half a point \\
P-M3 & instance/weakening dominates the redundant mass & \textbf{held}
($91.7\%$) \\
\bottomrule
\end{tabularx}
\end{table}

\subsection{LD-MH: Replication on the Real mathlib4 DAG}
\label{subsec:ldmh}
We applied the M-HARD generator to the mathlib4 dependency DAG extracted by the LeanDojo tracer. The generator uses the same depth strata $2$--$6$, near-miss negatives, and twin-extended closure procedure for gold labels. The neural baselines remain near chance, with different errors on positive and negative queries. The certified channel answers nearly all queries with high accuracy. Table~\ref{tab:ldmh} reports the results on this second proof library.

\subsection{External Validity on LeanDojo Benchmark 4}
\label{subsec:leandojo}
We also evaluate on \emph{LeanDojo Benchmark~4} \cite{leandojo}, the official NeurIPS~2023 benchmark extracted from mathlib4 (commit \texttt{29dcec07}). We measure the four applicability criteria, premise selection, and multi-hop closure on this independently constructed corpus.

\textbf{Four criteria on the real corpus.} C1, C2, and C4 hold
directly ($120{,}509$ theorems, $180{,}907$ defined premises;
$63.3\%$ of annotated theorems multi-hop; depth long-tailed
$p_{50}{=}2$, $p_{90}{=}9$, $p_{99}{=}27$). C3 remains unresolved because each traced theorem records only one derivation. The observed $36.5\%$ premise usage measures which premises appear in those traces. Establishing redundancy would require testing whether omitted premises can be derived from the retained set. The AFP and regulatory audits perform such derivability checks.

\textbf{LD-PS: premise selection.} On $300$ theorems sampled from the
official \texttt{random/test} split (gold = premises of the traced
proof; candidate pools of $15$), the reference grounder scores:
pure recall-from-memory
$1.3\%$ exact; BM25-pool RAG $11.3\%$ exact (precision $85.6\%$);
CPUNeSy (kernel-restricted pool, whitelist certificate, 2-vote
abstention) $11.2\%$ exact \emph{on $83\%$ coverage} with
\emph{$98.9\%$ precision} (Table~\ref{tab:ldps}). On the two weaker grounders, CPUNeSy raises answered exact-set accuracy (DeepSeek $17.0\%$ vs.\ RAG $8.7\%$; Qwen $19.9\%$ vs.\ $12.0\%$), at $39.3/57.0\%$ coverage. Full-pool exact-set accuracy is $6.7/11.3\%$, with gains concentrated in the answered subset. The reference run likewise has $9.3\%$ full-pool exact-set accuracy versus BM25's $11.3\%$. Gold premises come from one traced proof among potentially many valid proofs; this limits the interpretation of exact-set matching. The models also select few premises, averaging $0.46$--$0.51$ against a gold mean of $2.76$. This pattern indicates frequent omission of needed premises.

\textbf{LD-MH: closure judgment on the real DAG.} The Seed 2.0 Lite
cross-vendor run uses the same $200$ depth-stratified tasks ($99$ pos
/$101$ neg), hard distractors, and deterministic DAG-closure scorer as
the synthetic protocol. Pure CoT obtains $50.0\%$ and RAG $51.0\%$;
one-vote grounding reaches $94.0\%$, while the two-vote certified
channel obtains $93.9\%$ on its $98.5\%$ served subset and $92.5\%$
full-pool accuracy (Table~\ref{tab:ldmh}). Thus the certificate
abstains on three tasks and preserves a $41.5$-point full-pool
advantage over the better neural control. This run tests the same closure task with a different grounding model from the primary K3 configuration.

LD-MH measures derivability over recorded dependency graphs. At $200$/$300$ tasks, the Wilson half-widths are $\pm 3$--$4$ points. The measured execution advantage transfers to a second proof graph under the same task generator. End-to-end Lean proof search would require a separate evaluation.

\begin{table}[!htbp]
\centering
\caption{LD-PS ($n=300$): full-pool and answered exact-set accuracy, premise precision/recall, and coverage. Abstentions count as errors in the full-pool column. These separate selection quality from the subset selected for answering.}
\label{tab:ldps}
\footnotesize
\begin{tabularx}{\columnwidth}{@{}Xlccccc@{}}
\toprule
Strategy & model & exact (all) & exact (served) & precision & recall & coverage \\
\midrule
pure (recall from memory) & reference & $1.3\%$ & $1.3\%$ & $70.0\%$ & $3.4\%$ & $100\%$ \\
rag (BM25, full corpus) & reference & $11.3\%$ & $11.3\%$ & $85.6\%$ & $14.4\%$ & $100\%$ \\
CPUNeSy (kernel pool + cert. + 2-vote) & reference & $9.3\%$ & $11.2\%$ & $\mathbf{98.9\%}$ & $13.5\%$ & $83.0\%$ \\
\midrule
pure (recall from memory) & DeepSeek & $2.0\%$ & $2.0\%$ & $9.0\%$ & $14.8\%$ & $100\%$ \\
rag & DeepSeek & $8.7\%$ & $8.7\%$ & $45.1\%$ & $22.0\%$ & $100\%$ \\
CPUNeSy & DeepSeek & $6.7\%$ & $\mathbf{17.0\%}$ & $68.0\%$ & $31.4\%$ & $39.3\%$ \\
\midrule
pure & Qwen2.5-72B & $1.3\%$ & $1.3\%$ & $7.1\%$ & $8.9\%$ & $100\%$ \\
rag & Qwen2.5-72B & $12.0\%$ & $12.0\%$ & $48.6\%$ & $34.0\%$ & $100\%$ \\
CPUNeSy & Qwen2.5-72B & $11.3\%$ & $\mathbf{19.9\%}$ & $59.8\%$ & $43.8\%$ & $57.0\%$ \\
\bottomrule
\end{tabularx}
\\[2pt]
{\scriptsize Models select $0.46$--$0.51$ premises on average, compared with a gold mean of $2.76$.}
\end{table}

\textbf{Learned-retriever control (ReProver).}
We evaluate LD-PS candidate pools using the official pretrained ReProver retriever
(byt5-small encoder, trained by its authors on this benchmark's train
split \cite{leandojo}), run locally with no task-specific training.
We use two retrieval protocols. \emph{(i) Native tactic-state protocol}: over $2{,}641$
traced test tactic states against the full $180{,}907$-premise corpus,
mean-pooled encoder embeddings and cosine ranking give per-premise
recall R@$1$ $=8.3\%$, R@$5$ $=19.3\%$, R@$10$ $=26.0\%$,
R@$100$ $=48.1\%$, and full gold-set coverage FC@$10$ $=18.0\%$.
\emph{(ii) LD-PS pool protocol}: a ReProver top-$15$ pool covers all
gold premises for $9.0\%$ of the $300$ anchor tasks when queried by the
theorem statement (the same input the LLM arms saw; recall@$15$
$=18.5\%$), and $17.7\%$ when queried by the first tactic state, which provides information unavailable to the LLM arms (recall@$15$ $=31.5\%$).
The BM25 and kernel-restricted pools both stand at $76.0\%$ full
coverage and $89.3\%$ recall@$15$. BM25 and the kernel-restricted pool have higher coverage than ReProver in this candidate-pool evaluation. Their equal coverage makes BM25 the stronger retrieval control here. A storage comparison would also require measuring matched index sizes. ReProver normally ranks premises throughout proof search, whereas our tactic-state protocol evaluates only the first state. Larger or fine-tuned retrievers and complete proof-search runs remain to be tested.

\begin{table}[!htbp]
\centering
\caption{LD-MH (real mathlib4 DAG mirror of M-HARD, $n{=}200$).
Seed 2.0 Lite is a completed cross-vendor control.
Cells report full-pool accuracy except the coverage column; the
Seed two-vote certified channel is $93.9\%$ accurate on its served subset.}
\label{tab:ldmh}
\footnotesize
\begin{tabularx}{\columnwidth}{@{}Xccccc@{}}
\toprule
 & pure & rag & 1-vote & CPUNeSy & cov. \\
\midrule
M-HARD (synthetic AFP, K3) & $53.5$ & $55.0$ & $\mathbf{96.5}$ & $95.5$ & $98.0$ \\
LD-MH (real mathlib4, Seed 2.0 Lite) & $50.0$ & $51.0$ & $\mathbf{94.0}$ & $92.5$ & $98.5$ \\
LD-MH (real mathlib4, DeepSeek) & $51.5$ & $50.0$ & $\mathbf{83.5}$ & $79.0$ & $89.5$ \\
LD-MH (real mathlib4, Qwen2.5-72B) & $52.0$ & $52.5$ & $\mathbf{77.5}$ & $71.5$ & $91.5$ \\
\bottomrule
\end{tabularx}
\\[2pt]
{\scriptsize The 1-vote column is the deterministic-grounding ablation; CPUNeSy is the two-vote policy. High-budget M-HARD CoT and RAG score $60.5\%$ ($121/200$) and $59.5\%$ ($119/200$), respectively, so the margin over the stronger high-budget control is $35.0$ points. The paired counts appear in Table~\ref{tab:paired-discordants}. DeepSeek one-vote and coverage cells use the same frozen run (Appendix~\ref{app:gate-audit}).}
\end{table}

\subsection{MedCalc-Bench Verified: Protocol Notes}
\label{subsec:medcalc-protocol}
The pure baseline uses the official zero-shot CoT protocol. The open-book baseline adds calculator formulas to the context. CPUNeSy uses the model named in each column to extract entities, executes the verified reference implementations, and abstains on extraction disagreement (one vote) or disagreement between interfaces (two votes). The fallback cascade routes abstentions to the same model's open-book answer. All $55$ reference implementations reproduce $100\%$ of gold answers from gold entities, so residual errors come from the extracted inputs. Table~\ref{tab:rq1-ledger} reports the main accuracy table for Seed
and DeepSeek, with native two-vote full-pool accuracies of $72.5\%$
and $63.5\%$, respectively. For Seed 2.0 Lite, the
one-vote ablation serves $969/1{,}100$ items at $89.37\%$ accuracy
($78.73\%$ full-pool); the two-vote channel serves $858/1{,}100$ at
$92.89\%$ accuracy ($72.45\%$ full-pool, rounded to $72.5\%$ in the
table). The updated DeepSeek result is reported at full-pool level;
its two-vote channel serves $733/1{,}100$ items, with $698$ correct,
for $95.23\%$ answered accuracy, $66.64\%$ coverage, and $63.45\%$
full-pool accuracy (rounded to $63.5\%$ in the table). The recorded
result file retains item-level records across all $55$ calculators for
stratified analysis.

\paragraph{Supplementary GPT-5.2 run.}
A supplementary run used GPT-5.2 through an aggregator API. With no fallback,
its two-vote channel reaches $91.16\%$ answered accuracy at $72.0\%$
coverage ($65.64\%$ full-pool). The Seed--DeepSeek comparison uses the model profiles in Table~\ref{tab:model-identities}.

\subsection{Fairness and Robustness Controls (Full)}
\label{app:fairness}
The internal protocol records hypotheses H1--H10 and P-M1--P-M3 with falsification thresholds (Table~\ref{tab:disposition}). The task generators are machine-checked, and Appendix~\ref{app:gate-audit} describes protocol provenance.

A weaker grounding model retains a ${\sim}36$-point advantage on the compositional stratum against frontier baselines using ${\le}3\times$ budget. Regenerating each suite with $3$ seeds preserves the reported performance gaps. The gaps also persist under query paraphrases and across Jaccard-ranked top-1, top-5, and random distractor pools.

We report Wilson $95\%$ intervals for rates and exact McNemar tests for paired main comparisons. The uncertified-solver comparison gives a two-sided $p{=}0.34$ ($b{=}3$, $c{=}7$).

\subsection{Observed Error Counts and Channel-Model Scope}
\label{subsec:channel}
The recorded error counts characterize observed outcomes at several levels. For L1-HARD and LD-MH, respectively, the logs report $20/1{,}283$ and $34/1{,}853$ grounding-item failures; $0/457$ and $15/301$ wrong verdicts conditional on clean grounding; $20/477$ and $27/315$ incorrect accepted traces; and $9/235$ and $11/192$ wrong agreed outputs in the broad-interface two-vote runs. These fractions have different denominators and describe different events. The $20$ L1-HARD trace errors include assertions of the queried conclusion as a premise. Separately, the restricted leaf-interface policy yields no observed wrong answers among $233$ served cases; changing interfaces also changes the proposals supplied to voting (Appendix~\ref{app:gate-audit}). The reported LD-MH errors conditional on clean grounding concentrate on negatives with intermediate-node routes around the broken leaf in the recorded dependency graph.

These outcome rates do not directly estimate the parameters of Proposition~\ref{prop:coverage}. In Definition~\ref{def:bridge-channel}, $\mu$ bounds loss or corruption of an intended derivation step, and $\eta$ bounds rejection of a \emph{valid} step. An incorrect accepted trace is a different event from a valid step being rejected. Likewise, an aggregate grounding-item failure fraction is not by itself a uniform upper bound on each exposed premise's formalization failure. We therefore report the counts as descriptive error diagnostics and do not substitute them into the coverage bound. Empirical instantiation would require annotations of the corresponding step-level events, appropriate probability bounds, and assessment of the stated independence assumptions. The proposition remains a conditional guarantee.

\section{Admission Audit and Reproducible Offline Analysis}
\label{app:gate-audit}

\subsection{Source checking and fixed-trace gate effect}
ContractNLI uses a learned rechecker on the source document and each contributing fact. A fact must receive \texttt{holds}; otherwise the vote abstains. Conflicting closure results also abstain. Two agreeing entailment/contradiction votes set the certificate flag; two neutral votes produce the separate \texttt{neutral\_uncertified} status. The main native-policy row includes this explicitly uncertified neutral status, whereas strict certificate-only statistics require the flag.

The gate ablation compares raw two-vote agreement with the stored output after rechecking, using the same grounding traces. Seed changes $(C,W,A)$ from $(858,128,51)$ to $(844,120,73)$; DeepSeek changes from $(1533,252,306)$ to $(1482,189,420)$. Rechecking withholds answers while leaving all other outputs unchanged. The counts measure which correct and wrong answers are withheld. Estimating premise-level false acceptance and rejection would require independent source-support annotations, which these records lack.

MedCalc parses calculator-specific fields, normalizes units and values, requires agreement, and calls the deterministic calculator. The recorded one-vote and native two-vote policies give $(C,W,A)=(866,103,131)\to(797,61,242)$ for Seed and $(824,88,188)\to(698,35,367)$ for DeepSeek. This comparison includes the agreement requirement and is distinct from the ContractNLI source-recheck ablation. The agreed, schema-valid attributes can still misrepresent the clinical note. The risk estimates include the resulting wrong answers.

\subsection{L1-HARD replay and structural admission}
The historical leaf interface specifies the allowed vocabulary in the prompt. Its executor accepts any parsed predicate. We therefore provide a separate explicit admission replay that rejects predicates outside the rule-derived leaf vocabulary. All parsed proposals in the recorded restricted runs use leaf predicates. The offline structural stress test uses generator facts restricted to allowed leaves as clean controls, then adds the target atom to each negative item. The barrier rejects $120/120$ injected target writes in this constructed test. The raw solver accepts every injected target by construction. Generator facts supply the clean replay controls and scoring labels. The historical model-serving runs obtain their proposed facts from the model.

The broad two-vote joint outcomes are $226$ both correct, $9$ both wrong, $2$ correct/wrong, and $3$ pairs with one parsing failure. The leaf two-vote outcomes are $233$ both correct, $2$ correct/wrong, $2$ wrong/correct, and $3$ pairs with one parsing failure. Thus the historical $20/477$ single-trace errors belong to the broad interface, while the $0/233$ served-error observation belongs to the leaf interface. Comparing these runs changes both the vocabulary and the proposals supplied to the voting rule. Against exact generator source-fact strings, the $1{,}804$ distinct-per-trace leaf proposals contain $1{,}732$ matches and $72$ mismatches. These are exact string matches. Assessing whether the mismatches are semantically supported would require human review of the source.

\subsection{Statistics, latency, and calibration}
All policies share their task denominator; parse failures count as abstentions. The L1 accuracy-difference interval uses $20{,}000$ paired item bootstrap resamples with seed $20260926$. The $0/233$ upper error bound is $1-0.05^{1/233}$ under an item-binomial sampling model. The comparisons are exploratory and report counts and confidence intervals. Table~\ref{tab:paired-discordants} gives Holm-adjusted tests for the six paired comparisons. Two traces are insufficient to estimate the theorem's latent $\bans$ and $\slipc$ separately.

A prospective deployment protocol would freeze a disjoint calibration set with source annotations, label both-wrong, discordant, and invalid traces, measure accepted-answer risk and coverage, choose $c_W,c_A,c_T$ before testing, then freeze the write vocabulary and serving rule on a separate test set. Our experiments report the recorded joint outcomes for retrospective analysis. Testing the proposed deployment rule on held-out data remains future work. We calculate latency by summing logged call durations. The logs lack complete measurements of deployment formalization, expert review, concurrency, and monetary cost.

\subsection{Protocol provenance}
The artifact contains a dated internal mixed-corpus protocol and a record of its hypotheses. There is no independent timestamp establishing preregistration. Scripts and input hashes make the present analyses reproducible. Table~\ref{tab:disposition} reports the outcome of each hypothesis and the measurements still needed.

\subsection{Supplementary gate effect and cost analysis}
\begin{figure}[!htbp]
\centering
\includegraphics[width=\linewidth]{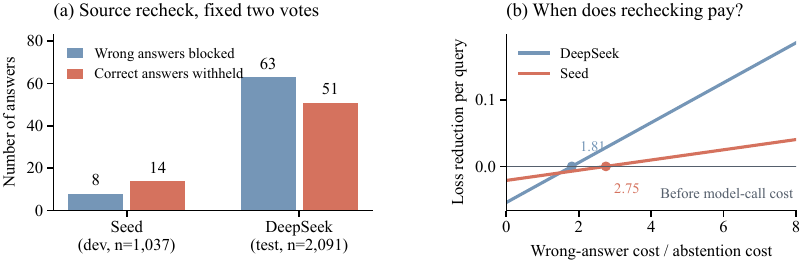}
\caption{Additional fixed-trace source-recheck analysis on ContractNLI. The left panel counts correct and wrong answers withheld. The right panel uses these counts to calculate loss reduction at different relative error costs.}
\label{fig:gate-specific}
\end{figure}
For policies on the same $N$ items, let $c_W,c_A,c_T$ denote the respective costs of a wrong answer, refusal, and token. The loss difference is
\begin{equation}
\Delta L=(\Delta W c_W+\Delta A c_A+\Delta T c_T)/N.
\label{eq:decision-cost}
\end{equation}
Before compute cost, ContractNLI source rechecking improves this loss when $c_W/c_A>114/63=1.81$ for DeepSeek and $>22/8=2.75$ for Seed. Restricted L1-HARD two-call serving is preferred when $2c_W>6c_A+145{,}422c_T$ across $240$ items. The paired accuracy-difference interval is $[-4.17,0.83]$ percentage points (two-call minus single-call), which does not establish equivalence. Median sums of logged call durations are $6.97$ versus $13.88$ seconds. These measurements assume sequential calls.

\subsection{Selective operating points and paired tests}
\label{app:selective-points}
Table~\ref{tab:selective-points} reports discrete serving policies, with wrong answers divided by served answers for risk and served answers divided by the common pool for coverage. Each row is a discrete policy. Estimating a risk--coverage curve or its area would require a confidence-threshold sweep. Wilson intervals use the item-level sampling model.
\begin{table}[!htbp]\centering\scriptsize
\caption{Matched error--coverage accounting from frozen records. C/W/A denote correct/wrong/abstained counts. Risk and coverage are percentages; brackets are 95\% Wilson intervals for answered risk.}
\label{tab:selective-points}
\begin{tabular}{@{}llrrrrr@{}}\toprule
Data / model & Policy & C & W & A & Coverage & Risk [95\% CI]\\\midrule
ContractNLI Seed & Agreement only & 858 & 128 & 51 & 95.1 & 13.0 [11.0, 15.2]\\
 & Native policy & 844 & 120 & 73 & 93.0 & 12.4 [10.5, 14.7]\\
 & Certificate flag only & 513 & 75 & 449 & 56.7 & 12.8 [10.3, 15.7]\\
 & + fallback & 880 & 157 & 0 & 100.0 & 15.1 [13.1, 17.4]\\
\addlinespace[2pt]
ContractNLI DeepSeek & Agreement only & 1533 & 252 & 306 & 85.4 & 14.1 [12.6, 15.8]\\
 & Native policy & 1482 & 189 & 420 & 79.9 & 11.3 [9.9, 12.9]\\
 & Certificate flag only & 872 & 119 & 1100 & 47.4 & 12.0 [10.1, 14.2]\\
 & + fallback & 1742 & 349 & 0 & 100.0 & 16.7 [15.2, 18.3]\\
\addlinespace[2pt]
MedCalc Seed & One vote & 866 & 103 & 131 & 88.1 & 10.6 [8.8, 12.7]\\
 & Native policy & 797 & 61 & 242 & 78.0 & 7.1 [5.6, 9.0]\\
 & + fallback & 991 & 109 & 0 & 100.0 & 9.9 [8.3, 11.8]\\
\addlinespace[2pt]
MedCalc DeepSeek & One vote & 824 & 88 & 188 & 82.9 & 9.6 [7.9, 11.7]\\
 & Native policy & 698 & 35 & 367 & 66.6 & 4.8 [3.5, 6.6]\\
 & + fallback & 994 & 106 & 0 & 100.0 & 9.6 [8.0, 11.5]\\
\addlinespace[2pt]
LD-MH Seed & One vote & 188 & 12 & 0 & 100.0 & 6.0 [3.5, 10.2]\\
 & Native policy & 185 & 12 & 3 & 98.5 & 6.1 [3.5, 10.3]\\
 & + fallback & 185 & 15 & 0 & 100.0 & 7.5 [4.6, 12.0]\\
\addlinespace[2pt]
LD-MH DeepSeek & One vote & 167 & 33 & 0 & 100.0 & 16.5 [12.0, 22.3]\\
 & Native policy & 158 & 21 & 21 & 89.5 & 11.7 [7.8, 17.3]\\
 & + fallback & 167 & 33 & 0 & 100.0 & 16.5 [12.0, 22.3]\\
\addlinespace[2pt]
\bottomrule\end{tabular}\end{table}
\begin{table}[!htbp]\centering\small
\caption{Paired full-pool accuracy tests. $b$ counts CPUNeSy-only correct items; $c$ counts baseline-only correct items. Abstentions count as incorrect. Holm correction is applied jointly to these six exploratory comparisons.}
\label{tab:paired-discordants}
\begin{tabular}{@{}lrrrr@{}}\toprule Comparison & $b$ & $c$ & Exact $p$ & Holm $p$\\\midrule
M-HARD vs pure & 92 & 8 & $3.21\times10^{-19}$ & $1.60\times10^{-18}$\\
M-HARD vs rag & 87 & 6 & $1.65\times10^{-19}$ & $9.90\times10^{-19}$\\
M-HARD vs pure\_hi & 78 & 8 & $1.52\times10^{-15}$ & $4.57\times10^{-15}$\\
M-HARD vs rag\_hi & 77 & 5 & $1.20\times10^{-17}$ & $4.82\times10^{-17}$\\
L1-HARD vs pure & 36 & 6 & $2.83\times10^{-6}$ & $5.66\times10^{-6}$\\
L1-HARD vs rag & 28 & 6 & $1.95\times10^{-4}$ & $1.95\times10^{-4}$\\
\bottomrule\end{tabular}\end{table}

\subsection{Reproducible artifact and recorded model identities}
The accompanying offline audit package contains the frozen input records, source-file SHA-256 hashes, scripts, and generated counts. Its README gives commands that require no API credentials or network calls. Table~\ref{tab:model-identities} lists the available model identifiers. The caches lack consistent records of sampling dates and immutable model revisions.
\begin{table}[!htbp]\centering\footnotesize
\caption{Recorded model names and their scope. Primary public columns use the profiles identified here; historical adapter runs are a separate campaign.}
\label{tab:model-identities}
\begin{tabularx}{\linewidth}{@{}lXX@{}}\toprule
Paper label & Recorded identifier & Scope / provider record\\\midrule
K3 / reference & \texttt{k3-agent} & Controlled L1-HARD and M-HARD; model profile configured by experiment environment\\
K2.6 / weaker same-family & \texttt{k2d6-agent} & Same-family control; model profile configured by experiment environment\\
DS (L1 controls) & \texttt{deepseek-chat} & Cross-model L1 study; distinct from public-column Flash\\
QW & \texttt{Qwen/Qwen2.5-72B-Instruct} & Cross-model control accessed through a public API\\
GLM & GLM-5.2 & Cross-model control; provider label in the experiment record\\
Seed (public columns) & Seed 2.0 Lite (profile identifier redacted) & Recorded profile Seed 2.0 Lite\\
DeepSeek (public columns) & \texttt{deepseek-flash} & Official DeepSeek API profile\\
GPT-5.2 & GPT-5.2 aggregator run & Supplementary MedCalc only\\\bottomrule
\end{tabularx}\end{table}
Two-vote majority with abstention on ties is identical to the binary agreement rule on the same valid traces. The fixed-trace ablation therefore measures the additional effect of source rechecking after this voting rule. The archive preserves all policy counts, including incorrect accepted answers and explicitly uncertified neutral/fallback statuses.

\subsection{Process-figure provenance}
\label{app:figure-provenance}
Figures~\ref{fig:reasoning-state-trajectories}--\ref{fig:grounding-geometry} use K3 for L1-HARD and LD-MH, and Seed for ContractNLI and MedCalc. The LD-MH geometry panel shows the $181/19$ split from K3, while the public accuracy table reports the $185/15$ split from Seed. The plotted features summarize grounding, admission, closure, and output logs. Principal component analysis (PCA) and adjusted Rand index (ARI) describe where the recorded protocols diverge. The fixed-trace comparison in Section~\ref{subsec:rq2} measures the effect of source rechecking.

\subsection{Historical Task-Adapter Results}
\label{app:historical-adapters}
Table~\ref{tab:historical-adapters} preserves the historical task-adapter summaries. These runs use other splits and model profiles than the primary matched comparisons. Their per-row split identifiers, model profiles, and sample counts are not specified in the available summaries, so the archived column labels do not establish matching to Table~\ref{tab:rq1-ledger}. We do not use these scores to rank the primary CPUNeSy policies against the original systems. Near-zero scores also require a failure breakdown: the summaries do not separate parsing failures, solver failures, and wrong executable outputs. The aggregate scores alone cannot determine their cause.

\begin{table}[!htbp]
\centering\scriptsize
\caption{Historical task-adapter full-pool accuracy (\%). Column names reproduce archive labels, not matched model/split identities. Exact per-row denominators are unspecified in these summaries; these results are descriptive and excluded from primary comparative claims.}
\label{tab:historical-adapters}
\begin{tabular}{@{}lcccccc@{}}
\toprule
& \multicolumn{2}{c}{ContractNLI} & \multicolumn{2}{c}{MedCalc} & \multicolumn{2}{c}{LD-MH}\\
\cmidrule(lr){2-3}\cmidrule(lr){4-5}\cmidrule(l){6-7}
Adapter & Seed & DeepSeek & Seed & DeepSeek & Seed & DeepSeek\\
\midrule
Logic-LM \citep{logiclm} & $0.0$ & $0.0$
         & $75.5$ & $77.5$
         & $50.5$ & $46.5$ \\
DetermLR \citep{sun2024determlr} & $48.4$ & $70.0$
         & $67.5$ & $61.5$
         & $50.5$ & $49.5$ \\
LINC \citep{olausson2023linc} & $16.0$ & $0.2$
     & $66.5$ & $44.7$
     & $49.5$ & $51.0$ \\
SymbCoT \citep{xu2024symbcot} & $56.8$ & $16.5$
        & $66.5$ & $75.1$
        & $49.5$ & $49.5$ \\
LINA \citep{li2024lina} & $48.5$ & $10.8$
     & $67.3$ & $69.3$
     & $49.5$ & $51.0$ \\
HBLR \citep{li2025hblr} & $54.7$ & $10.4$
     & $66.3$ & $62.8$
     & $49.0$ & $50.5$ \\
MenTaL \citep{mental} & $1.8$ & $0.0$
       & $75.0$ & $76.2$
       & $47.5$ & $50.0$ \\
\bottomrule
\end{tabular}
\end{table}
\begingroup
\setlength{\intextsep}{8pt plus 1pt minus 1pt}
\section{Supplementary Visualizations}
\label{app:visuals}
Figures~\ref{fig:pareto}, \ref{fig:crossover}, and~\ref{fig:deltavisual} illustrate the theoretical models with schematic curves. The other panels report stress-test results or Monte Carlo erasure trials.

\begin{figure}[H]
\centering
\includegraphics[width=.80\linewidth]{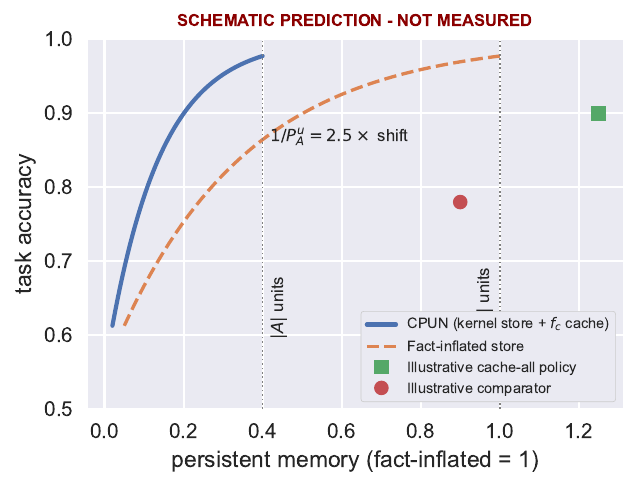}
\caption{Illustrative memory--accuracy prediction under the kernel-storage
model. The curves and reference points are schematic. The legacy legend label CPUN denotes CPUNeSy.}
\label{fig:pareto}
\end{figure}

\begin{figure}[H]
\centering
\includegraphics[width=\linewidth]{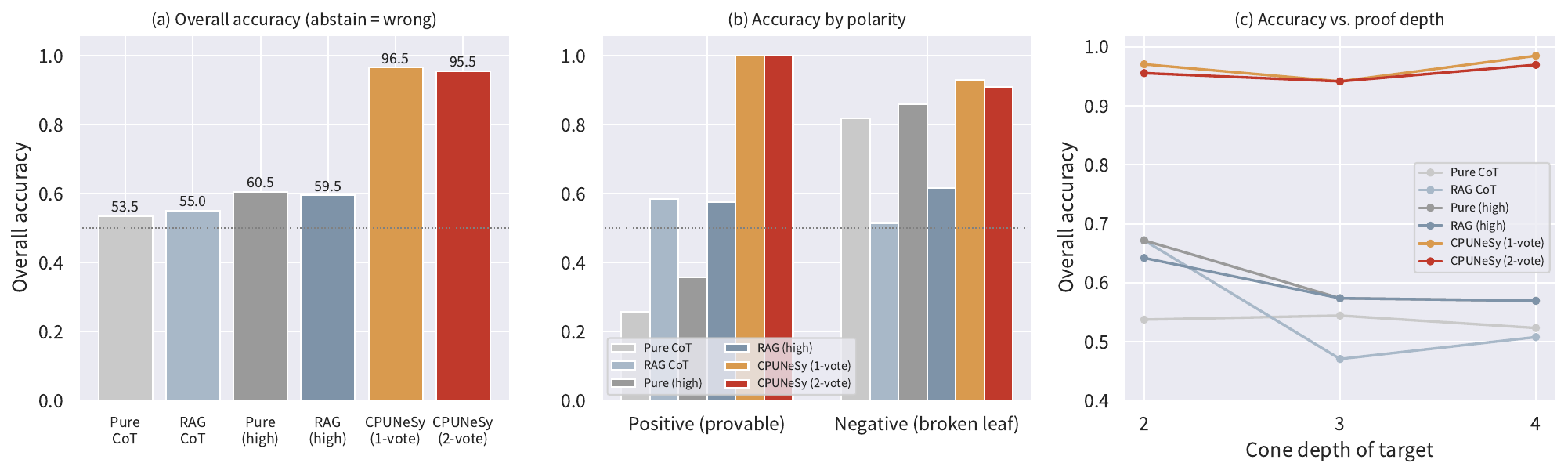}
\caption{M-HARD visual summary. CPUNeSy separates from pure and
retrieval baselines in overall accuracy, removes the polarity asymmetry
on positives versus near-miss negatives, and stays stable as proof depth
increases.}
\label{fig:mhardvisual}
\end{figure}

\begin{figure}[H]
\centering
\includegraphics[width=.96\linewidth]{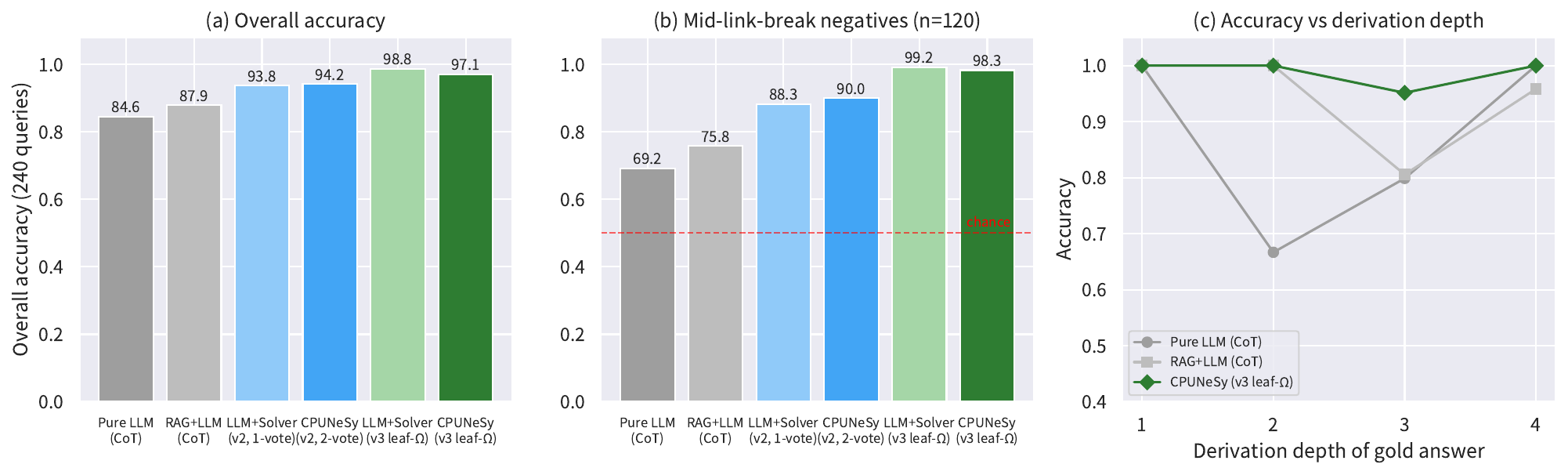}
\caption{L1-HARD visual summary. CPUNeSy's accuracy gain is concentrated
on mid-link-break negatives, where factual access alone is insufficient.
Section~\ref{subsec:rq2} measures each component's contribution through ablation.}
\label{fig:l1hardvisual}
\end{figure}

\begin{figure}[H]
\centering
\includegraphics[width=.55\linewidth]{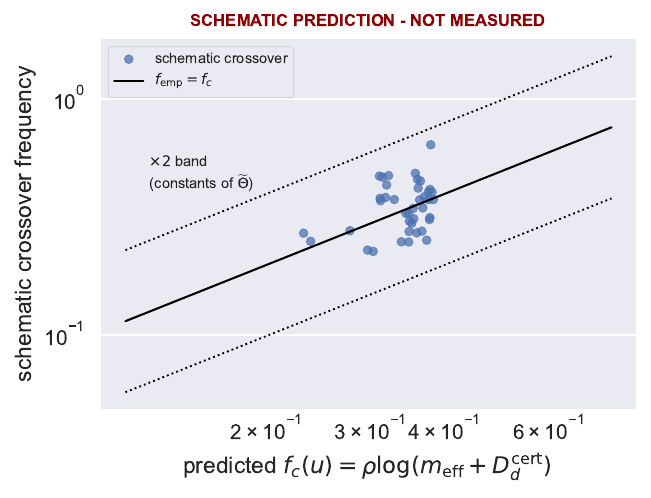}
\caption{Illustrative crossover prediction under the stationary
store--compute model. The schematic points and constant-factor band illustrate the predicted relationship between the caching threshold and crossover frequency.}
\label{fig:crossover}
\end{figure}


\begin{figure}[H]
\centering
\includegraphics[width=.94\linewidth]{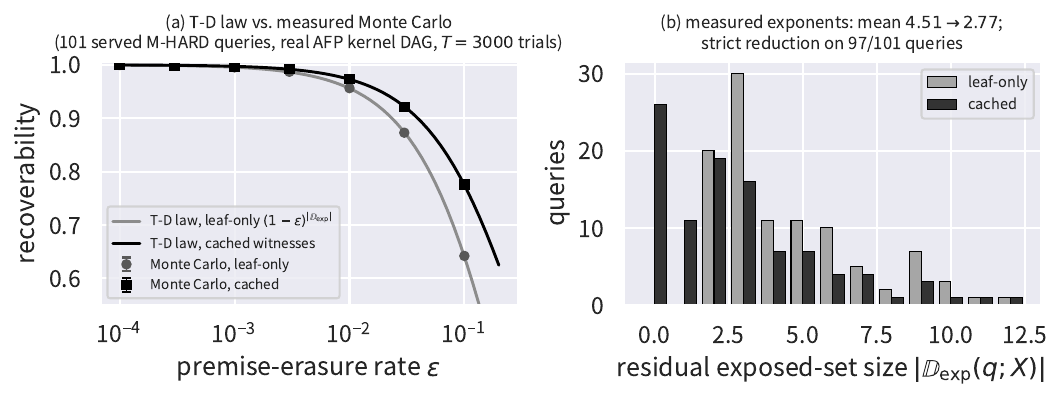}
\caption{Monte Carlo premise-erasure recovery on 101 served M-HARD
queries and their AFP witness DAGs ($3{,}000$ trials). The left panel compares recovery with the residual-leaf law. The right reports exposed-set sizes before and after caching. T-D denotes Corollary~\ref{cor:erasure}. The simulation assumes storage loss with a surviving cache and unchanged source support.}
\label{fig:erasurevisual}
\end{figure}

\begin{figure}[H]
\centering
\includegraphics[width=.55\linewidth]{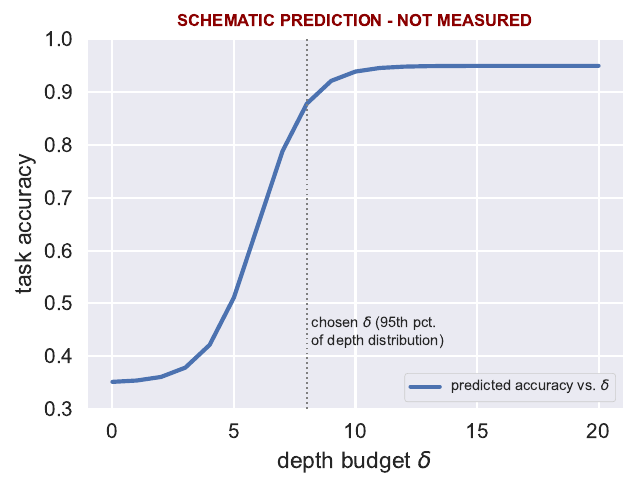}
\caption{Illustrative depth-budget sensitivity prediction. The schematic curve illustrates saturation near a budget chosen from the depth distribution.}
\label{fig:deltavisual}
\end{figure}

\endgroup

\section{Cross-Model Evaluation}
\label{app:boundary}
\subsection{Cross-Model Error Regimes on L1-HARD}
\label{subsec:crossregimes}
Table~\ref{tab:crossckpt} compares five models from four model families using identical prompts, solver, and parsers. The reference, weaker same-family, and GLM runs retain high answered accuracy. DS and QW have more shared errors. GLM produces $2$ pairs that agree on a wrong answer, and the agreement rule filters $15$ disagreements. These joint outcomes provide the retrospective error classifications used in Section~\ref{subsec:rq3}. Full-pool accuracy counts abstentions as errors.

\begin{table}[H]
\centering
\caption{Cross-model replication on L1-HARD ($240$ queries; identical
prompts, solver, and parsers on all models). Reference-grounder numbers
are shown in Table~\ref{tab:rq2-ablation}; weaker same-family, DS, and QW
rows carry Wilson $95\%$
intervals on overall accuracy. \emph{neg} $=$ accuracy on the $120$
mid-link-break negatives. DS (DeepSeek-chat) and QW
(Qwen2.5-72B-Instruct) are different vendors/architecture lineages.}
\label{tab:crossckpt}
\footnotesize
\begin{tabularx}{\columnwidth}{@{}lXrrr@{}}
\toprule
Strategy & model & overall [CI] & neg & tok/q \\
\midrule
Pure CoT   & K3   & $84.6$ & $69.2$ & $611$ \\
           & K2.6 & $82.9$ [$77.6,87.2$] & $74.0$ & $1{,}187$ \\
           & DS   & $73.3$ [$67.4,78.5$] & $46.7$ & $474$ \\
           & QW   & $90.0$ [$85.6,93.2$] & $80.8$ & $459$ \\
           & GLM  & $92.9$ [$89.0,95.5$] & $86.7$ & $1{,}627$ \\
\midrule
RAG+CoT    & K3   & $87.9$ & $75.8$ & $759$ \\
           & K2.6 & $88.3$ [$83.7,91.8$] & $80.0$ & $1{,}196$ \\
           & DS   & $84.6$ [$79.5,88.6$] & $71.7$ & $450$ \\
           & QW   & $87.1$ [$82.2,90.7$] & $75.0$ & $640$ \\
           & GLM  & $92.5$ [$88.5,95.2$] & $85.0$ & $1{,}525$ \\
\midrule
Solver 1-vote & K3 &
        $98.8$ & $99.2$ & $606$ \\
           & K2.6 & $96.3$ [$93.0,98.0$] & $99.2$ & $795$ \\
           & DS   & $83.3$ [$78.1,87.5$] & $82.5$ & $328$ \\
           & QW   & $84.6$ [$79.5,88.6$] & $95.8$ & $351$ \\
           & GLM  & $96.2$ [$93.0,98.0$] & $100$ & $1{,}118$ \\
\midrule
\textbf{CPUNeSy 2-vote} & K3 &
        $97.1$ & $98.3$ & $1{,}212$ \\
           & K2.6 & $\mathbf{94.2}$ [$90.4,96.5$] & $\mathbf{99.2}$ & $1{,}551$ \\
           & DS   & $76.7$ [$70.9,81.6$] & $77.5$ & $661$ \\
           & QW   & $82.1$ [$76.7,86.4$] & $89.2$ & $709$ \\
           & GLM  & $92.9$ [$89.0,95.5$] & $100$ & $2{,}406$ \\
\bottomrule
\multicolumn{5}{@{}p{\columnwidth}@{}}{\scriptsize served accuracy
(coverage): reference $100\%$ ($97.1\%$); weaker same-family $\mathbf{99.6\%}$
[$97.6,99.9$] ($94.6\%$), disjoint from pure $86.5\%$ [$81.4,90.3$]
and RAG $90.2\%$ [$85.7,93.4$]; GLM $\mathbf{99.1\%}$ [$96.8,99.8$]
($93.8\%$), with a confidence interval disjoint from its own pure $92.9\%$ and RAG $92.5\%$ results. GLM has $2$ agree-wrong pairs and filters $15/15$ disagreements; DS $83.6\%$ [$78.2,87.9$] ($91.7\%$)
and QW $88.0\%$ [$83.0,91.6$] ($93.3\%$), on par with their neural
baselines, consistent with the shared-error boundary of
Lemma~\ref{lem:voting-floor}. All grounded rows use the
leaf-$\Omega$ interface. DS $=$ DeepSeek-chat; QW $=$
Qwen2.5-72B-Instruct; GLM $=$ GLM-5.2.}
\end{tabularx}
\end{table}

\end{document}